\definecolor{winered}{rgb}{0.8,0,0}
\definecolor{myblue}{rgb}{0,0,0.8}
\definecolor{mygreen}{rgb}{0.0 0.47 0.44}
\newtheorem{definition}{Definition}
\newtheorem{theorem}{Theorem}
\newtheorem{lemma}{Lemma}
\newtheorem{proposition}{Proposition}
\newtheorem{corollary}{Corollary}
\newtheorem{remark}{Remark}
\DeclarePairedDelimiter{\floor}{\lfloor}{\rfloor}
\DeclareMathOperator*{\argmin}{\arg\!\min}
\begin{document}
%
\title{Distributed State Estimation over Time-Varying Graphs: Exploiting the Age-of-Information} 
%
%
%

\author{Aritra Mitra, John A. Richards, Saurabh Bagchi and Shreyas Sundaram
\thanks{A. Mitra, S. Bagchi and S. Sundaram are with the School of Electrical and Computer Engineering at Purdue University. J. A.  Richards is with Sandia National Laboratories.   Email: {\tt \{mitra14, sbagchi, sundara2\}@purdue.edu},  {\tt{jaricha@sandia.gov}}. This work was supported in part by NSF CAREER award
1653648, and by the Laboratory Directed Research and Development program at Sandia National Laboratories. Sandia National Laboratories is a multimission laboratory managed and operated by National Technology \& Engineering Solutions of Sandia, LLC, a wholly owned subsidiary of Honeywell International Inc., for the U.S. Department of Energy's National Nuclear Security Administration under contract DE-NA0003525. The views expressed in the article do not necessarily represent the views of the U.S. Department of Energy or the United States Government.}}

\maketitle
\begin{abstract}
    We study the problem of designing a distributed observer for an LTI system over a time-varying communication graph. The limited existing work on this topic imposes various restrictions either on the observation model or on the sequence of communication graphs. In contrast, we propose a  single-time-scale distributed observer that works under mild assumptions. Specifically, our communication model only requires strong-connectivity to be preserved over non-overlapping, contiguous intervals that are even allowed to grow unbounded over time. We show that under suitable conditions that bound the growth of such intervals, joint observability is sufficient to track the state of any discrete-time LTI system exponentially fast, at any desired rate. In fact, we also establish finite-time convergence based on our approach. Finally, we develop a variant of our algorithm that is provably robust to worst-case adversarial attacks, provided the sequence of graphs is sufficiently connected over time. The key to our approach is the notion of a  ``freshness-index" that keeps track of the age-of-information being diffused across the network. Such indices enable nodes to reject stale estimates of the state, and, in turn, contribute to stability of the error dynamics. 
\end{abstract}
\section{Introduction} Given a discrete-time LTI system $\mathbf{x}[k+1]=\mathbf{Ax}[k]$, and a linear measurement model $\mathbf{y}[k]=\mathbf{Cx}[k]$, a classical result in control theory states that one can design an observer that generates an asymptotically correct estimate $\hat{\mathbf{x}}[k]$ of the state $\mathbf{x}[k]$, if and only if the pair $(\mathbf{A},\mathbf{C})$ is detectable \cite{chen}. Additionally, if the pair $(\mathbf{A},\mathbf{C})$ is observable, then one can achieve exponential convergence at any desired convergence rate. Over the last couple of decades, significant effort has been directed towards studying the distributed counterpart of the above problem, wherein observations of the process are distributed among a set of sensors modeled as nodes of a communication graph  \cite{dist3,ugrinov,kim,martins,mitraTAC,wang,han,rego,nozal,wang2,ren}. A fundamental question that arises in this context is as follows:  What are the minimal requirements on the measurement structure of the nodes and the underlying communication graph that guarantee the existence of a distributed observer? Here, by a distributed observer, we imply a set of state estimate update and information exchange rules that enable each node to track the entire state asymptotically. The question posed above was answered only recently in \cite{martins,mitraTAC,wang,han,rego,nozal} for static graphs. However, when the underlying network changes with time, very little is known about the distributed state estimation problem - a gap that we intend to bridge in this paper. Our study is motivated by  applications in environmental monitoring tasks using mobile robot teams \cite{env1,env2}, where time-varying communication graphs arise naturally either as a consequence of robot-mobility, or due to packet drops and link failures typical in wireless sensor networks.  


\textbf{Related Work}: The existing literature on the design of  distributed observers can be broadly classified in terms of the assumptions made on the system model, the observation model, and the network structure. Recent works on this topic \cite{martins,mitraTAC,wang,han,rego,nozal} that make minimal system- and graph-theoretic assumptions can be further classified based on a finer set of attributes, as follows. (i) Does the approach require multiple consensus iterations between two consecutive time-steps of the dynamics?\footnote{Such approaches, referred to as two-time-scale approaches, incur high communication cost, and might hence be prohibitive for real-time applications.} (ii) What is the dimension of the estimator maintained by each node? (iii) Is the convergence asymptotic, or in finite-time? (iv) In case the convergence is asymptotic, can the convergence rate be controlled? The techniques proposed in \cite{martins,mitraTAC,wang,han,rego,nozal} operate on a single-time-scale, and those in \cite{mitraTAC,han,rego,nozal} require observers of dimension no more than that of the state of the system. The notion of convergence in each of the papers \cite{martins,mitraTAC,wang,han,rego,nozal} is asymptotic; the ones in \cite{wang,han,nozal} can achieve exponential convergence at \textit{any} desired rate. For dynamic networks, there is much less work. Two recent papers  \cite{wang2} and \cite{mitraAR} provide theoretical guarantees for certain specific classes of time-varying graphs; while the former proposes a two-time-scale approach, the latter develops a single-time-scale algorithm. However, the extent to which such results can be generalized has remained an open question.

\textbf{Contributions}: The main contribution of this paper is the development of a single-time-scale distributed state estimation algorithm in Section \ref{sec:algo} that requires each node to maintain an estimator of dimension equal to that of the state (along with some simple counters), and works under assumptions that are remarkably mild in comparison with those in the existing literature. Specifically, in terms of the observation model, we only require joint observability, i.e., the state is observable w.r.t. the collective observations of the nodes. This assumption is quite standard, and can be relaxed to joint detectability, with appropriate implications for the convergence rate. 

However, the key departure from existing literature comes from the generality of our communication model, introduced formally in Section \ref{sec:probform}. Based on this model, we require strong-connectivity to be preserved over non-overlapping, contiguous intervals that can even grow linearly with time at a certain rate. In other words, we allow the inter-communication intervals between the nodes to potentially grow unbounded. Even under the regime of such sparse communications, we establish that our proposed approach can track the state of any discrete-time LTI system exponentially fast, at \textit{any} desired convergence rate. While all the works on distributed state estimation that we are aware of provide an asymptotic analysis, estimating the state of the system in \textit{finite-time} might be crucial in certain safety-critical applications. To this end, we show that under a suitable selection of the observer gains, one can achieve finite-time convergence based on our approach. To put our results into context, it is instructive to compare them with the work closest to ours, namely \cite{wang2}. In \cite{wang2}, the authors study a continuous-time analog of the problem under consideration, and develop a solution that leverages an elegant connection to the problem of distributed linear-equation solving \cite{mou}. In contrast to our technique, the one in \cite{wang2} is inherently a two-time-scale approach, requires each node to maintain and update auxiliary state estimates, and works under the assumption that the communication graph is strongly-connected at every instant. 

Our work is also related to the vast literature on consensus \cite{jadcons} and distributed optimization \cite{nedic} over time-varying graphs, where one typically assumes  strong-connectivity to be preserved over uniformly bounded intervals - a special case of our communication model. It is important to recognize that, in contrast to these settings, the problem at hand requires tracking potentially unstable external dynamics, making the stability analysis considerably more challenging. In particular, one can no longer directly leverage convergence properties of products of stochastic matrices. From a system-theoretic point of view, the error dynamics under our communication model evolves based on a switched linear system, where certain modes are potentially unstable. Thus, one way to analyze such dynamics is by drawing on techniques from the switched system stability literature \cite{lin}. However, such an analysis can potentially obscure the role played by the network structure. Instead, we directly  exploit the interplay between the structure of the changing graph patterns on the one hand, and the evolution of the estimation error dynamics on the other. Doing so, we provide a comprehensive stability analysis of our estimation algorithm employing simple, self-contained arguments from linear system theory and graph theory. 

Finally, in Section \ref{sec:advtimevar}, we extend our study of distributed state estimation over dynamic graphs to settings where certain nodes are under attack, and can act arbitrarily. For the problem under consideration, accounting for adversarial behavior is highly non-trivial even for static graphs, with only a few recent results that provide any theoretical guarantees \cite{deghat,junsoo,he,mitraAuto}. Accounting for worst-case adversarial behavior over time-varying networks is considerably harder, since the compromised nodes can not only transmit incorrect, inconsistent state estimates, but also lie about the time-stamps of their data. Nonetheless, we develop a novel algorithm to handle such scenarios, and establish its correctness when the graph-sequence is sufficiently connected over time.

The key idea behind our approach is the use of a suitably defined ``freshness-index" that keeps track of the age-of-information being diffused across the network. Loosely speaking, the ``freshness-index"  of a node quantifies the extent to which its estimate of the state is outdated. Exchanging such indices enables a node to assess, in real-time, the quality of the estimate of a neighbor. Accordingly, it can reject estimates that are relatively stale - a fact that contributes to the stability of the error dynamics. We point out that while this is perhaps the first use of the notion of age-of-information (AoI) in a networked control/estimation setting, such a concept has been widely  employed in the study of various queuing-theoretic problems arising in wireless networks \cite{kaul,costa,talak}.\footnote{The notion of age-of-information (AoI) was first introduced in \cite{kaul} as a performance metric to keep track of real-time status updates in a communication system. In a wireless network, it measures the time elapsed since the generation of the packet most recently delivered to the destination. In Section \ref{sec:algo}, we will see how such a concept applies to the present setting.}

To sum up, we propose the first single-time-scale distributed state estimation algorithm that provides both finite-time and exponentially fast convergence guarantees,  under significantly milder assumptions on the time-varying graph sequences than those in the existing literature. In addition, we show how our algorithm can be extended to allow for worst-case adversarial attacks as well. 

A preliminary version of this paper appeared as \cite{ACC19}. Here, we significantly expand upon the content in \cite{ACC19} in two main directions. First, we generalize the results in \cite{ACC19} to allow for unbounded inter-communication intervals. Second, the extension to adversarial scenarios in Section \ref{sec:advtimevar} is a new addition entirely. We also provide full proofs of all our results. 

\section{Problem Formulation and Background}
\label{sec:probform}
\textbf{System and Measurement Model}: We are interested in collaborative state estimation of a discrete-time LTI system of the form:
\begin{equation}
\mathbf{x}[k+1]=\mathbf{A}\mathbf{x}[k],
\label{eqn:system}
\end{equation}
where $k\in\mathbb{N}$ is the discrete-time index, $\mathbf{A}\in\mathbb{R}^{n \times n}$ is the system matrix, and $\mathbf{x}[k]\in\mathbb{R}^{n}$ is the state of the system.\footnote{We use $\mathbb{N}$ and $\mathbb{N}_{+}$ to denote the set of non-negative integers and the set of positive integers, respectively.} A network of sensors, modeled as nodes of a communication graph, obtain partial measurements of the state of the above process as follows:
\begin{equation}
\mathbf{y}_{i}[k]=\mathbf{C}_i\mathbf{x}[k],
\label{eqn:Obsmodel}
\end{equation}
where $\mathbf{y}_{i}[k] \in {\mathbb{R}}^{r_i}$ represents the measurement vector of the $i$-th node at time-step $k$, and $\mathbf{C}_i \in {\mathbb{R}}^{r_i \times n}$ represents the corresponding observation matrix. Let $\mathbf{y}[k]={\begin{bmatrix}\mathbf{y}^T_{1}[k] & \cdots & \mathbf{y}^T_{N}[k]\end{bmatrix}}^T$ and $\mathbf{C}={\begin{bmatrix}\mathbf{C}^T_{1} & \cdots & \mathbf{C}^T_{N}\end{bmatrix}}^T$ represent the collective measurement vector at time-step $k$, and the collective observation matrix, respectively. The goal of each node $i$ in the network is to generate an asymptotically correct estimate $\hat{\mathbf{x}}_i[k]$ of the true dynamics $\mathbf{x}[k]$. It may not be possible for any node $i$ in the network to accomplish this task in isolation, since the pair $(\mathbf{A},\mathbf{C}_i)$ may not be detectable in general. Throughout the paper, we will only assume that the pair $(\mathbf{A},\mathbf{C})$ is observable; the subsequent developments can be readily generalized to the case when $(\mathbf{A},\mathbf{C})$ is detectable.

\textbf{Communication Network Model}: As is evident from the above discussion, information exchange among nodes is necessary for all nodes to estimate the full state. At each time-step $k\in\mathbb{N}$, the available communication channels are modeled by a directed communication graph $\mathcal{G}[k]=(\mathcal{V},\mathcal{E}[k])$, where $\mathcal{V}=\{1,\ldots,N\}$ represents the set of nodes, and $\mathcal{E}[k]$ represents the edge set of $\mathcal{G}[k]$ at time-step $k$. Specifically, if $(i,j)\in\mathcal{E}[k]$, then node $i$ can send information directly to node $j$ at time-step $k$; in such a case, node $i$ will be called a neighbor of node $j$ at time-step $k$. We will use $\mathcal{N}_i[k]$ to represent the set of  all neighbors (excluding node $i$) of node $i$ at time-step $k$. When $\mathcal{G}[k]=\mathcal{G}  \hspace{2mm}\forall k\in\mathbb{N}$, where $\mathcal{G}$ is a static, directed communication graph, the necessary and sufficient condition (on the system and network) to solve the distributed state estimation problem is the joint detectability of each source component of $\mathcal{G}$ \cite{martins}.\footnote{A source component of a static, directed graph is a strongly connected component with no incoming edges.} Our \textbf{goal} in this paper is to extend the above result to  scenarios where the underlying communication graph is allowed to change over time. To this end, let the union graph over an interval $[k_1,k_2], 0 \leq k_1 \leq k_2$, denoted $\bigcup\limits_{\tau=k_1}^{k_2}\mathcal{G}[\tau]$,  indicate a graph with vertex set equal to $\mathcal{V}$, and  edge set equal to the union of the edge sets of the individual graphs appearing over the interval $[k_1,k_2]$. Based on this convention, we now formally describe the communication patterns (induced by the sequence $\{G[k]\}^{\infty}_{k=0}$) that are considered in this paper. We assume that there exists a sequence $\mathbb{I}=\{t_0,t_1,\ldots\}$ of increasing time-steps with $t_0=0$ and each $t_i\in\mathbb{N}$, satisfying the following conditions.
\begin{enumerate}
    \item[(C1)] Define the mapping $f:\mathbb{I}\rightarrow\mathbb{N}_{+}$ as $f(t_q)=t_{q+1}-t_{q}, \forall t_q\in\mathbb{I}.$ We assume that $f(t_q)$ is a non-decreasing function of its argument.
    \item[(C2)] For each $k\in\mathbb{N}$, let $m(k)\triangleq \max\{t_q\in\mathbb{I}:t_q \leq k\}$, and $M(k)\triangleq \min\{t_q\in\mathbb{I}:t_q > k\}$. Define $g:\mathbb{N}\rightarrow\mathbb{N}_{+}$ as $g(k)=M(k)-m(k)=f(m(k)).$ Then, we assume that the following holds:
    \begin{equation}
        \limsup_{k\to\infty}\frac{2(N-1)g(k)}{k} = \delta < 1. 
        \label{eqn:ass_growth}
    \end{equation}
    \item[(C3)] For each $t_q\in\mathbb{I}$, we assume that  $\bigcup\limits_{\tau=t_q}^{t_{q+1}-1}\mathcal{G}[\tau]$ is strongly-connected. 
\end{enumerate}

Let us discuss what the above conditions mean. Condition (C1) in conjunction with condition (C3) tells us that the intervals over which strong-connectivity is preserved are non-decreasing in length (evident from the monotonicity of the function $f(\cdot)$).\footnote{Our results can be generalized to account for the case when $f(\cdot)$ is non-monotonic by suitably modifying condition (C2).} Essentially, our aim here is to come up with distributed estimators that function correctly despite sparse communication; hence, we allow for potentially growing inter-communication intervals. Since we place no assumptions at all on the spectrum of the $\mathbf{A}$ matrix, stability of the estimation error process imposes natural restrictions on how fast the inter-communication intervals can be allowed to grow. Condition (C2) formalizes this intuition by constraining such intervals to grow at most linearly at a certain rate. Notably, conditions (C1)-(C3) cover a very general class of time-varying graph sequences. In particular, we are unaware of any other work that allows the inter-communication intervals to grow unbounded for the problem under consideration. 

Consider the following two examples. (i) The mapping $f$ satisfies $f(t_q)=T, \forall t_q\in\mathbb{I}$, where $T$ is some positive integer. (ii) The mapping $f$ satisfies $f(t_q)=\floor{\sqrt{t_q+1}}, \forall t_q\in\mathbb{I}$. It is easy to verify that (C2) is satisfied in each case. While example (i) represents the standard ``joint strong-connectivity" setting where the inter-communication intervals remain bounded, example (ii) deviates from existing literature by allowing the inter-communication intervals to grow unbounded. 

\textbf{Background}: For communication graphs satisfying conditions (C1)-(C3) as described above, our \textbf{objective} will be to design a distributed algorithm that ensures $\lim_{k\to\infty}\left\Vert\hat{\mathbf{x}}_i[k]-\mathbf{x}[k]\right\Vert=0, \forall i\in\mathcal{V}$, with $\hat{\mathbf{x}}_i[k]$ representing the estimate of the state $\mathbf{x}[k]$ maintained by node $i\in\mathcal{V}$. To this end, we recall the following result from \cite{mitraTAC}.
\begin{lemma}
\label{transformations}
Given a system matrix $\mathbf{A}$, and a set of $N$ sensor observation matrices $\mathbf{C}_1, \mathbf{C}_2, \ldots, \mathbf{C}_{N}$, define $\mathbf{C} \triangleq {\begin{bmatrix}\mathbf{C}^T_{1} & \cdots & \mathbf{C}^T_{N}\end{bmatrix}}^T$. Suppose $(\mathbf{A},\mathbf{C})$ is observable. Then, there exists a similarity transformation matrix $\mathbf{T}$ that transforms the pair $\mathbf{(A,C)}$ to $(\bar{\mathbf{A}},\bar{\mathbf{C}})$, such that
\begin{equation}
\resizebox{0.7\hsize}{!}{$
\begin{aligned}
\bar{\mathbf{A}} &= \left[
\begin{array}{c|c|cc}
\mathbf{A}_{11}  & \multicolumn{3}{c}{\mathbf{0}} \\
\hline
\mathbf{A}_{21}
 & 
\mathbf{A}_{22} & \multicolumn{2}{c}{\mathbf{0}} \\
\cline{2-4}
\vdots & \vdots & \hspace{-5mm} \ddots  & \vdots \\
\mathbf{A}_{N1}& \mathbf{A}_{N2} \hspace{2mm}\cdots & \mathbf{A}_{N(N-1)} & \multicolumn{1}{|c}{\mathbf{A}_{NN}} 
\end{array}
\right], \\~\\
\bar{\mathbf{C}} &= \begin{bmatrix} \bar{\mathbf{C}}_1 \\ \bar{\mathbf{C}}_2 \vspace{-1mm} \\  \vdots \\ \bar{\mathbf{C}}_N \end{bmatrix} = \left[ \begin{array}{cccc} \mathbf{C}_{{11}} & \multicolumn{3}{|c}{\mathbf{0}}\\
\hline
\mathbf{C}_{{21}} & \multicolumn{1}{c}{\mathbf{C}_{{22}}} & \multicolumn{2}{|c}{\mathbf{0}}\\
\hline
\vdots&\vdots&\vdots&\vdots\\
\mathbf{C}_{{N1}} & \multicolumn{1}{c}{\mathbf{C}_{{N2}}}  & \cdots  \mathbf{C}_{N(N-1)} & \mathbf{C}_{NN}\\
 \end{array}
 \right],
\end{aligned}
$}
\label{eqn:gen_form}
\end{equation}
and the pair $(\mathbf{A}_{ii},\mathbf{C}_{ii})$ is observable $\forall i \in \{1,2, \ldots, N\}$.
\end{lemma}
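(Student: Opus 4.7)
The plan is a constructive, iterative approach built on the classical Kalman observability decomposition, applied one sensor at a time. The backbone of the construction is the descending chain of subspaces $U_0 \supseteq U_1 \supseteq \cdots \supseteq U_N$, where $U_i$ denotes the unobservable subspace of the aggregated pair $(\mathbf{A}, [\mathbf{C}_1^T, \ldots, \mathbf{C}_i^T]^T)$, with the convention $U_0 = \mathbb{R}^n$. By standard linear system theory, each $U_i$ is $\mathbf{A}$-invariant, the chain is monotonically nested, and the joint observability of $(\mathbf{A},\mathbf{C})$ forces $U_N = \{0\}$.

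Given this flag, I would pick the transformation $\mathbf{T}$ as follows. For each $i = 1, \ldots, N$, let $B_i$ be a basis of an arbitrary direct-sum complement of $U_i$ inside $U_{i-1}$, and set the columns of $\mathbf{T}$ to be the concatenation $[B_1, B_2, \ldots, B_N]$ in that order. A short downward induction on $i$ shows $U_{i-1} = \mathrm{span}(B_i) \oplus \mathrm{span}(B_{i+1}) \oplus \cdots \oplus \mathrm{span}(B_N)$, so the last $n - (n_1 + \cdots + n_{i-1})$ columns of $\mathbf{T}$ span the $\mathbf{A}$-invariant subspace $U_{i-1}$. This invariance immediately forces all blocks of $\bar{\mathbf{A}} = \mathbf{T}^{-1}\mathbf{A}\mathbf{T}$ strictly above the block diagonal to vanish, giving the desired lower block-triangular structure. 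Similarly, for any $j > i$, the columns of $B_j$ lie in $U_i$, which is contained in $\ker \mathbf{C}_i$; hence the blocks of $\bar{\mathbf{C}}_i$ strictly to the right of block column $i$ are zero. Together these two facts reproduce the block pattern in \eqref{eqn:gen_form}.

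The step I expect to require the most care is verifying that every diagonal pair $(\mathbf{A}_{ii}, \mathbf{C}_{ii})$ is observable. The idea is to interpret $\mathbf{A}_{ii}$ and $\mathbf{C}_{ii}$ as the matrix representations, in the basis $B_i$, of the operators induced on the quotient space $U_{i-1}/U_i$ by the restrictions $\mathbf{A}|_{U_{i-1}}$ and $\mathbf{C}_i|_{U_{i-1}}$. The key algebraic fact is that on $U_{i-1}$ the first $i-1$ sensors are identically blind (since $U_{i-1}$ is contained in their joint unobservable subspace by the very definition of $U_{i-1}$), so the joint unobservable subspace of $\mathbf{C}_1, \ldots, \mathbf{C}_i$ restricted to $U_{i-1}$ coincides with the unobservable subspace of $\mathbf{C}_i|_{U_{i-1}}$ alone, which is exactly $U_i$ by construction. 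Passing to the quotient $U_{i-1}/U_i$ therefore yields an observable pair, namely $(\mathbf{A}_{ii}, \mathbf{C}_{ii})$. Finally, joint observability ensures $U_N = \{0\}$, so the flag does descend all the way to the trivial subspace, the $B_i$'s concatenate into an honest basis of $\mathbb{R}^n$, and the construction is complete.
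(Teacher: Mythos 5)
Your proof is correct. The paper does not prove this lemma itself --- it recalls it from \cite{mitraTAC} --- and your construction via the nested chain of unobservable subspaces $U_0\supseteq U_1\supseteq\cdots\supseteq U_N=\{0\}$, with complements chosen inside each link and observability of the diagonal pairs obtained on the quotients $U_{i-1}/U_i$, is essentially the same iterated Kalman observability decomposition used there, just organized as a one-shot choice of basis rather than a sequence of coordinate changes.
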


We use the matrix $\mathbf{T}$ given by Lemma \ref{transformations} to perform the coordinate transformation $\mathbf{z}[k]=\mathbf{T}^{-1}\mathbf{x}[k]$,  yielding:
\begin{equation}
\begin{aligned}
\mathbf{z}[k+1]&=\bar{\mathbf{A}}\mathbf{z}[k],\\
\mathbf{y}_i[k]&=\bar{\mathbf{C}}_i\mathbf{z}[k], \quad \forall i \in \{1, \ldots, N\},
\end{aligned}
\label{eqn:coordinatetransform}
\end{equation}
where $\bar{\mathbf{A}}={\mathbf{T}}^{-1}\mathbf{A}\mathbf{T}$ and $\bar{\mathbf{C}}_i = \mathbf{C}_i\mathbf{T}$ are given by (\ref{eqn:gen_form}). Commensurate with the structure of $\bar{\mathbf{A}}$, the vector $\mathbf{z}[k]$ is of the following form:
\begin{equation}
\mathbf{z}[k]={\begin{bmatrix}
{\mathbf{z}^{(1)}_{}[k]}^{T}&
\cdots&
{\mathbf{z}^{(N)}_{}[k]}^{T}
\end{bmatrix}}^{T},
\label{eqn:sub-states}
\end{equation}
where $\mathbf{z}^{(j)}[k]$ will be referred to as the $j$-th sub-state. By construction, since the pair $(\mathbf{A}_{jj},\mathbf{C}_{jj})$ is locally observable w.r.t. the measurements of node $j$, node $j$ will be viewed as the unique source node for sub-state $j$. In this sense, the role of node $j$ will be to ensure that each non-source node $i\in\mathcal{V}\setminus\{j\}$ maintains an asymptotically correct estimate of sub-state $j$. For a time-invariant strongly-connected graph, this is achieved in \cite{mitraTAC} by first constructing a spanning tree rooted at node $j$, and then requiring nodes to only listen to their parents in such a tree for estimating sub-state $j$. The resulting  unidirectional flow of information from the source $j$ to the rest of the network guarantees stability of the error process for sub-state $j$ \cite{mitraTAC}.

However, the above strategy is no longer applicable when the underlying communication graph is time-varying, for the following reasons. (i) For a given sub-state $j$, there may not exist a common spanning tree rooted at node $j$ in each graph $\mathcal{G}[k], k\in\mathbb{N}$. (ii) Assuming that a specific spanning tree rooted at node $j$ is guaranteed to repeat at various points in time (not necessarily periodically), is restrictive, and qualifies as only a special case of conditions (C1)-(C3). (iii) Suppose for simplicity that $\mathcal{G}[k]$ is strongly-connected at each time-step (as in \cite{wang2}), and hence, there exists a spanning tree $\mathcal{T}_j[k]$ rooted at node $j$ in each such graph. For estimating sub-state $j$, suppose consensus at time-step $k$ is performed along the spanning tree $\mathcal{T}_j[k]$. As we demonstrate in the next section, switching between such spanning trees can lead to unstable error processes over time. Thus, if one makes no further assumptions on the system model beyond joint observability, or on the sequence of communication graphs beyond conditions (C1)-(C3), ensuring stability of the estimation error dynamics becomes a challenging proposition. Nonetheless, we develop a simple algorithm in Section \ref{sec:algo} that bypasses the above problems. In the next section, we provide an example that helps to build the intuition behind this algorithm. 

\section{An Illustrative Example}
\label{sec:example}
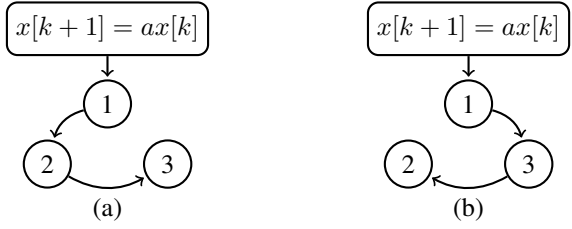
\begin{figure}[t]
\begin{center}
\begin{tikzpicture}
[->,shorten >=1pt,scale=.4, minimum size=5pt, auto=center, node distance=2cm,
  thick,  node/.style={circle, draw=black, thick},]
\tikzstyle{block} = [rectangle, draw, text centered, rounded corners, minimum height=0.7cm, minimum width=0.5cm];
\node [block]  at (-2,2.5) (plant1) {$x[k+1]=ax[k]$};
\node [circle, draw](n1) at (-2,0)  (1)  {1};
\node [circle, draw](n2) at (-4,-2)   (2)  {2};
\node [circle, draw](n3) at (0,-2)   (3)  {3};
\node [ ] ( ) at (-2,-3.5) () {(a)};
\path[every node/.style={font=\sffamily\small}]
(plant1) 
        edge [] node [] {} (1)

    (1)
        edge [bend right] node [] {} (2)
        
    (2)
        edge [bend right] node [] {} (3);
        
\node [block]  at (10,2.5) (plant2) {$x[k+1]=ax[k]$};   
\node [circle, draw](n1) at (10,0)     (4)  {1};
\node [circle, draw](n2) at (8,-2)   (5)  {2};
\node [circle, draw](n3) at (12,-2)   (6)  {3};
\node [ ] ( ) at (10,-3.5) () {(b)};
\path[every node/.style={font=\sffamily\small}]
(plant2) 
        edge [] node [] {} (4)

(4)
       edge [bend left]  node [] {} (6)
        
 (6)   edge [bend left]  node [] {} (5);
       
\end{tikzpicture}
\end{center}
\caption{An LTI system is monitored by a network of 3 nodes, where the communication graph $\mathcal{G}[k]$ switches between the two graphs shown above.}
\label{fig:example}
\end{figure}
\begin{figure}[t]
\begin{center}
\begin{tabular}{cc}
\includegraphics[scale=0.095]{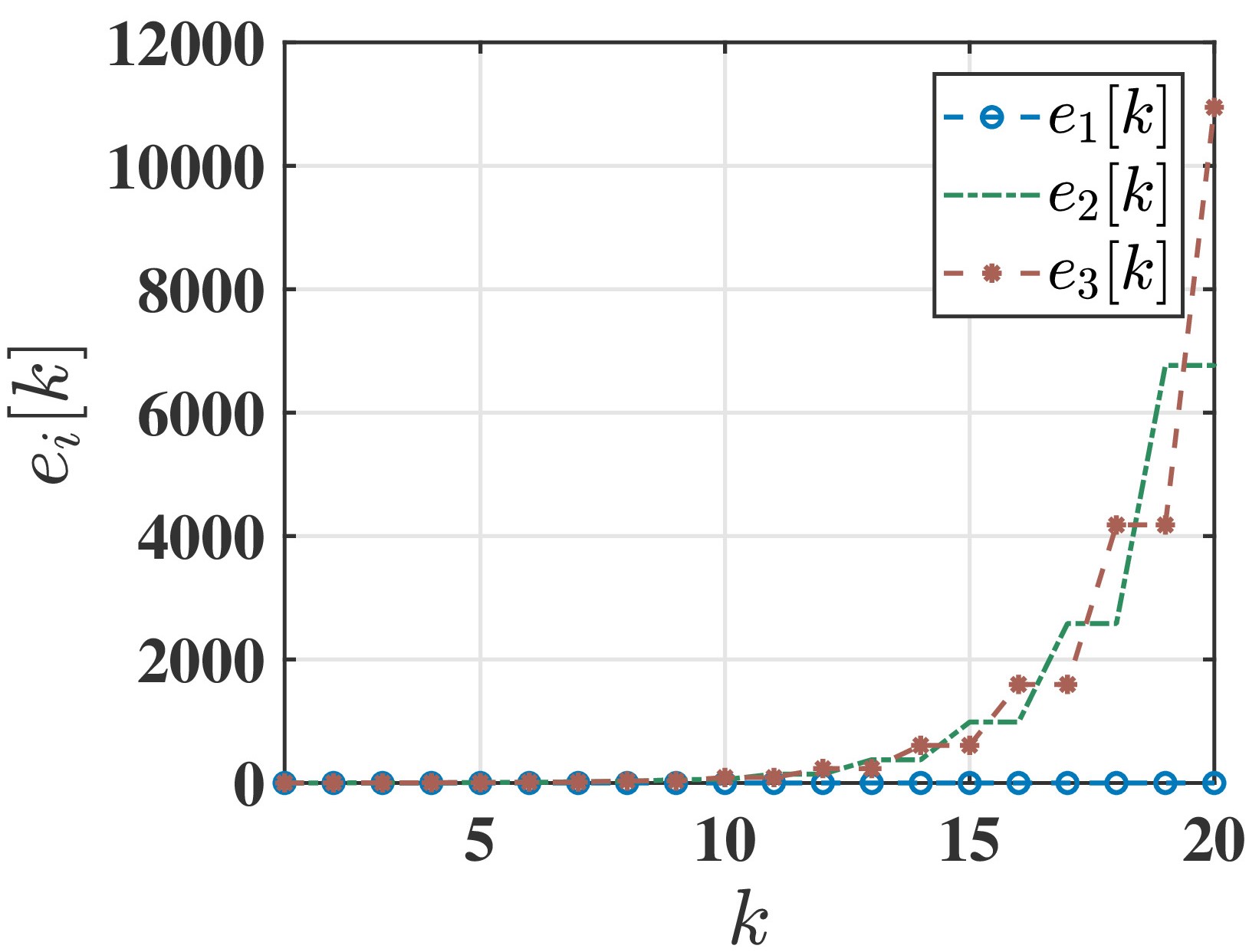}&\includegraphics[scale=0.095]{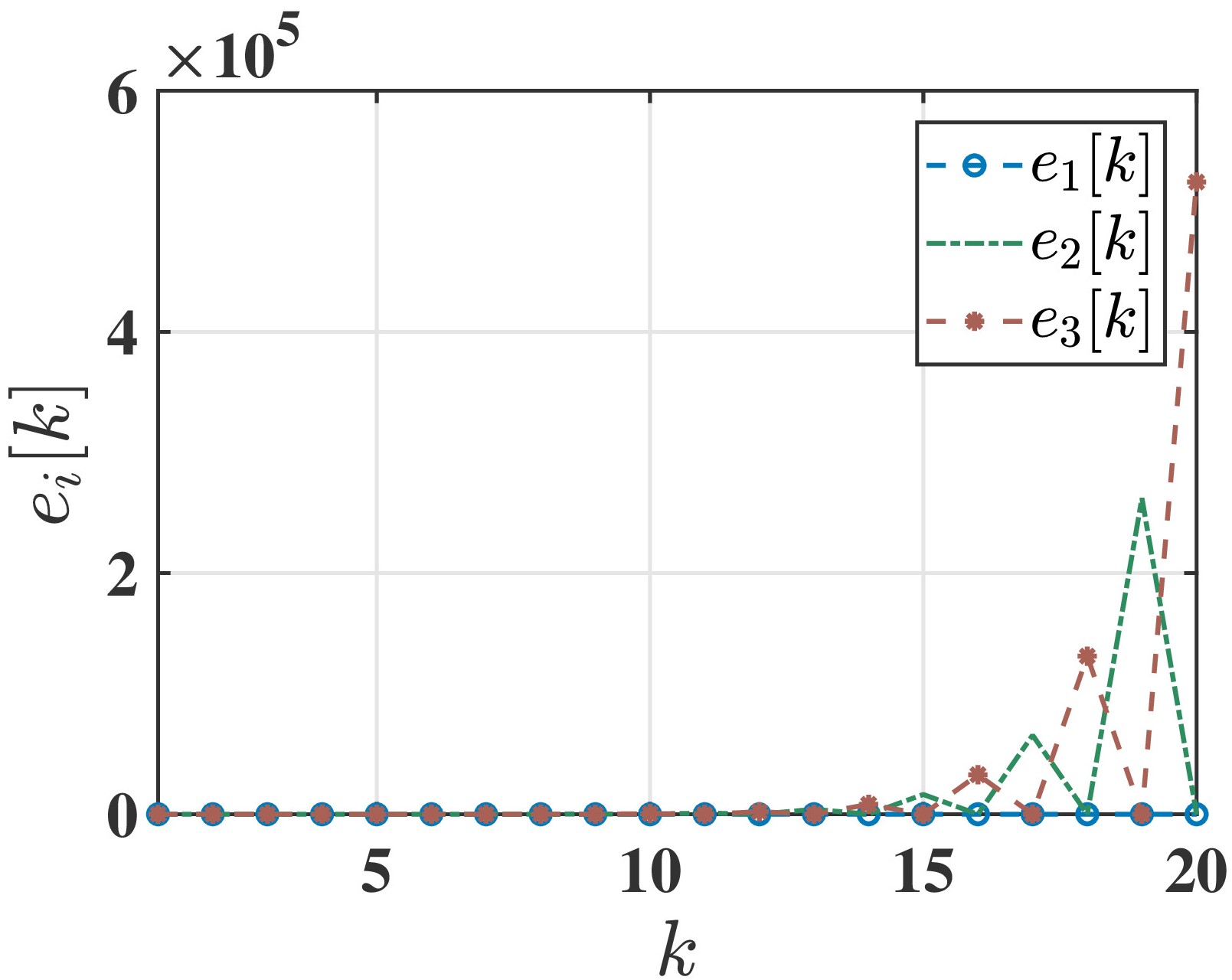}
\end{tabular}
\end{center}
\caption{Estimation error plots of the nodes for the model in Figure \ref{fig:example}. Simulations are performed for a model where $a=2$. The figure on the left corresponds to the case where consensus weights are distributed uniformly among neighbors, while the one on the right is the case where weights are placed along a tree rooted at node 1.}
\label{fig:error}
\end{figure}
Consider a network of 3 nodes monitoring a scalar unstable process $x[k+1]=ax[k]$, as shown in Figure \ref{fig:example}. The communication graph $\mathcal{G}[k]$ switches between the two topologies shown in Figure \ref{fig:example}. Specifically, $\mathcal{G}[k]$ is the graph in Figure $\ref{fig:example}$(a) at all even time-steps, and the one in $\ref{fig:example}$(b) at all odd time-steps. Node 1 is the only node with non-zero measurements, and thus acts as the source node for this network. Suppose for simplicity that it perfectly measures the state at all time-steps, i.e., its state estimate is $\hat{x}_1[k]=x[k], \forall k\in\mathbb{N}$. Given this setup, a standard consensus based state estimate update rule would take the form (see for example \cite{martins,mitraTAC,wang2}):
\begin{equation}
    \hat{x}_i[k+1]=a\left(\sum_{j\in\mathcal{N}_i[k]\cup\{i\}}w_{ij}[k]\hat{x}_j[k]\right), i\in\{2,3\},
    \label{eqn:sampleupdate}
\end{equation}
where the weights $w_{ij}[k]$ are non-negative, and satisfy $\sum_{j\in\mathcal{N}_i[k]\cup\{i\}}w_{ij}[k]=1, \forall k\in\mathbb{N}$. The key question is: how should the consensus weights be chosen to guarantee stability of the estimation errors of nodes 2 and 3? Even for this simple example, if such weights are chosen naively, then the errors may grow unbounded over time. To see this, consider the following two choices: (1) consensus weights are distributed evenly over the set $\mathcal{N}_i[k]\cup\{i\}$, and (2) consensus weights are placed along the tree rooted at node $1$. In each case, the error dynamics are unstable, as depicted in Figure \ref{fig:error}. To overcome this problem, suppose nodes 2 and 3 are aware of the fact that node 1 has perfect information of the state. Since nodes 2 and 3 have no measurements of their own, intuitively, it makes sense that they should place their consensus weights entirely on node 1 whenever possible. The trickier question for node 2 (resp., node 3) is to decide \textit{when} it should listen to node 3 (resp., node 2). Let us consider the situation from the perspective of node 2. At time-step 0, it adopts the information of node 1, and hence, the error of node 2 is zero at time-step 1. However, the error of node 3 is not necessarily zero at time-step 1. Consequently, if node 2 places a non-zero consensus weight on the estimate of node 3 at time-step 1, its error at time-step 2 might assume a non-zero value. Clearly, at time-step 1, node 2 is better off rejecting the information from node 3, and simply running open-loop. \textit{The main take-away point here is that adoption or rejection of information from a neighbor should be based on the quality of information that such a neighbor has to offer}. In particular, a node that has come in contact with node 1 more recently is expected to have better information about the state than the other. Thus, to dynamically evaluate the quality of an estimate, the above reasoning suggests the need to introduce a metric that keeps track of how recent that estimate is with respect to  (w.r.t.) the estimate of the source node 1. In the following section, we formalize this idea by introducing such a metric.
\begin{algorithm}[t]
	\caption{}
	\label{algo:main}
	\begin{algorithmic}[1]
	\State \textbf{Initialization:} $\tau^{(j)}_j[0]=0, \tau^{(j)}_i[0]=\omega, \forall i\in\mathcal{V}\setminus\{j\}$.
	\State \textbf{Update Rules for the Source Node:} 
	Node $j$ maintains $\tau^{(j)}_j[k]=0, \forall k\in\mathbb{N}$. It updates $\hat{\mathbf{z}}^{(j)}_j[k]$ as:
	\begin{align}
     \hat{\mathbf{z}}^{(j)}_j[k+1]={\mathbf{F}}_{jj}\hat{\mathbf{z}}^{(j)}_j[k]+\sum \limits_{q=1}^{(j-1)}\mathbf{G}_{jq}\hat{\mathbf{z}}^{(q)}_j[k]
     +\mathbf{L}_j\mathbf{y}_j[k],
     \label{eqn:sourceupdate}
    \end{align}
where ${\mathbf{F}}_{jj}=(\mathbf{A}_{jj}-\mathbf{L}_j\mathbf{C}_{jj})$, $\mathbf{G}_{jq}=(\mathbf{A}_{jq}-\mathbf{L}_j\mathbf{C}_{jq})$, and $\mathbf{L}_j$ is an observer gain to be designed later.
\State \textbf{Update Rules for the Non-Source Nodes:}  Each non-source node $i\in\mathcal{V}\setminus\{j\}$ operates as follows. 
\State \underline{\textbf{Case 1: $\tau^{(j)}_i[k]=\omega$}}. Define $\mathcal{M}^{(j)}_i[k]\triangleq\{l\in\mathcal{N}_i[k]: \tau^{(j)}_l[k]\neq\omega\}.$ If $\mathcal{M}^{(j)}_i[k]\neq\emptyset$, let $u\in\argmin_{l\in\mathcal{M}^{(j)}_i[k]} \tau^{(j)}_l[k]$. Node $i$ updates $\tau^{(j)}_i[k]$ and $\hat{\mathbf{z}}^{(j)}_i[k]$ as:
\begin{equation}
    \tau^{(j)}_i[k+1]=\tau^{(j)}_u[k]+1,
    \label{eqn:indexupdatecase11}
\end{equation}
\vspace{-6mm}
\begin{equation}
    \hat{\mathbf{z}}^{(j)}_i[k+1]=\mathbf{A}_{jj}\hat{\mathbf{z}}^{(j)}_u[k]+\sum \limits_{q=1}^{(j-1)}\mathbf{A}_{jq}\hat{\mathbf{z}}^{(q)}_i[k].
    \label{eqn:nonsourcecase1}
\end{equation}
If $\mathcal{M}^{(j)}_i[k]=\emptyset$, then
\begin{equation}
    \tau^{(j)}_i[k+1]=\omega,
    \label{eqn:indexupdatecase12}
\end{equation}
\vspace{-6mm}
\begin{equation}
    \hat{\mathbf{z}}^{(j)}_i[k+1]=\mathbf{A}_{jj}\hat{\mathbf{z}}^{(j)}_i[k]+\sum \limits_{q=1}^{(j-1)}\mathbf{A}_{jq}\hat{\mathbf{z}}^{(q)}_i[k].
  \label{eqn:nonsourcecase2}
\end{equation}
\State \underline{\textbf{Case 2: $\tau^{(j)}_i[k]\neq\omega$}}. Define 
    $\mathcal{F}^{(j)}_i[k]\triangleq\{l\in\mathcal{M}^{(j)}_i[k]: \tau^{(j)}_l[k] < \tau^{(j)}_i[k]\}$, where $\mathcal{M}^{(j)}_i[k]$ is as defined in line 4. If $\mathcal{F}^{(j)}_i[k]\neq\emptyset$, let $u\in\argmin_{l\in\mathcal{F}^{(j)}_i[k]} \tau^{(j)}_l[k]$. Node $i$ then updates $\tau^{(j)}_i[k]$ as per \eqref{eqn:indexupdatecase11}, and  $\hat{\mathbf{z}}^{(j)}_i[k]$ as per \eqref{eqn:nonsourcecase1}. If $\mathcal{F}^{(j)}_i[k]=\emptyset$, then $\tau^{(j)}_i[k]$ is updated as
\begin{equation}
    \tau^{(j)}_i[k+1]=\tau^{(j)}_i[k]+1,
    \label{eqn:indexupdatecase2}
\end{equation}
and $\hat{\mathbf{z}}^{(j)}_i[k]$ is updated as per \eqref{eqn:nonsourcecase2}.
	\end{algorithmic}
\end{algorithm}

\section{Algorithm}
\label{sec:algo}
Building on the intuition developed in the previous section, we introduce a new approach to designing distributed observers for a general class of time-varying networks. The main idea is the use of a ``freshness-index"  that keeps track of how delayed the estimates of a node are w.r.t. the estimates of a source node. Specifically, for updating its estimate of $\mathbf{z}^{(j)}[k]$, each node $i\in\mathcal{V}$ maintains and updates at every time-step a freshness-index $\tau^{(j)}_i[k]$. At each time-step $k\in\mathbb{N}$, the index $\tau^{(j)}_i[k]$ plays the following role: it determines whether node $i$ should adopt the information received from one of its neighbors in $\mathcal{N}_i[k]$, or run open-loop, for updating $\hat{\mathbf{z}}^{(j)}_i[k]$, where $\hat{\mathbf{z}}^{(j)}_i[k]$ represents the estimate of $\mathbf{z}^{(j)}[k]$ maintained by node $i$. In case it is the former, it also indicates which specific neighbor in $\mathcal{N}_i[k]$  node $i$ should listen to at time-step $k$; this piece of information is particularly important for the problem under consideration, and ensures stability of the error process. The rules that govern the updates of the freshness indices $\tau^{(j)}_i[k]$, and the estimates of the $j$-th sub-state $\mathbf{z}^{(j)}[k]$, are formally stated in Algorithm \ref{algo:main}. In what follows, we describe each of these rules.

 {\textbf{Discussion of Algorithm \ref{algo:main}:}} Consider any sub-state $j\in\{1,\ldots,N\}.$ Each node $i\in\mathcal{V}$ maintains an index $\tau^{(j)}_i[k]\in\{\omega\}\cup\mathbb{N}$, where $\omega$ is a dummy value. Specifically, $\tau^{(j)}_i[k]=\omega$  represents an ``infinite-delay"  w.r.t. the estimate of the source node for sub-state $j$, namely node $j$, i.e., it represents that node $i$ has not received any information (either directly or indirectly) from node $j$ regarding sub-state $j$ up to time-step $k$. For estimation of sub-state $j$, since delays are measured w.r.t. the source node $j$, node $j$ maintains its freshness-index $\tau^{(j)}_j[k]$ at zero for all time, to indicate a zero delay w.r.t. itself. For updating its estimate of $\mathbf{z}^{(j)}[k]$, it uses only its own information, as is evident from \eqref{eqn:sourceupdate}. 
 
 Every other node starts out with an ``infinite-delay" $\omega$ w.r.t. the source (line 1 of Algo. \ref{algo:main}). The freshness-index of a node $i\in\mathcal{V}\setminus\{j\}$ changes from $\omega$ to a finite value when it comes in contact with a neighbor with a finite delay, i.e., with a freshness-index that is not $\omega$ (line 4 of Algo. \ref{algo:main}). At this point, we say that $\tau^{(j)}_i[k]$ has been ``triggered". Once triggered, at each time-step $k$, a non-source node $i$ will adopt the information of a neighbor $l \in \mathcal{N}_i[k]$ only if node $l$'s estimate of $\mathbf{z}^{(j)}[k]$ is ``more fresh" relative to its own, i.e., only if $\tau^{(j)}_l[k] < \tau^{(j)}_i[k]$.\footnote{Under Case 1 or Case 2 in Algo.  \ref{algo:main}, when a node $i\in\mathcal{V}\setminus\{j\}$ updates $\tau^{(j)}_i[k]$ via \eqref{eqn:indexupdatecase11}, and $\hat{\mathbf{z}}^{(j)}_i[k]$ via \eqref{eqn:nonsourcecase1}, we say that ``$i$ adopts the information of $u$ at time $k$ for sub-state $j$"; else, if it runs open-loop, we say it adopts its own information.} Among the set of neighbors in $\mathcal{M}^{(j)}_i[k]$ (if $\tau^{(j)}_i[k]$ has not yet been triggered), or in $\mathcal{F}^{(j)}_i[k]$ (if $\tau^{(j)}_i[k]$ has been triggered), node $i$ only adopts the information (based on \eqref{eqn:nonsourcecase1}) of the neighbor $u$ with the least delay. At this point, the delay of node $i$ matches that of node $u$, and this fact is captured by the update rule \eqref{eqn:indexupdatecase11}. In case node $i$ has no neighbor that has fresher information than itself w.r.t. sub-state $j$ (where informativeness is quantified by $\tau^{(j)}_i[k])$, it increments its own freshness-index by $1$ (as per \eqref{eqn:indexupdatecase2}) to capture the effect of its own information getting older, and runs open-loop based on \eqref{eqn:nonsourcecase2}. Based on the above rules, at any given time-step $k$, $\tau^{(j)}_i[k]$ measures the age-of-information of $\hat{\mathbf{z}}^{(j)}_i[k]$, relative to the source node $j$. This fact is established later in Lemma \ref{lemma:form}. Finally, note that Algorithm \ref{algo:main} describes an approach for estimating $\mathbf{z}[k]$, and hence $\mathbf{x}[k]$, since $\mathbf{x}[k]=\mathbf{Tz}[k]$.
\section{Performance Guarantees For Algorithm \ref{algo:main}}
\label{sec:results_statements}
\subsection{Statement of the Results}
In this section, we first state the theoretical guarantees afforded by Algorithm \ref{algo:main}, and then discuss their implications; proofs of these statements are deferred to Appendix \ref{subsec:proofs} and \ref{app:proofProp1}. We begin with one of the main results of this paper. 
\begin{theorem}
Given an LTI system  \eqref{eqn:system}, and a measurement model \eqref{eqn:Obsmodel}, suppose  $(\mathbf{A,C})$ is observable. Let the sequence of communication graphs $\{\mathcal{G}[k]\}_{k=0}^{\infty}$ satisfy conditions (C1)-(C3) in Section \ref{sec:probform}. Then, given any desired convergence rate $\rho\in (0,1)$, the observer gains $\mathbf{L}_1, \ldots, \mathbf{L}_{N}$ can be designed in a manner such that the estimation error of each node converges to zero exponentially fast at rate $\rho$, based on Algorithm \ref{algo:main}.
\label{thm:main1}
\end{theorem}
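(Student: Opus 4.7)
The plan is to exploit the block-lower-triangular structure of $(\bar{\mathbf{A}},\bar{\mathbf{C}})$ given by Lemma \ref{transformations} and argue sub-state by sub-state, by induction on $j\in\{1,\ldots,N\}$. For each $j$, separate the analysis into (i) the error at the unique source node $j$, whose dynamics follow a standard Luenberger-type recursion with an additive coupling from the lower sub-states, and (ii) the errors at the non-source nodes, whose dynamics are driven open-loop by $\mathbf{A}_{jj}$ except for intermittent ``jumps'' in which a node replaces its estimate by that of a strictly fresher neighbor. The inductive hypothesis is that, for a to-be-chosen set of gains, the errors $\mathbf{e}^{(q)}_i[k]\triangleq\hat{\mathbf{z}}^{(q)}_i[k]-\mathbf{z}^{(q)}[k]$ of every node $i$ and every lower sub-state $q<j$ decay exponentially at the desired rate $\rho$; I would then show the same for sub-state $j$ under an appropriate choice of $\mathbf{L}_j$.

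For the source, the error satisfies $\mathbf{e}^{(j)}_j[k+1]=\mathbf{F}_{jj}\mathbf{e}^{(j)}_j[k]+\sum_{q<j}\mathbf{G}_{jq}\mathbf{e}^{(q)}_j[k]$; observability of $(\mathbf{A}_{jj},\mathbf{C}_{jj})$ lets us place the spectrum of $\mathbf{F}_{jj}$ inside a disk of radius $\bar{\rho}<\rho$, so that a standard forced-linear-system estimate combined with the induction gives $\|\mathbf{e}^{(j)}_j[k]\|\lesssim \rho^{k}$. For a non-source node $i$, the key step is to prove a representation lemma (anticipated in the paper) showing that at every time $k$,
\begin{equation*}
\hat{\mathbf{z}}^{(j)}_i[k] \;=\; \mathbf{A}_{jj}^{\tau^{(j)}_i[k]}\,\hat{\mathbf{z}}^{(j)}_j\!\bigl[k-\tau^{(j)}_i[k]\bigr] \;+\; \text{(convolution of } \mathbf{A}_{jq}\hat{\mathbf{z}}^{(q)}_{(\cdot)} \text{ terms, } q<j\text{)},
\end{equation*}
where the trajectory of ancestors is the one traced out by the adoption rule in Algorithm \ref{algo:main}. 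Subtracting the analogous expansion of $\mathbf{z}^{(j)}[k]$ obtained by iterating the true dynamics yields $\|\mathbf{e}^{(j)}_i[k]\|$ bounded by $\|\mathbf{A}_{jj}\|^{\tau^{(j)}_i[k]}$ times the source error at time $k-\tau^{(j)}_i[k]$, plus a $\tau^{(j)}_i[k]\cdot\|\mathbf{A}_{jj}\|^{\tau^{(j)}_i[k]}$ multiple of the worst lower-sub-state error over the window $[k-\tau^{(j)}_i[k],k]$.

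The enabling graph-theoretic fact is a uniform estimate of the form $\tau^{(j)}_i[k]\le 2(N-1)\,g(k)$. I would prove this by showing, using (C3) and (C1), that over any $N-1$ consecutive intervals $[t_q,t_{q+N-1})$ the freshness-index $0$ at the source is driven along a time-respecting path to every node, so that every node's index at the end of those intervals is bounded by the total length $\sum_{r=0}^{N-2}f(t_{q+r})\le (N-1)g(k)$; doubling accounts for the delay accumulated between consecutive refreshes. With this bound, (C2) gives $\tau^{(j)}_i[k]/k\to\delta<1$, and the composite estimate on $\|\mathbf{e}^{(j)}_i[k]\|$ is eventually dominated by $\bigl(\|\mathbf{A}_{jj}\|^{\delta}\,\bar{\rho}^{\,1-\delta}\bigr)^{k}$. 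Choosing $\bar{\rho}$ (via pole placement on $\mathbf{F}_{jj}$) so that $\|\mathbf{A}_{jj}\|^{\delta}\bar{\rho}^{\,1-\delta}\le \rho$ closes the induction at rate $\rho$; observability guarantees no obstruction to making $\bar{\rho}$ arbitrarily small.

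The main technical obstacle is the rigorous proof of the freshness-index bound and its interaction with the adoption rule. Care is required because an adopting neighbor $u$ may itself have been triggered only midway through an interval, and because over several consecutive intervals no strictly-fresher neighbor may be available even if the graph is connected. I anticipate handling this by tracking the ``frontier'' of nodes whose index is minimal within their respective neighborhoods, and by showing that under (C3) this frontier advances by at least one hop every interval, while the non-decreasing growth of $f$ in (C1) supplies the matching length budget. Once the freshness-index bound is secured, the remainder is the cascaded geometric-decay estimate described above, executed inductively in $j$.
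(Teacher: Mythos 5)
Your overall architecture matches the paper's: the delayed-representation identity for non-source estimates (the paper's Lemma \ref{lemma:form}), the freshness-index bound $\tau^{(j)}_i[k]\le 2(N-1)g(k)$ obtained by propagating a frontier of triggered nodes across $N-1$ consecutive connectivity intervals (Lemma \ref{lemma:counterinit}), and an induction over sub-states that combines the two. However, there is a genuine gap in your rate bookkeeping. You take as induction hypothesis that all lower sub-states $q<j$ decay at the \emph{final} rate $\rho$, and aim to conclude decay at rate $\rho$ for sub-state $j$. This cannot close. The source error $\mathbf{e}^{(j)}_j[k]$ is driven by the lower sub-state errors through the coupling terms $\mathbf{G}_{jq}\mathbf{e}^{(q)}_j[k]$, so no matter how small you make the spectral radius $\bar{\rho}$ of $\mathbf{F}_{jj}$, the convolution pins $\|\mathbf{e}^{(j)}_j[k]\|$ at rate $\rho$ (the rate of the forcing), not at rate $\bar{\rho}$. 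The non-source error is then bounded by $\gamma_j^{\tau^{(j)}_i[k]}$ times the source error at $k-\tau^{(j)}_i[k]$, i.e., roughly ${(\gamma_j^{\bar{\delta}}\rho^{1-\bar{\delta}})}^{k}$ with $\gamma_j\ge 1>\rho$, which is \emph{strictly worse} than $\rho^{k}$. The knob you propose to turn --- making $\bar{\rho}$ small via pole placement --- acts on the wrong factor: the delay amplification multiplies the source error, and the source error's rate is dictated by the induction hypothesis, not by $\mathbf{L}_j$.

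The paper's fix, which your sketch is missing, is to interlace a strictly increasing cascade of rates $\rho_1<\lambda_1<\rho_2<\lambda_2<\cdots<\lambda_{N-1}<\rho_N<\lambda_N=\rho$, chosen backwards from $\lambda_N=\rho$ so that $\gamma^{\bar{\delta}}\rho_j^{1-\bar{\delta}}\le\lambda_j$ and $\lambda_{j-1}<\rho_j$ at every level. The induction hypothesis is then that sub-state $q$ decays at rate $\lambda_q$ (strictly faster than $\rho_j$ for $q<j$), which lets the source error for sub-state $j$ decay at the pole-placement rate $\rho_j$, leaving exactly enough slack for the delay inflation to land at $\lambda_j\le\rho$. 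Each level of the cascade must reserve headroom for one application of the map $r\mapsto\gamma^{\bar{\delta}}r^{1-\bar{\delta}}$; a single uniform rate has no headroom. A secondary, easily repaired point: your design condition uses $\delta$ itself in the exponent, but condition (C2) is a $\limsup$, so it only guarantees $\tilde{g}(k)\le\bar{\delta}k$ eventually for any $\bar{\delta}\in(\delta,1)$; the gains must be designed against such a $\bar{\delta}$, not against $\delta$. Your graph-theoretic argument for the delay bound and the representation lemma are otherwise sound and coincide with the paper's.
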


The next result states that under the conditions in Theorem  \ref{thm:main1}, one can in fact achieve finite-time convergence. 
\begin{proposition} (\textbf{Finite-Time Convergence})
Suppose the conditions stated in Theorem \ref{thm:main1} hold. Then, the observer gains $\mathbf{L}_1, \ldots, \mathbf{L}_{N}$ can be designed in a manner such that the estimation error of each node converges to zero in finite-time.
\label{prop:finitetime}
\end{proposition}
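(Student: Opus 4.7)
The plan is to re-use the Theorem~\ref{thm:main1} machinery but replace the exponential-rate gain design with a \emph{deadbeat} design that exploits the local observability of each pair $(\mathbf{A}_{jj},\mathbf{C}_{jj})$ guaranteed by Lemma~\ref{transformations}. Concretely, for every $j\in\{1,\ldots,N\}$, choose $\mathbf{L}_j$ so that $\mathbf{F}_{jj}=\mathbf{A}_{jj}-\mathbf{L}_j\mathbf{C}_{jj}$ is nilpotent, with $\mathbf{F}_{jj}^{n_j}=\mathbf{0}$ where $n_j=\dim\mathbf{z}^{(j)}$. The claim is that this one change, fed into the error analysis already developed for Theorem~\ref{thm:main1}, forces the error of each node to hit zero in a finite number of steps.

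I would then proceed by induction on the sub-state index $j$, proving that there is a finite $T_j$ such that $\mathbf{e}^{(j)}_i[k]=\mathbf{0}$ for all $i\in\mathcal{V}$ and all $k\geq T_j$. For the source-error piece, a short computation (identical to the one that produces the error recursion in the proof of Theorem~\ref{thm:main1}) gives
\[
\mathbf{e}^{(j)}_j[k+1]=\mathbf{F}_{jj}\,\mathbf{e}^{(j)}_j[k]+\sum_{q=1}^{j-1}\mathbf{G}_{jq}\,\mathbf{e}^{(q)}_j[k].
\]
By the inductive hypothesis, the driving lower-sub-state errors are zero for $k\geq T_{j-1}$. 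Nilpotency of $\mathbf{F}_{jj}$ then forces $\mathbf{e}^{(j)}_j[k]=\mathbf{0}$ for every $k\geq T_{j-1}+n_j$; call this time $T_j^{\text{src}}$.

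For the non-source nodes, I would invoke the structural result Lemma~\ref{lemma:form}, which is exactly the formal statement that $\tau^{(j)}_i[k]$ is the age-of-information of $\hat{\mathbf{z}}^{(j)}_i[k]$. That lemma decomposes
\[
\mathbf{e}^{(j)}_i[k]=\mathbf{A}_{jj}^{\tau^{(j)}_i[k]}\,\mathbf{e}^{(j)}_j\!\left[k-\tau^{(j)}_i[k]\right]+\mathbf{r}^{(j)}_i[k],
\]
where the remainder $\mathbf{r}^{(j)}_i[k]$ is a finite sum built only from lower-order errors $\mathbf{e}^{(q)}_{\cdot}[\cdot]$, $q<j$. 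By the inductive hypothesis, $\mathbf{r}^{(j)}_i[k]=\mathbf{0}$ for all large enough $k$; it then suffices to ensure that the adopted source sample is zero, i.e., that $k-\tau^{(j)}_i[k]\geq T_j^{\text{src}}$.

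The main obstacle is the last step, and it is where condition (C2) carries the weight of the argument. The Theorem~\ref{thm:main1} analysis already yields a deterministic upper bound of the form $\tau^{(j)}_i[k]\leq 2(N-1)\,g(m(k))$ via the strong-connectivity property (C3) together with the monotonicity property (C1) of the interval map $f$. Condition (C2) then implies
\[
\limsup_{k\to\infty}\frac{\tau^{(j)}_i[k]}{k}\leq \delta<1,
\]
so $k-\tau^{(j)}_i[k]\to\infty$ and in particular eventually exceeds the constant $T_j^{\text{src}}$. This yields a finite $T_j\geq T_j^{\text{src}}$ past which $\mathbf{e}^{(j)}_i[k]=\mathbf{0}$ for every $i$, completing the induction. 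Setting $k\geq T_N$ and undoing the similarity transformation $\mathbf{x}[k]=\mathbf{T}\mathbf{z}[k]$ gives finite-time convergence in the original coordinates, as required.
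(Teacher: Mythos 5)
Your proposal is correct and follows essentially the same route as the paper: a deadbeat (nilpotent) choice of each $\mathbf{L}_j$, induction on the sub-state index using the rolled-out source error recursion, the delay decomposition of Lemma~\ref{lemma:form} for non-source nodes, and the bound $\tau^{(j)}_i[k]\leq \tilde{g}(k)$ together with condition (C2) to guarantee $k-\tau^{(j)}_i[k]$ eventually exceeds the finite threshold at which the source and lower sub-state errors have vanished. The paper merely packages your thresholds $T_j$ as an explicitly defined sequence $\{\bar{\tau}_j\}$ of infima whose finiteness is certified by the same $\limsup$ condition.
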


The next result follows directly from Prop.   \ref{prop:finitetime} and indicates that, when the sequence of communication graphs exhibits certain structure, one can derive a closed form expression for the maximum number of time-steps required for convergence.

\begin{corollary}
Suppose the conditions stated in Theorem \ref{thm:main1} hold. Additionally, suppose $f(t_q) \leq T, \forall t_q\in\mathbb{I}$, where $T\in\mathbb{N}_{+}$. Then, the observer gains $\mathbf{L}_1, \ldots, \mathbf{L}_{N}$ can be designed in a manner such that the estimation error of each node converges to zero in at most $n+2N(N-1)T$ time-steps.
\label{corr:finitetime}
\end{corollary}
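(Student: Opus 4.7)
The strategy is to specialize the proof of Proposition \ref{prop:finitetime} to the hypothesis $f(t_q)\le T$, upgrading finite-time convergence into the explicit quantitative bound. Choose each observer gain $\mathbf{L}_j$ so that $\mathbf{F}_{jj}:=\mathbf{A}_{jj}-\mathbf{L}_j\mathbf{C}_{jj}$ is nilpotent of index $n_j:=\dim\mathbf{A}_{jj}$; this is permitted by the observability of $(\mathbf{A}_{jj},\mathbf{C}_{jj})$ from Lemma \ref{transformations} and by the gain freedom already granted by Proposition \ref{prop:finitetime}. Since $\sum_{j=1}^{N}n_j=n$, it suffices to prove, by induction on the sub-state index $j$ and exploiting the block lower-triangular structure of $\bar{\mathbf{A}}$, that every node $i$ satisfies $\hat{\mathbf{z}}^{(j)}_i[k]=\mathbf{z}^{(j)}[k]$ for all $k\ge T^{(j)}$, where $T^{(0)}:=0$ and $T^{(j)}:=T^{(j-1)}+n_j+2(N-1)T$; telescoping yields $T^{(N)}=n+2N(N-1)T$.

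For the inductive step, assume the lower sub-states have converged at every node by time $T^{(j-1)}$. Then for $k\ge T^{(j-1)}$ the cross-terms $\sum_{q<j}\mathbf{G}_{jq}e^{(q)}_j[k]$ and $\sum_{q<j}\mathbf{A}_{jq}e^{(q)}_i[k]$ vanish, so the source's error collapses to $e^{(j)}_j[k+1]=\mathbf{F}_{jj}e^{(j)}_j[k]$; nilpotency yields $e^{(j)}_j[k]=0$ for all $k\ge \tau^{\star}:=T^{(j-1)}+n_j$. Appealing to the freshness-index identity developed inside the proof of Theorem \ref{thm:main1} (Lemma \ref{lemma:form}), the non-source error reduces to $e^{(j)}_i[k]=\mathbf{A}_{jj}^{\tau^{(j)}_i[k]}\,e^{(j)}_j[k-\tau^{(j)}_i[k]]$ once the lower sub-states are zero; hence $e^{(j)}_i[k]=0$ the moment the ``ancestor time'' $k-\tau^{(j)}_i[k]$ reaches $\tau^{\star}$.

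It remains to bound by $2(N-1)T$ the time needed for the ancestor time to reach $\tau^{\star}$ at every node, and here $f(t_q)\le T$ does the work through the following calendar fact: every window $[a,a+2T)$ contains a complete strong-connectivity interval $[t_q,t_{q+1})$, since the smallest $t_q\ge a$ satisfies $t_q<a+T$ and hence $t_{q+1}\le t_q+T<a+2T$. I would then run a BFS-style spreading argument. Define $G_r$ to be the set of nodes whose ancestor time has reached $\tau^{\star}$ by the end of the $r$-th such complete interval, with $G_0=\{j\}$. Strong-connectivity of $\bigcup_{\tau=t_q}^{t_{q+1}-1}\mathcal{G}[\tau]$ supplies an edge $(u,v)$ from $G_r$ to $\mathcal{V}\setminus G_r$ at some instant $\tau$ in the interval. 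Since $v$'s ancestor lies strictly below $\tau^{\star}$ (or $v$ has never been triggered), $\tau^{(j)}_v[\tau]>\tau-\tau^{\star}\ge\tau^{(j)}_u[\tau]$ (with Case~1 of Algorithm \ref{algo:main} taking over when $\tau^{(j)}_v[\tau]=\omega$). Thus the adoption condition of Algorithm \ref{algo:main} fires and $v$ inherits $u$'s ancestry, enlarging $G$. After $N-1$ such intervals---all fitting into $2(N-1)T$ time-steps---one has $G=\mathcal{V}$, closing the induction (persistence is immediate: once $i\in G$, any neighbour with a strictly smaller freshness index is itself in $G$, so adoptions keep $i$ in $G$, while open-loop updates preserve the zero error).

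The main obstacle is this spreading step, specifically the need to verify that the \emph{strict} inequality $\tau^{(j)}_l[k]<\tau^{(j)}_i[k]$ demanded by Case~2 of Algorithm \ref{algo:main} is always met across the $G_r$-frontier, so that no inter-set edge is ``wasted'' on a tied-index open-loop step. Once this bookkeeping is in hand, the $2T$-per-interval calendar bound and the cascade across sub-states follow routinely from Lemma \ref{transformations} and from the freshness-index machinery of Theorem \ref{thm:main1}.
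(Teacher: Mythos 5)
Your proposal is correct and is essentially the paper's own (implicit) argument: the corollary is obtained by specializing the proof of Proposition \ref{prop:finitetime} --- nilpotent gains $\mathbf{A}_{jj}-\mathbf{L}_j\mathbf{C}_{jj}$, induction over sub-states with per-sub-state increment $n_j+2(N-1)T$ coming from the recursion \eqref{eqn:taubar_1}--\eqref{eqn:taubar_j} with $\tilde{g}(k)\leq 2(N-1)T$ and $t_{N-1}\leq (N-1)T$, telescoping to $n+2N(N-1)T$. The one place you do extra work, the BFS-style spreading step bounding the delay, is precisely the content already established in Lemma \ref{lemma:counterinit} and Eq. \eqref{eqn:delaybound2} (which give $\tau^{(j)}_i[k]\leq 2(N-1)g(k)\leq 2(N-1)T$ for $k\geq t_{N-1}$), so it can be replaced by a direct citation rather than re-proved.
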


Let us now discuss the implications of our results, and comment on certain aspects of our approach. 

\begin{remark}
The fact that a network of partially informed nodes can track the state of a dynamical system with arbitrarily large eigenvalues, over inter-communication intervals that can potentially grow unbounded, is non-obvious \textit{a priori}. Our results in Thm. \ref{thm:main1} and Prop. \ref{prop:finitetime} indicate that not only can this be done exponentially fast at any desired rate, it can also be done in finite-time. In contrast, the result  closest to ours \cite{wang2} assumes strong-connectivity at each time-step  - an assumption that is significantly stronger than what we make. 
\end{remark}

\begin{remark} Notice that given a desired convergence rate $\rho$, the general design approach described in the proof of Thm. \ref{thm:main1} (in Appendix \ref{subsec:proofs}) offers a considerable degree of freedom in choosing the parameters $\lambda_1,\ldots,\lambda_N$;  the design flexibility so obtained in choosing the observer gains can be exploited to optimize transient performance, or performance against noise. In contrast, the proof of Prop.  \ref{prop:finitetime}  highlights a specific approach to obtain finite-time convergence. However, such an approach may lead to undesirable transient spikes in the estimation errors, owing to large observer gains.
\end{remark}
\begin{remark}
Presently, our approach requires a centralized design phase where the agents implement the multi-sensor decomposition in Section \ref{sec:probform}, and design their observer gains to achieve the desired convergence rate as outlined in  the proof of Thm. \ref{thm:main1}. This is of course a limitation, but one that is common to all existing approaches \cite{dist3,ugrinov,kim,martins,wang,ren,han,mitraTAC,rego,wang2,nozal} that we are aware of, i.e., each such approach involves a centralized design phase. Our approach also requires the nodes to know an upper-bound on the parameter $\delta$ in Eq. \eqref{eqn:ass_growth} while designing their observer gains. If, however, we constrain the inter-communication intervals to grow sub-linearly at worst, i.e, if $\delta=0$, then such gains can be designed with no knowledge on the nature of the graph-sequences. Thus, there exists a trade-off between the generality of the graph sequences that can be tolerated, and the information required to do so. 
\end{remark}
\begin{remark}
When strong-connectivity is preserved over uniformly bounded intervals, i.e., when $f(t_q) \leq T, \forall t_q\in\mathbb{I}$, for some $T\in\mathbb{N}_{+}$, then our approach leads to bounded estimation errors under bounded disturbances. However, as we will see in Section \ref{sec:disturbances}, this may no longer be the case if the inter-communication intervals grow unbounded, no matter how slowly.
\label{rem:boundederrs}
\end{remark}

\subsection{Implications of growing inter-communication intervals under bounded disturbances}
\label{sec:disturbances}
While Theorem \ref{thm:main1} shows that estimation is possible under growing inter-communication intervals, the goal of this section is to demonstrate via a simple example that this may no longer be true in the presence of disturbances. To this end, consider a scalar, unstable LTI system $x[k+1]=ax[k]+d$, where $a>1$, and $d>0$ is a disturbance input to the system. The network comprises of just 2 nodes: node 1 with measurement model $y_1[k]=c_1x[k], c_1\neq 0$, and node 2 with no measurements. Now consider an increasing sequence of time-steps $\mathbb{I}=\{t_0, t_1, \ldots \}$ with $t_0=0$, and let $f(t_q)=t_{q+1}-t_q, \forall t_q\in\mathbb{I}$ be a non-decreasing function of its argument, as in Section \ref{sec:probform}. Suppose the communication pattern comprises of an edge from node 1 to node 2 precisely at the time-steps given by $\mathbb{I}$. Node 1 maintains a standard Luenberger observer given by $\hat{x}_1[k+1]=a\hat{x}_1[k]+l_1(y_1[k]-c_1\hat{x}_1[k])$, where $l_1$ is the observer gain. Node 2 applies Algorithm \ref{algo:main}, which, in this case, translates to node 2 adopting the estimate of node 1 at each time-step $t_q$, and running open-loop at all other time-steps. Accordingly, we have $\hat{x}_2[t_{q+1}]=a^{f(t_q)}\hat{x}_1[t_q], \forall t_q\in\mathbb{I}$. With $e_i[k]=x[k]-\hat{x}_i[k], i\in\{1,2\}$, one can then easily verify:
\begin{equation}
\begin{aligned}
    e_1[k+1]&=\gamma e_1[k]+d, \forall k\in\mathbb{N},\\ 
    e_2[t_{q+1}]&=a^{f(t_q)}e_1[t_q]+d\frac{(a^{f(t_q)}-1)}{(a-1)}, \forall t_q\in\mathbb{I},
\end{aligned}
\label{eqn:unboundederr}
\end{equation}
where $\gamma=(a-l_1c_1)$. Now consider a scenario where the inter-communication intervals grow unbounded, i.e., $f(t_q)\rightarrow \infty$ as $t_q\rightarrow \infty$. Since $a>1$ and $d>0$, it is clear from \eqref{eqn:unboundederr} that the error subsequence $e_2[t_q],t_q\in\mathbb{I}$ will grow unbounded even if node 1 chooses $l_1$ such that $\gamma=0$. For the specific example under consideration, although the above arguments were constructed w.r.t. our algorithm, it seems unlikely that the final conclusion would change if one were to resort to some other approach.\footnote{Note that we are only considering single-time-scale algorithms where nodes are not allowed to exchange their measurements. Also, we assume here that the nodes have no knowledge about the nature of the disturbance $d$, thereby precluding the use of any disturbance-rejection technique.} The discussions in Section \ref{sec:results_statements} can be thus summarized as follows. 
\begin{itemize}
    \item For a noiseless, disturbance free LTI system of the form \eqref{eqn:system}, one can achieve exponential convergence at any desired rate, and even finite-time convergence based on Algorithm \ref{algo:main}, under remarkably mild assumptions: joint observability, and joint strong-connectivity over intervals that can potentially grow unbounded. 
    \item For an unstable system, any non-zero persistent disturbance, however small, can lead to unbounded estimation errors when the inter-communication intervals grow unbounded, no matter how slowly. Note however from Eq.  \eqref{eqn:unboundederr} that our approach leads to bounded estimation errors under bounded disturbances if the sequence $\{f(t_q)\}_{t_q\in\mathbb{I}}$ is uniformly bounded above (see Remark \ref{rem:boundederrs}).
\end{itemize}
In light of the above points, the reasons for stating our results in full generality, i.e., for unbounded communication intervals,  are as follows. First, we do so for theoretical interest, since we believe our work is the first to establish that the distributed state estimation problem can be solved with growing inter-communication intervals. Second, we essentially get this result for ``free", i.e., accounting for such general scenarios incurs no additional steps in terms of the design of Algorithm  \ref{algo:main}. Finally, we emphasize that, while no existing approach can even handle the case where strong-connectivity is preserved over uniformly bounded time intervals (i.e., $\exists \hspace{0.5mm}T\in\mathbb{N}_{+}$ such that $f(t_q) \leq T, \forall t_q\in\mathbb{I}$), the analysis for this scenario is simply a special case of that in Appendix  \ref{subsec:proofs}.
\section{Resilient Distributed State Estimation over Time-Varying Networks}
\label{sec:advtimevar}
We now consider a scenario where a subset of agents in the network is adversarial, and can deviate from the prescribed algorithm. We will show how one can employ the notion of freshness-indices to account for such adversarial agents over a time-varying network. To avoid cumbersome notation and to present the key ideas in a clear way, we will consider a scalar LTI system of the form $x[k+1]=ax[k]$. Later, we will discuss how our approach can be naturally extended to more general system models. We consider a  worst-case Byzantine adversary model, where an adversarial node is assumed to be omniscient, and can essentially act arbitrarily: it can transmit incorrect, potentially inconsistent information to its instantaneous out-neighbors, or choose not to transmit anything at all, while colluding with other adversaries in the process \cite{Byz,rescons,vaidyacons}. We will focus on an $f$-total adversarial model where the total number of adversaries in the network is bounded above by $f$, with  $f\in\mathbb{N}$. The adversarial set will be denoted by $\mathcal{A}$, and the set of regular nodes by $\mathcal{R}=\mathcal{V}\setminus\mathcal{A}$. Finally, for the scalar model under consideration, we will define the set of source nodes as follows: $\mathcal{S}=\{i\in\mathcal{V}: c_i\neq 0,\,\textrm{where}\, y_i[k]=c_ix[k]\}$, i.e., $i\in\mathcal{S}$ if and only if it can measure the state on its own. Note that we will allow $\mathcal{S}\cap\mathcal{A}\neq \emptyset$.
\subsection{Description of Algorithm \ref{algo:resilient}}
\begin{algorithm}[htbp]
	\caption{}
	\label{algo:resilient}
	\begin{algorithmic}[1]
	\State \textbf{Initialization:} For each $i\in\mathcal{V}\setminus\mathcal{S}$, $\tau_i[0]=\omega$; each entry of $\mathbf{v}_i[0],\mathbf{d}_i[0]$, and $\boldsymbol{\phi}_i[0]$ is empty; $\mathcal{M}_i[0]=\emptyset$; and $q_i[0]=0$. For each $i\in\mathcal{S}$, $\tau_i[0]=0$. 
   \State \textbf{Update Rules for Source Nodes:} 
	Each $i\in\mathcal{S}$ maintains $\tau_i[k]=0, \forall k\in\mathbb{N}$. It updates $\hat{x}_i[k]$ based on the following Luenberger observer, where $l_i$ is the observer gain:
	\begin{align}
     \hat{x}_i[k+1]=a\hat{x}_i[k]+l_i(y_i[k]-c_i\hat{x}_i[k]). 
     \label{eqn:algo2luen}
    \end{align}
\State \textbf{Update Rules for Non-Source Nodes:}  At each $k\in\mathbb{N}$, every non-source node $i\in\mathcal{V}\setminus\mathcal{S}$ operates as follows (lines 4-18). 
\State \underline{\textbf{Case 1: $\tau^{}_i[k]=\omega$}}. Define $\mathcal{J}_i[k] \triangleq \{j\in\mathcal{N}_i[k]: \tau_j[k] \neq \omega, \tau_j[k]\in\mathbb{N}, \tau_j[k] \leq k \}$. 
\If{$|\mathcal{J}_i[k]\setminus\mathcal{M}_i[k]| < (2f+1)-q_i[k]$}
\State Append each $l\in\mathcal{J}_i[k]\setminus\mathcal{M}_i[k]$ to $\mathcal{M}_i[k]$ \Comment{\textcolor{myblue}{This involves adding $l$ to $\mathcal{M}_i[k]$, and setting $v_{i,l}[k]=\hat{x}_l[k]$, $d_{i,l}[k]=\tau_l[k]$,  and $\phi_{i,l}[k]=k.$}}
\State Perform the following updates.
\begin{equation}
    q_i[k+1]=q_i[k]+|\mathcal{J}_i[k]\setminus\mathcal{M}_i[k]|.
    \label{eqn:c_update}
\end{equation}
\begin{equation}
    d_{i,l}[k+1]=d_{i,l}[k]+1, l\in\mathcal{M}_i[k].
\label{eqn:d_update}
\end{equation}
\begin{equation}
    \tau_i[k+1]=\omega.
\label{eqn:algo2tauomega}
\end{equation}
\begin{equation}
    \hat{x}_i[k+1]=a\hat{x}_i[k].
\label{eqn:algo2openloop}
\end{equation}
\Else
\State Sort the nodes in $\mathcal{J}_i[k]\setminus\mathcal{M}_i[k]$ in ascending order of their freshness-indices $\tau_l[k], l\in \mathcal{J}_i[k]\setminus\mathcal{M}_i[k] $. Append the first $(2f+1)-q_i[k]$ nodes in the resulting list to $\mathcal{M}_i[k]$. 
\State Set $q_i[\tau]=2f+1, \forall \tau \geq k+1$; update $d_{i,l}[k], l\in\mathcal{M}_i[k]$ as per \eqref{eqn:d_update}; update $\tau_i[k]$ as follows:
\begin{equation}
    \tau_i[k+1]= \max_{l\in\mathcal{M}_i[k]} d_{i,l}[k]+1.
\label{eqn:tau_update}
\end{equation}
\State For each $l\in\mathcal{M}_i[k]$, form the following quantity:
\begin{equation}
    \bar{x}_{i,l}[k]=a^{(k-\phi_{i,l}[k])}\hat{x}_l[\phi_{i,l}[k]].
    \label{eqn:barxil}
\end{equation}
\State Sort $\bar{x}_{i,l}[k],l\in\mathcal{M}_i[k]$ from highest to lowest, and reject the highest $f$ and the lowest $f$ of such quantities. Let the quantity that remains after such trimming be denoted $\bar{x}_i[k].$ Update $\hat{x}_i[k]$ as follows:
\begin{equation}
    \hat{x}_i[k+1]=a\bar{x}_i[k].
\label{eqn:algo2filterstep}
\end{equation}
\EndIf
\State Set $\mathcal{M}_i[k+1]=\mathcal{M}_i[k],\mathbf{v}_i[k+1]=\mathbf{v}_i[k]$, and $\boldsymbol{\phi}_i[k+1]=\boldsymbol{\phi}_i[k]$.
\State \underline{\textbf{Case 2: $\tau^{}_i[k]\neq\omega$}}. For each $l\in\mathcal{J}_i[k]\cap\mathcal{M}_i[k]$, if $\tau_l[k] < d_{i,l}[k]$, then set $v_{i,l}[k]=\hat{x}_l[k]$, $d_{i,l}[k]=\tau_l[k]$,  and $\phi_{i,l}[k]=k$ (we will classify this as an append operation). 
\State Sort the nodes in  $\mathcal{M}_i[k]\cup \{\mathcal{J}_i[k]\setminus\mathcal{M}_i[k]\}$ in ascending order of their freshness-indices, using 
 $d_{i,l}[k]$ as the index for $l\in\mathcal{M}_i[k]$, and  $\tau_l[k]$ as the index for $l\in\mathcal{J}_i[k]\setminus\mathcal{M}_i[k]$. Append the first $2f+1$ nodes in the resulting list to $\mathcal{M}_i[k].$ 
 \State For each $l\in\mathcal{M}_i[k]$, update $d_{i,l}[k]$ as per \eqref{eqn:d_update}. Update $\tau_i[k]$ as per \eqref{eqn:tau_update}, and $\hat{x}_i[k]$ via \eqref{eqn:algo2filterstep} based on the filtering operation described in lines 11-12. 
 \State Execute the operations in line 14. 
\end{algorithmic}
\end{algorithm}
Following the same line of reasoning as in Section \ref{sec:algo}, we are once again interested in answering the following question: When should a regular non-source node $i\in\{\mathcal{V}\setminus \mathcal{S}\}\cap\mathcal{R}$ use the information of a neighbor to update its estimate of the state? Unlike before, however, the presence of adversaries introduces certain complications: not only can an adversary transmit an arbitrary estimate of the state, it can also lie about its freshness-index. In particular, an adversary can follow the strategy of always reporting a freshness-index of $0$ so as to prompt its instantaneous out-neighbors to use its estimate value. To carefully account for such misbehavior, we devise a novel protocol, namely Algo. \ref{algo:resilient}. In what follows, we explain the main idea behind Algo. \ref{algo:resilient} in a nutshell.

\textbf{High-level idea}: Our goal is to ensure that the state estimate $\hat{x}_i[k]$ of a regular non-source node $i$ is ``close" to those of the regular source set $\mathcal{S}\cap\mathcal{R}$. To achieve this, we would like $\hat{x}_i[k]$ to be updated based on information that is neither too outdated, nor corrupted by the adversarial set. To meet the two requirements above, our main idea is to have node $i$ store and process information received from a dynamic list of neighbors. At each time-step, the list is first updated so as to retain only those nodes that have the most recent state estimates w.r.t. the source set. Subsequently, $\hat{x}_i[k]$ is updated by filtering out extreme estimates in the list in an appropriate manner. We now describe in detail how the above steps are implemented.

\textbf{Detailed Description}: Like Algo. \ref{algo:main}, Algo.  \ref{algo:resilient} requires each node $i$ to maintain a freshness-index $\tau_i[k]\in\mathbb{N}\cup\{\omega\}$. Each source node $i\in\mathcal{S}$ maintains $\tau_i[k]=0$ for all time, and updates $\hat{x}_i[k]$ based on a standard Luenberger observer (see Eq. \eqref{eqn:algo2luen}). Given the presence of adversarial nodes in the network, each regular non-source node $i$ relies on information redundancy for updating its estimate of the state. To achieve this, it  maintains four additional vectors $\mathbf{v}_i[k], \mathbf{d}_i[k]$, $\boldsymbol{\phi}_i[k]$, and $\mathcal{M}_i[k]$, each of dimension $2f+1$. The vector $\mathbf{v}_i[k]$ consists of state estimates received over time from $2f+1$ distinct nodes. The labels of such nodes are stored in a list  $\mathcal{M}_i[k]$, their freshness-indices in $\mathbf{d}_i[k]$, and  the time-stamps associated with their estimates (i.e., the time-step at which their estimate is received) are recorded in $\boldsymbol{\phi}_i[k]$. Algo. \ref{algo:resilient} requires each non-source node $i$ to execute two key steps:  (1) Maintaining a dynamical list $\mathcal{M}_i[k]$ of those $2f+1$ neighbors that have the lowest freshness-indices based on all the information node $i$ has acquired up to time $k$; and (2) Performing a filtering operation to update $\hat{x}_i[k]$ based on the latest state estimates of the nodes in the current list $\mathcal{M}_i[k]$. The second step, however, requires the cardinality of the set $\mathcal{M}_i[k]$ to be $2f+1$. Thus, Algo \ref{algo:resilient} involves an initial  pre-filtering phase (lines 4-7) where a non-source node simply gathers enough estimates to later act on.

$\bullet$ \textit{Discussion of Case 1}: We first describe the rules associated with the pre-filtering phase. Initially, each entry of $\mathbf{v}_i[0],\mathbf{d}_i[0]$, and $\boldsymbol{\phi}_i[0]$ is empty, and $\mathcal{M}_i[k]$ is an empty list (line 1). As time progresses, node $i$ adds distinct nodes to the list $\mathcal{M}_i[k]$ based on rules that we will discuss shortly. Until the time when  $|\mathcal{M}_i[k]|=2f+1$, node $i$ maintains $\tau_i[k]=\omega$, and uses a counter $q_i[k]$ to keep track of the number of entries in $\mathcal{M}_i[k]$. When $\tau_i[k]=\omega$, node $i$ operates as follows. It first considers the subset of neighbors $\mathcal{J}_i[k]$ at time $k$ that have freshness-indices other than $\omega$, belonging to $\mathbb{N}$, and at most $k$ (line 4).\footnote{Note that in the absence of adversaries, any node $j$ following Algo. \ref{algo:main} would satisfy $\tau_j[k]\in\mathbb{N}$ and $\tau_j[k] \leq k$, whenever $\tau_j[k]\neq \omega$. We would like the same to hold for any regular node following Algo. \ref{algo:resilient}. Thus, any neighbor reporting otherwise need not be considered for inclusion in $\mathcal{M}_i[k].$} In line 5 of Algo. \ref{algo:resilient}, node $i$ checks whether there are enough new nodes (i.e., nodes different from those already in $\mathcal{M}_i[k]$) in $\mathcal{J}_i[k]$ so as to bring the number of distinct entries in $\mathcal{M}_i[k]$ up to $(2f+1)$. If not, it ``appends" each node $l\in\mathcal{J}_i[k]\setminus\mathcal{M}_i[k]$ to $\mathcal{M}_i[k]$. By this, we mean that it adds $l$ to $\mathcal{M}_i[k]$, sets $v_{i,l}[k]=\hat{x}_l[k]$, $d_{i,l}[k]=\tau_l[k]$,  and $\phi_{i,l}[k]=k$ (see line 6). Here, we use the double subscript $i,l$ to indicate the $i$-th node's record of the various quantities associated with a node $l\in\mathcal{M}_i[k]$. Node $i$ keeps track of the number of new nodes appended via \eqref{eqn:c_update}. The entry $d_{i,l}[k]$ is the $i$-th node's internal copy of the freshness-index of node $l$, which it updates via \eqref{eqn:d_update} for future comparisons (such as those in lines 15 and 16). As an indicator of the fact that it has not yet acquired $2f+1$ state estimates to perform a filtering operation, node $i$ sets $\tau_i[k+1]=\omega$ (see Eq. \eqref{eqn:algo2tauomega}), and runs open-loop via \eqref{eqn:algo2openloop}. 

Let us now discuss the case when $\tau_i[k]$ transitions from $\omega$ to some value other than $\omega$ (lines 8-12). At the transition time-step, node $i$ appends those nodes from $\mathcal{J}_i[k]\setminus\mathcal{M}_i[k]$ to $\mathcal{M}_i[k]$ that have the lowest freshness-indices (see line 9). It does so in a way such that $|\mathcal{M}_i[k]|$ is precisely $2f+1$. Since node $i$ has now acquired enough estimates to perform the filtering step, it maintains $q_i[\tau]=2f+1$ for all $\tau \geq k+1$ (see line 10). As in the non-adversarial setting, the freshness-index of a node $i$ is a measure of how delayed its estimate is w.r.t. that of the source set $\mathcal{S}$. Since node $i$'s estimate in turn is updated based on the estimates of nodes in $\mathcal{M}_i[k]$, its freshness-index is essentially dictated by the largest entry in $\mathbf{d}_i[k]$, i.e., the entry corresponding to the most delayed estimate. This facet is captured by \eqref{eqn:tau_update}. Lines 11-12 of Algo. \ref{algo:resilient} constitute the filtering step that a node $i\in\mathcal{V}\setminus\mathcal{S}$ employs to update $\hat{x}_i[k]$. Since the latest estimates of nodes in $\mathcal{M}_i[k]$ might have different time-stamps, we need a way to make  meaningful comparisons between them. To this end, for each $l\in\mathcal{M}_i[k]$, node $i$ constructs an intermediate quantity $\bar{x}_{i,l}[k]$ by propagating forward the latest estimate it has obtained from node $l$, namely $\hat{x}_l[\phi_{i,l}[k]]$, from time $\phi_{i,l}[k]$ to time $k$ (see Eq. \eqref{eqn:barxil}).\footnote{One can interpret $\bar{x}_{i,l}[k]$ as node $i$'s prediction of node $l$'s state estimate at time $k$, based on its current information.} Note here that $\phi_{i,l}[k]$ is the latest time-step when node $i$ appended node $l$ to $\mathcal{M}_i[k]$. Having constructed the quantities $\bar{x}_{i,l}[k],l\in\mathcal{M}_i[k]$, node $i$ then rejects the highest $f$ and the lowest $f$ of them (i.e., it filters out extreme values), and uses the one that remains, denoted $\bar{x}_i[k]$,  to update $\hat{x}_i[k]$ via \eqref{eqn:algo2filterstep}. 

$\bullet$ \textit{Discussion of Case 2}: The rules in lines 15-18 are essentially the same as those just discussed above for lines 8-12. The key difference between them stems from the manner in which  $\mathcal{M}_i[k]$ is updated. In particular, when $\tau_i[k]=\omega$, or when $\tau_i[k]$ transitions from $\omega$ to a value other than $\omega$, node $i$ is less selective in terms of which nodes to include in $\mathcal{M}_i[k]$; at this point, its main concern is in gathering enough estimates for implementing the filtering step. In contrast, when $\tau_i[k]\neq\omega$, node $i$ carefully examines the freshness-indices of its instantaneous neighbors prior to a potential inclusion in $\mathcal{M}_i[k]$. This is done in two steps. First, if it comes in contact with a node $l$ that already exists in $\mathcal{M}_i[k]$, then it checks whether node $l$ has fresher information to offer than when it was last appended to $\mathcal{M}_i[k]$ (see line 15). If so, node $i$ replaces the entries corresponding to node $l$ with their more recent versions. Next, in line 16, node $i$ considers the set $\mathcal{M}_i[k]\cup \{\mathcal{J}_i[k]\setminus\mathcal{M}_i[k]\}$, and appends/retains only those $2f+1$ nodes that have the lowest freshness-indices.\footnote{In implementing this step, suppose a node $l$ already existing in $\mathcal{M}_i[k]$ gets retained in $\mathcal{M}_i[k]$. Suppose the information concerning node $l$ was stored in the $p$-th components of $\mathbf{d}_i[k],\mathbf{v}_i[k]$ and $\boldsymbol{\phi}_i[k]$. Then, node $i$ continues to store node $l$'s information in the $p$-th components of the above vectors. This is done only to simplify some of the arguments (see proof of Lemma \ref{lemma:algo2dtbounds}). \label{footnt:dstore}} The rationale behind breaking up the ``appending" operation into two steps (lines 15-16) is to avoid the possibility of having multiple appearances of a single node in $\mathcal{M}_i[k]$. Note that for both Case 1 and Case 2, the values of $\mathcal{M}_i,\mathbf{v}_i$ and $\boldsymbol{\phi}_i$ at the beginning of time-step $k+1$ are initialized with their values at the end of time-step $k$ (lines 14 and 18); their values at the end of time-step $k+1$ will naturally depend on the new information available to node $i$ at time $k+1$. Finally, we emphasize that the rules of Algo. \ref{algo:resilient} apply only to the regular nodes; an adversary can deviate from them in arbitrary ways.
\section{Performance Guarantees for Algorithm \ref{algo:resilient}}
In order to state our main result concerning the performance of Algorithm \ref{algo:resilient}, we need to recall the following concepts.
\begin{definition}($r$-\textbf{reachable set}) \cite{rescons} Given a graph $\mathcal{G}=(\mathcal{V,E})$, a set $\mathcal{C} \subseteq \mathcal{V}$, and an integer $r \in \mathbb{N}_{+}$, $\mathcal{C}$ is said to be an \textit{$r$-reachable set} if $\exists i \in \mathcal{C}$ such that $|\mathcal{N}_i \setminus \mathcal{C}| \geq r$. 
\end{definition}

\begin{definition}(\textbf{Strongly} $r$-\textbf{robust graph} \textit{w.r.t.} $\mathcal{S}$) \cite{mitraAuto} Given a graph $\mathcal{G}=(\mathcal{V,E})$, a set $\mathcal{S} \subset \mathcal{V}$, and an integer $r \in \mathbb{N}_{+}$, $\mathcal{G}$ is said to be  \textit{strongly $r$-robust w.r.t. $\mathcal{S}$} if any non-empty subset $\mathcal{C} \subseteq \mathcal{V}\setminus\mathcal{S}$ is $r$-reachable.
\label{defn:strongrobust}
\end{definition}

Based on the above concepts, we now introduce the key graph property that will be of primary importance to us. 
\begin{definition}\textbf{(Joint strong $r$-robustness w.r.t. $\mathcal{S}$)} Given an integer $r \in \mathbb{N}_{+}$, and a set $\mathcal{S} \subset \mathcal{V}$, a sequence of graphs $\{\mathcal{G}[k]\}_{k=0}^{\infty}$ is said to be jointly strongly $r$-robust w.r.t. $\mathcal{S}$ if there exists ${T}\in\mathbb{N}_{+}$ such that $\bigcup\limits_{\tau=kT}^{(k+1)T-1}\mathcal{G}[\tau]$ is strongly $r$-robust w.r.t. $\mathcal{S}, \forall k\in\mathbb{N}$.
\label{def:jointrobust}
\end{definition}

The main result of this section is as follows.
\begin{theorem}
Consider a scalar LTI system of the form \eqref{eqn:system}, and a measurement model of the form \eqref{eqn:Obsmodel}. Let the sequence of communication graphs $\{\mathcal{G}[k]\}_{k=0}^{\infty}$ be jointly strongly $(3f+1)$-robust w.r.t. the source set $\mathcal{S}$. Then, based on Algorithm \ref{algo:resilient}, the estimation error of each node $i\in\mathcal{R}$ can be made to converge to zero exponentially fast at any desired rate $\rho$, despite the actions of any $f$-total Byzantine adversarial set. 
\label{thm:resilient}
\end{theorem}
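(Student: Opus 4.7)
The plan is to reduce Theorem~\ref{thm:resilient} to a resilient information-propagation argument over nested subsets of regular nodes, in the spirit of W-MSR analyses, adapted to the age-of-information structure of Algorithm~\ref{algo:resilient}. For each regular source $i\in\mathcal{S}\cap\mathcal{R}$, since $c_i\neq 0$ I would pick the Luenberger gain $l_i$ so that $|a-l_ic_i|\leq\rho$, immediately giving $|e_i[k]|\leq\rho^k|e_i[0]|$. The task is then to lift this bound to all of $\mathcal{R}$.

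Before the induction I would prove two structural lemmas. The first is bookkeeping: for every regular $i$ and every regular $l\in\mathcal{M}_i[k]$, the stored entries satisfy $v_{i,l}[k]=\hat{x}_l[\phi_{i,l}[k]]$, and $d_{i,l}[k]=\tau_l[\phi_{i,l}[k]]+(k-\phi_{i,l}[k])$; combined with \eqref{eqn:tau_update} this yields a uniform bound of the form $k-\phi_{i,l}[k]\leq\tau_i[k]$ for the regular entries of $\mathcal{M}_i[k]$. The second is the standard sandwich property of the trim step in lines~11--12 of Algorithm~\ref{algo:resilient}: since $|\mathcal{M}_i[k]|=2f+1$ with at most $f$ adversarial entries, pigeonhole forces at least one regular entry among the top $f+1$ sorted positions and at least one among the bottom $f+1$, so the retained median $\bar{x}_i[k]$ obeys $\bar{x}_{i,l_{\mathrm{lo}}}[k]\leq\bar{x}_i[k]\leq\bar{x}_{i,l_{\mathrm{hi}}}[k]$ for some regular $l_{\mathrm{lo}},l_{\mathrm{hi}}\in\mathcal{M}_i[k]\cap\mathcal{R}$. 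Using the identity $x[k]-\bar{x}_{i,l}[k]=a^{k-\phi_{i,l}[k]}e_l[\phi_{i,l}[k]]$, this turns the update \eqref{eqn:algo2filterstep} into the key recursion
\[
|e_i[k+1]|\ \leq\ |a|\,\max_{l\in\{l_{\mathrm{lo}},\,l_{\mathrm{hi}}\}}|a|^{k-\phi_{i,l}[k]}\,|e_l[\phi_{i,l}[k]]|.
\]

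The core of the proof is an induction along a chain $\mathcal{S}_0\subset\mathcal{S}_1\subset\cdots\subset\mathcal{S}_L=\mathcal{R}$ with $\mathcal{S}_0=\mathcal{S}\cap\mathcal{R}$, built from Definition~\ref{def:jointrobust}: strong $(3f+1)$-robustness w.r.t.\ $\mathcal{S}$ of each window-union $\bigcup_{\tau=kT}^{(k+1)T-1}\mathcal{G}[\tau]$ guarantees that some node in $\mathcal{R}\setminus\mathcal{S}_p$ has at least $3f+1$ in-neighbors in $\mathcal{S}_p$ within that window, hence at least $2f+1$ regular in-neighbors in $\mathcal{S}_p$ after discounting up to $f$ adversaries. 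The prioritization-by-lowest-freshness-index rule in line~16 of Algorithm~\ref{algo:resilient} then ensures that, after a bounded number $\Delta=\Delta(N,T,f)$ of windows, for every $i\in\mathcal{S}_{p+1}\setminus\mathcal{S}_p$ the list $\mathcal{M}_i[k]$ contains enough entries from $\mathcal{S}_p$ with time-stamps satisfying $\phi_{i,l}[k]\geq k-\Delta$ that the sandwich above applies with both $l_{\mathrm{lo}},l_{\mathrm{hi}}\in\mathcal{S}_p$. Assuming inductively that $|e_l[k]|\leq C_p\rho^k$ for all $l\in\mathcal{S}_p$, the recursion yields $|e_i[k+1]|\leq |a|^{\Delta+1}\rho^{-\Delta}C_p\rho^{k}$, which is of the form $C_{p+1}\rho^{k+1}$ with $C_{p+1}=|a|^{\Delta+1}\rho^{-(\Delta+1)}C_p$, closing the induction; since $L\leq N$, exponential convergence of every regular node at the chosen rate $\rho$ follows.

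The main obstacle I anticipate is the transient bootstrap phase in which $|\mathcal{M}_i[k]|$ has not yet reached $2f+1$. In Case~1 (lines~4--7 of Algorithm~\ref{algo:resilient}) node $i$ runs open-loop and indiscriminately appends \emph{any} neighbor reporting a finite $\tau$-value, so adversaries can plant themselves into $\mathcal{M}_i[k]$ by fabricating arbitrarily small freshness-indices and subsequently cannot be evicted (the list only grows while $\tau_i[k]=\omega$). The delicate argument is that (i)~within $\Delta$ windows after $i$ first becomes reachable from $\mathcal{S}_p$, enough regular $\mathcal{S}_p$-entries are also acquired so that the trim step has a sufficient regular core on both sides of the median; and (ii)~once Case~2 engages, the prioritization in line~16 keeps those regular $\mathcal{S}_p$-entries among the lowest freshness-indices, forcing any surviving adversarial entry into an extreme sorted position where it is trimmed. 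Producing an explicit, uniform $\Delta$ that absorbs this bootstrap, and making the rate bound hold for every $k$ rather than only eventually, is the heart of the technical work.
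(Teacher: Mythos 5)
Your ingredients are the right ones --- the bookkeeping identity $d_{i,l}[k]=\tau_l[\phi_{i,l}[k]]+(k-\phi_{i,l}[k])$ is exactly the paper's \eqref{eqn:balance}, the sandwich property of the trim step is the paper's \eqref{eqn:convhull}, and the bound on freshness-indices via joint strong $(3f+1)$-robustness is Lemma \ref{lemma:algo2counterinit} --- but your main induction is organized over the wrong variable, and as stated it does not close. You induct over nested subsets $\mathcal{S}_0\subset\cdots\subset\mathcal{S}_L=\mathcal{R}$ and need the two surviving regular entries $l_{\mathrm{lo}},l_{\mathrm{hi}}$ of the trim to lie in $\mathcal{S}_p$. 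The pigeonhole argument only guarantees that they are \emph{regular}; nothing forces them into $\mathcal{S}_p$. Line 16 of Algorithm \ref{algo:resilient} retains the $2f+1$ nodes with the \emph{lowest freshness-indices}, so a regular node outside $\mathcal{S}_p$ (e.g., a non-source node that happens to have fresher information than some node of $\mathcal{S}_p$) can legitimately displace $\mathcal{S}_p$-entries from $\mathcal{M}_i[k]$ and survive the trim. Its error is not covered by your induction hypothesis, so the recursion $|e_i[k+1]|\leq|a|^{\Delta+1}\rho^{-\Delta}C_p\rho^k$ does not follow. The robustness condition gives you $2f{+}1$ regular $\mathcal{S}_p$-in-neighbors over a window, but it does not give you that these are the nodes \emph{retained} in $\mathcal{M}_i[k]$, which is what your sandwich needs.

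The fix is essentially the paper's Lemma \ref{lemma:inclusion}: induct on the \emph{value of the freshness-index} rather than on a subset chain. One shows that $\tau_i[k]=m$ implies $\hat{x}_i[k]\in Conv\bigl(\bigcup_{r=1}^{m}\Omega^{(r)}[k]\bigr)$, where $\Omega^{(r)}[k]$ collects the $r$-step forward propagations $a^r\hat{x}_s[k-r]$ of the \emph{regular source} estimates only. The point is that \eqref{eqn:tau_update} together with \eqref{eqn:balance} forces every regular entry $l$ of $\mathcal{M}_i[k-1]$ to have had freshness-index at most $m-1$ at the time it was last appended, so the induction hypothesis applies to \emph{whichever} regular nodes survive the trim --- no membership in a prescribed subset is needed. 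The error of every regular non-source node then becomes a convex combination of delayed regular-source errors with delay at most $\tau_i[k]\leq 2(N-|\mathcal{S}|)T$ (Lemma \ref{lemma:algo2counterinit}), giving $|e_i[k]|\leq\alpha(|a|/\rho)^{2(N-|\mathcal{S}|)T}\rho^k$ in one step, with no compounding of constants over $N$ levels of induction. Your worry about the bootstrap phase is also overcautious: adversaries planted in $\mathcal{M}_i[k]$ during Case 1 never need to be evicted, since $|\mathcal{M}_i[k]|=2f+1$ with at most $f$ adversarial entries is all the trim requires.
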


The proof of the above theorem is deferred to Appendix \ref{sec:proofThm2}. Prior to that, a few remarks are in order.

\begin{remark}
We briefly point out how the above development can be easily generalized to vector dynamical systems. When the system matrix $\mathbf{A}$ contains real, distinct eigenvalues, one can first diagonalize the system. In the new coordinate frame, each state component is a scalar that can be treated analogously as in Algorithm \ref{algo:resilient}. To deal with the case when  $\mathbf{A}$ has arbitrary spectrum, we can first transform $\mathbf{A}$ to its real Jordan canonical form, and then combine ideas from \cite{mitraAuto} and Section \ref{sec:advtimevar}. In each of the above cases, if (i) every unstable mode (eigenvalue) has its own set of source nodes that can observe it, and (ii) joint strong $(3f+1)$-robustness holds w.r.t. each such source set, our results will go through.  
\end{remark}

\begin{remark}
While in Def. \ref{def:jointrobust}, we required the strong-robustness property to be preserved over intervals of constant length $T$, we can easily allow for such intervals to grow linearly as well, in the spirit of condition (C2) in Section \ref{sec:probform}. 
\end{remark}

\begin{remark}
Following the proof of Theorem \ref{thm:resilient} in Section \ref{sec:proofThm2}, it is easy to see that one can obtain finite-time convergence in the adversarial setting as well. To do so, each regular source node $i$ needs to simply choose its observer gain $l_i$ in \eqref{eqn:algo2luen} so that $a-l_ic_i=0$. One can then verify that it would take at most $2(N-|\mathcal{S}|)T+1$ time-steps for the errors of all regular nodes to converge to $0$.
\end{remark}

\begin{remark}
In our prior work \cite{mitraAR}, we studied a restrictive class of time-varying networks, where the restrictiveness arose from the fact that each regular non-source node was constrained ahead of time to only ever listen to certain specific nodes in the network. The main challenge that we address in this paper is to get rid of this constraint, thereby allowing for far more general graph sequences. This, in turn, requires a node to decide in real-time which neighbors to pay attention to (i.e., append to its list) as in Algorithm \ref{algo:resilient}. 
\end{remark}

\begin{remark}
The reason we require joint strong $(3f+1)$-robustness as opposed to joint strong $(2f+1)$-robustness is as follows. Consider a scenario where there are precisely $2f+1$ source nodes, $f$ of whom are adversarial, and there is exactly one non-source node $i$. Suppose the graph sequence is jointly strongly $(2f+1)$-robust w.r.t. $\mathcal{S}$, i.e., each source node will be in a position to transmit information to node $i$ over every interval of the form $[kT,(k+1)T), k\in\mathbb{N}$. Suppose the $f$ adversaries do not transmit at all. Node $i$ will never be able to attribute such a phenomenon to adversarial behavior (since the lack of information from the adversarial nodes could be due to absence of links in the time-varying graph). Thus, $\tau_i[k]$ will equal $\omega$ for all time, and node $i$ will keep running open-loop forever, thereby causing Algorithm  \ref{algo:resilient} to fail.  
\end{remark}

\section{Conclusion}
We proposed a novel approach to the design of distributed observers for LTI systems over time-varying networks. We proved that our algorithm guarantees exponential convergence at any desired rate (including finite-time convergence) under graph-theoretic conditions that are far milder than those existing. In particular, we showed that these results hold even when the inter-communication intervals grow unbounded over time. We then extended our framework to account for the possibility of worst-case adversarial attacks. In terms of future research directions, it would be interesting to explore the performance of our algorithm when the underlying network changes stochastically, as in \cite{alireza}. We believe that the notion of a ``freshness-index", as employed in this paper, should be applicable to other related classes of problems that involve some form of collaborative estimation or inference - investigations along this line also merit attention. 

\appendices
\section{Proof of Theorem \ref{thm:main1}}
\label{subsec:proofs}
The goal of this section is to prove Theorem \ref{thm:main1}. Before delving into the technical details, we first provide an informal discussion of the main ideas underlying the proof of Theorem \ref{thm:main1}. To this end, let us fix a sub-state $j\in\{1,\ldots,N\}$. The starting point of our analysis is Lemma \ref{lemma:form} which establishes that for any non-source node $i\in\mathcal{V}\setminus\{j\}$, its error in estimation of sub-state $j$ at time-step $k$ can be expressed as a delayed version of the corresponding error of the source node $j$, where the delay is precisely the freshness-index $\mathcal{\tau}^{(j)}_i[k].$ Given this result, we focus on bounding the delay $\mathcal{\tau}^{(j)}_i[k]$ by exploiting the graph connectivity condition (C3). This is achieved in Lemma \ref{lemma:counterinit} where we first establish that $\mathcal{\tau}^{(j)}_i[k]$ gets triggered after a finite period of time, and then show that it can be bounded above by the function $\tilde{g}(k)=2(N-1)g(k)$, where $g(k)$ is as defined in Section \ref{sec:probform}. At this point, we appeal to condition (C2) (which caps the rate of growth of $\tilde{g}(k)$) in designing the observer gain $\mathbf{L}_j$ at node $j$. Specifically, in the proof of Theorem \ref{thm:main1}, we carefully design $\mathbf{L}_j$ such that despite a potentially growing delay, every non-source node $i\in\mathcal{V}\setminus\{j\}$ inherits the same exponential convergence to the true dynamics $\mathbf{z}^{(j)}[k]$ as that achieved by the corresponding source node $j$. With these ideas in place, we first prove a simple result that will be helpful later on; it states that a non-source node for a certain sub-state will always adopt the information of the corresponding source node, whenever it is in a position to do so. 

\begin{lemma}
Consider any sub-state $j$, and suppose that at some time-step $k$, we have  $j\in\mathcal{N}_i[k]$, for some $i\in\mathcal{V}\setminus\{j\}$. Then, based on Algorithm \ref{algo:main}, we have:
\begin{enumerate}
    \item[(i)] If $\tau^{(j)}_i[k]=\omega$, then $j=\argmin_{l\in\mathcal{M}^{(j)}_i[k]} \tau^{(j)}_l[k]$.
    \item[(ii)] If $\tau^{(j)}_i[k]\neq\omega$, then $j=\argmin_{l\in\mathcal{F}^{(j)}_i[k]} \tau^{(j)}_l[k]$.
\end{enumerate}
\label{lemma:sourceuse}
\end{lemma}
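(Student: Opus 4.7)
The plan is to exploit the structural fact that, by construction (line 2 of Algorithm \ref{algo:main}), the source node $j$ maintains $\tau^{(j)}_j[k]=0$ for all $k\in\mathbb{N}$. Since freshness-indices take values in $\{\omega\}\cup\mathbb{N}$, the value $0$ is the smallest finite index possible, so whenever $j$ belongs to one of the candidate sets in Algorithm \ref{algo:main}, it automatically attains the argmin. The entire proof thus reduces to checking, in each of the two cases, that $j$ is an eligible member of the relevant set.

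For part (i), I would simply unpack the definition $\mathcal{M}^{(j)}_i[k]=\{l\in\mathcal{N}_i[k]:\tau^{(j)}_l[k]\neq\omega\}$. Since $j\in\mathcal{N}_i[k]$ by hypothesis and $\tau^{(j)}_j[k]=0\neq\omega$, it follows directly that $j\in\mathcal{M}^{(j)}_i[k]$, and as every index appearing in $\mathcal{M}^{(j)}_i[k]$ is a non-negative integer, $j$ attains the minimum.

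For part (ii), the additional observation needed is that once a non-source node $i\in\mathcal{V}\setminus\{j\}$ has its freshness-index triggered (i.e., moved away from $\omega$), it remains strictly positive thereafter. I would establish this by a short induction on $k$: any value other than $\omega$ assigned to $\tau^{(j)}_i[k+1]$ arises either from Eq. \eqref{eqn:indexupdatecase11} as $\tau^{(j)}_u[k]+1$, or from Eq. \eqref{eqn:indexupdatecase2} as $\tau^{(j)}_i[k]+1$, each of which is at least $1$ since the summand is a non-negative integer. Combining this with $\tau^{(j)}_j[k]=0$ and $j\in\mathcal{M}^{(j)}_i[k]$ (from the argument in part (i)), I conclude $j\in\mathcal{F}^{(j)}_i[k]$, and once more $j$ attains the argmin because $0$ is minimal among non-negative integers.

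The main obstacle here is essentially nonexistent: once one adopts the viewpoint that node $j$'s freshness-index of $0$ is a global minimum among admissible indices, everything reduces to bookkeeping of the definitions. The only step requiring a small amount of care is the inductive verification that triggered indices are strictly positive, which is immediate from the additive form of the update rules \eqref{eqn:indexupdatecase11} and \eqref{eqn:indexupdatecase2}. This lemma will later be used in conjunction with the connectivity condition (C3) to argue that each non-source node's freshness-index gets triggered in bounded time and can be upper-bounded by the function $\tilde{g}(k)=2(N-1)g(k)$.
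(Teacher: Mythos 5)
Your proof is correct and follows essentially the same route as the paper's, which rests on exactly the two observations you identify: $\tau^{(j)}_j[k]=0$ for all $k$, and $\tau^{(j)}_i[k]\geq 1$ whenever it is not $\omega$ for any non-source node $i$. The only difference is that you spell out the short induction behind the second observation, which the paper leaves implicit.
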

\begin{proof}
The result follows from two simple observations that are direct consequences of the rules of Algorithm \ref{algo:main}: (i) $\tau^{(j)}_j[k]=0, \forall k\in\mathbb{N}$, and (ii) for any $i\in\mathcal{V}\setminus\{j\}$, $\tau^{(j)}_i[k]\geq 1$ whenever $\tau^{(j)}_i[k]\neq\omega$. In other words, the source node for a given sub-state has the lowest freshness-index for that sub-state at all time-steps.
\end{proof}
\begin{lemma}
Suppose all nodes employ Algorithm \ref{algo:main}. Consider any sub-state $j$, and suppose that at some time-step $k$, we have  $\tau^{(j)}_i[k]=m$, where $i\in\mathcal{V}\setminus\{j\}$, and $m\in\mathbb{N}_{+}$. Then, there exist nodes $v(\tau)\in\mathcal{V}\setminus\{j\},\tau\in\{k-m+1,\ldots,k\}$, such that the following is true:
\begin{equation}
\resizebox{0.99\hsize}{!}{$
    \hat{\mathbf{z}}^{(j)}_i[k]=\mathbf{A}^m_{jj}\hat{\mathbf{z}}^{(j)}_j[k-m]+\sum\limits_{q=1}^{(j-1)}\hspace{-1mm}\sum\limits_{\tau=(k-m)}^{(k-1)}\hspace{-2.5mm}\mathbf{A}_{jj}^{(k-\tau-1)}\mathbf{A}_{jq}\hat{\mathbf{z}}^{(q)}_{v(\tau+1)}[\tau].$}
    \label{eqn:delayedform}
\end{equation}
\label{lemma:form}
\end{lemma}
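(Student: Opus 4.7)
The plan is to prove the lemma by induction on the delay $m$, reading off the contribution to $\hat{\mathbf{z}}^{(j)}_i[k]$ one time-step at a time by unwinding the update rules of Algorithm~\ref{algo:main}. The induction is natural because the freshness-index evolves by exactly one unit per time-step (whenever it is finite), so $\tau^{(j)}_i[k] = m$ means that the estimate at node $i$ was freshly obtained from the source exactly $m$ steps ago, and has since been propagated through either adoption steps from neighbors with matching delays, or through open-loop updates.

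For the base case $m = 1$, I would argue that $\tau^{(j)}_i[k] = 1$ can only arise via the adoption rule \eqref{eqn:indexupdatecase11} at time $k-1$, since open-loop would require $\tau^{(j)}_i[k-1] = 0$, which is impossible for $i \neq j$. Thus node $i$ adopted the information of some neighbor $u$ with $\tau^{(j)}_u[k-1] = 0$; since the only node carrying freshness-index zero for sub-state $j$ is the source $j$ itself, $u = j$, and substitution into \eqref{eqn:nonsourcecase1} gives \eqref{eqn:delayedform} with $v(k) := i$ and the single summand $\tau = k-1$ (where $\mathbf{A}_{jj}^{(k-\tau-1)}$ collapses to the identity).

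For the inductive step, supposing the claim holds up to $m$ and given $\tau^{(j)}_i[k] = m+1$, I would split into two cases depending on which rule produced this value. In Case (a), node $i$ adopted from a neighbor $u$ at time $k-1$ with $\tau^{(j)}_u[k-1] = m$; I would apply the induction hypothesis to $u$ at time $k-1$, left-multiply by $\mathbf{A}_{jj}$ as dictated by \eqref{eqn:nonsourcecase1}, and observe that the extra term $\sum_{q=1}^{j-1} \mathbf{A}_{jq} \hat{\mathbf{z}}^{(q)}_i[k-1]$ fits exactly as the $\tau = k-1$ summand of the target expression, yielding the $m+1$ form with $v(k) := i$ and $v(\tau)$ for $\tau \le k-1$ inherited from the inductive hypothesis. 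In Case (b), node $i$ ran open-loop via \eqref{eqn:nonsourcecase2}, forcing $\tau^{(j)}_i[k-1] = m$; applying the induction hypothesis directly to $i$ at time $k-1$ and multiplying through by $\mathbf{A}_{jj}$ gives the same conclusion, again with the trailing sub-state contribution slotting into the $\tau = k-1$ term via $v(k) := i$.

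The only real obstacle is bookkeeping, namely aligning the exponents of $\mathbf{A}_{jj}$ and the summation bounds so that the newly appearing term at each inductive step lines up with the $\tau = k-1$ slot of the claimed expression. The critical observation that makes this work cleanly is that $\mathbf{A}_{jj}^{(k-\tau-1)}|_{\tau=k-1} = \mathbf{I}$, so the fresh sub-state cross-coupling injected at the final time-step appears with exponent zero, exactly as required. No graph-theoretic property or joint-observability is used here; the lemma is a structural fact about how estimates are accumulated under Algorithm~\ref{algo:main}, and will serve in downstream arguments to convert bounds on the freshness-index $\tau^{(j)}_i[k]$ (from Lemma~\ref{lemma:counterinit}) into bounds on the estimation error of non-source nodes.
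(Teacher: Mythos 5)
Your proposal is correct and follows essentially the same route as the paper: induction on $m$, distinguishing the adoption and open-loop cases (the paper treats open-loop as "adopting one's own information," so both are a single case of the chain $v(\tau)$ adopting from $v(\tau-1)$), and unrolling the one-step update \eqref{eqn:nonsourcecase1}/\eqref{eqn:nonsourcecase2}. The only presentational difference is that the paper first establishes the existence of the chain $v(k-m)=j,\ldots,v(k)=i$ by induction and then rolls out the recursion in one pass, whereas you carry the full formula \eqref{eqn:delayedform} through the induction; the content is identical.
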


\begin{proof}
Consider any sub-state $j$, and suppose that at some time-step $k$, we have  $\tau^{(j)}_i[k]=m$, where $i\in\mathcal{V}\setminus\{j\}$, and $m\in\mathbb{N}_{+}$. Given this scenario, we claim that there exist nodes $v(\tau)\in\mathcal{V}\setminus\{j\},\tau\in\{k-m+1,\ldots,k\}$ such that $v(\tau)$ adopts the information of $v(\tau-1)$ at time $\tau-1$ for sub-state $j$, $\forall \tau\in\{k-m+1,\ldots,k\}$, with $v(k-m)=j$ and $v(k)=i$. As we shall see, establishing this claim readily establishes \eqref{eqn:delayedform}; thus, we first focus on proving the former via induction on $m$. For the base case of induction, suppose $\tau^{(j)}_i[k]=1$ for some $i\in\mathcal{V}\setminus\{j\}$ at some time-step $k$. Based on Algorithm \ref{algo:main} and Lemma \ref{lemma:sourceuse}, note that this is possible if and only if $j\in\mathcal{N}_i[k-1]$. In particular, $v(k)=i$ would then adopt the information of $v(k-1)=j$ at time $k-1$ for sub-state $j$. This establishes the claim when $m=1$. Now fix an integer $r\geq2$, and suppose the claim is true for all $m\in\{1,\ldots,r-1\}$. Suppose $\tau^{(j)}_i[k]=r$ for some $i\in\mathcal{V}\setminus\{j\}$ at some time-step $k$. From Algorithm \ref{algo:main}, observe that this is true if and only if $i$ adopts the information of some node $l\in\mathcal{N}_i[k-1]\cup\{i\}$ at time $k-1$ for sub-state $j$, such that $\tau^{(j)}_l[k-1]=r-1$. Since $r-1\geq1$, it must be that  $l\in\mathcal{V}\setminus\{j\}$; the induction hypothesis thus applies to node $l$. Using this fact, and setting $v(k-1)=l$ completes our inductive proof of the claim. Finally, observe that for any $\tau\in\{k-m+1,\ldots,k\}$, whenever $v(\tau)$ adopts the information of $v(\tau-1)$ at $\tau-1$, the following identity holds based on \eqref{eqn:nonsourcecase1} and \eqref{eqn:nonsourcecase2}: 
\begin{equation}
    \hat{\mathbf{z}}^{(j)}_{v(\tau)}[\tau]=\mathbf{A}_{jj}\hat{\mathbf{z}}^{(j)}_{v(\tau-1)}[\tau-1]+\sum \limits_{q=1}^{(j-1)}\mathbf{A}_{jq}\hat{\mathbf{z}}^{(q)}_{v(\tau)}[\tau-1].
\end{equation}
Using the above identity repeatedly for all $\tau\in\{k-m+1,\ldots,k\}$ with $v(k-m)=j$ and $v(k)=i$,  immediately leads to \eqref{eqn:delayedform}. This completes the proof.
\end{proof}

\begin{lemma}
Suppose the sequence $\{\mathcal{G}[k]\}^{\infty}_{k=0}$ satisfies condition (C3) in Section \ref{sec:probform}. Then, for each sub-state $j$, Algorithm \ref{algo:main} guarantees the following. 
\begin{equation}
    \tau^{(j)}_i[k]\neq\omega,\forall k\geq \sum_{q=0}^{N-2}f(t_q), \forall i\in\mathcal{V}, \hspace{2mm} \textrm{and}
    \label{eqn:counterinit}
\end{equation}
\begin{equation}
   \tau^{(j)}_i[t_{p(N-1)}] \leq\hspace{-3mm} \sum_{q=(p-1)(N-1)}^{p(N-1)-1}\hspace{-6mm}f(t_q), \forall p\in\mathbb{N}_{+},  \forall i\in\mathcal{V}.
   \label{eqn:delaybound}
\end{equation}
\label{lemma:counterinit}
\end{lemma}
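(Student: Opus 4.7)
My plan is to prove both parts simultaneously by introducing, for any reference time $T_0 = t_r$ (with $r\in\mathbb{N}$), the \emph{informed set}
\[
I_k \triangleq \{i\in\mathcal{V}:\tau^{(j)}_i[k]\leq k-T_0\},\qquad k\geq T_0,
\]
and tracking how $I_k$ grows using condition (C3) on the union graph over each mini-interval $[t_q,t_{q+1}-1]$. I would first record two easy monotonicity facts read off directly from the update rules of Algorithm \ref{algo:main}: (a) once $\tau^{(j)}_i[k]$ becomes finite, it stays finite for all subsequent time-steps, because the only branch producing $\tau^{(j)}_i[k+1]=\omega$ is Case~1 with $\mathcal{M}^{(j)}_i[k]=\emptyset$, which itself requires $\tau^{(j)}_i[k]=\omega$; and (b) $I_k\subseteq I_{k+1}$, since a node in $I_k$ has finite $\tau^{(j)}_i$ (so Case~2 applies) and either increments by $1$ (open-loop) or updates to $\tau^{(j)}_u[k]+1\leq\tau^{(j)}_i[k]$ (on adopting some $u\in\mathcal{F}^{(j)}_i[k]$), yielding $\tau^{(j)}_i[k+1]\leq(k+1)-T_0$.

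The crux is a \emph{propagation lemma}: over any mini-interval $[t_q,t_{q+1}-1]$ with $t_q\geq T_0$, either $I_{t_q}=\mathcal{V}$ or $|I_{t_{q+1}}|\geq|I_{t_q}|+1$. The proof exploits (C3): strong-connectivity of the union graph over the mini-interval forces the existence of an edge $(u,v)\in\mathcal{E}[k]$ for some $k$ in the interval with $u\in I_{t_q}$ and $v\notin I_{t_q}$. By (b), $u\in I_k$, so $\tau^{(j)}_u[k]\leq k-T_0$ is finite; hence $u\in\mathcal{M}^{(j)}_v[k]$. If $v\in I_k$ already we are done; otherwise, I would split into two cases. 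If $\tau^{(j)}_v[k]=\omega$ (Case~1 of Algorithm~\ref{algo:main}), then $\mathcal{M}^{(j)}_v[k]\ni u$ is non-empty, so \eqref{eqn:indexupdatecase11} fires with some $u^*$ satisfying $\tau^{(j)}_{u^*}[k]\leq\tau^{(j)}_u[k]\leq k-T_0$. If $\tau^{(j)}_v[k]>k-T_0$ (Case~2), the inequality $\tau^{(j)}_u[k]<\tau^{(j)}_v[k]$ places $u\in\mathcal{F}^{(j)}_v[k]$, and \eqref{eqn:indexupdatecase11} again fires. In either case $\tau^{(j)}_v[k+1]\leq(k+1)-T_0$, so $v\in I_{k+1}\subseteq I_{t_{q+1}}$ by monotonicity.

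To derive \eqref{eqn:delaybound}, I would set $T_0=t_{(p-1)(N-1)}$ and note that $j\in I_{T_0}$ because $\tau^{(j)}_j\equiv 0$. Applying the propagation lemma to each of the $N-1$ consecutive mini-intervals indexed by $r=0,1,\ldots,N-2$ grows $|I|$ by at least one per step until it saturates at $N$, so $I_{t_{p(N-1)}}=\mathcal{V}$. Unpacking the definition of $I$ yields $\tau^{(j)}_i[t_{p(N-1)}]\leq t_{p(N-1)}-t_{(p-1)(N-1)}=\sum_{q=(p-1)(N-1)}^{p(N-1)-1}f(t_q)$ for every $i\in\mathcal{V}$, which is exactly \eqref{eqn:delaybound}. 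For \eqref{eqn:counterinit}, I would specialize to $p=1$ (so $T_0=0$ and $t_{N-1}=\sum_{q=0}^{N-2}f(t_q)$): every node lies in $I_{t_{N-1}}$ and is therefore triggered, after which monotonicity~(a) keeps $\tau^{(j)}_i[k]$ finite for all $k\geq t_{N-1}$.

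The main subtlety I anticipate is the case split verifying that the edge $(u,v)$ supplied by strong-connectivity actually triggers the index-decrement rule at $v$: both $\tau^{(j)}_v[k]=\omega$ and $\tau^{(j)}_v[k]>k-T_0$ have to be reconciled with the precise definitions of $\mathcal{M}^{(j)}_v[k]$ and $\mathcal{F}^{(j)}_v[k]$, though both ultimately route through \eqref{eqn:indexupdatecase11}. The remainder of the argument is a clean induction combined with bookkeeping against the telescoping identity $t_{p(N-1)}-t_{(p-1)(N-1)}=\sum_{q=(p-1)(N-1)}^{p(N-1)-1}f(t_q)$.
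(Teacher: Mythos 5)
Your proof is correct and follows essentially the same route as the paper's: both arguments propagate "triggered, with delay bounded by elapsed time since the reference instant" outward from the source node, using strong-connectivity of the union graph over each mini-interval to add at least one new node per interval, so that $N-1$ intervals suffice. Your informed-set invariant $I_k=\{i:\tau^{(j)}_i[k]\le k-T_0\}$ is a slightly cleaner repackaging of the paper's layered sets $\mathcal{C}^{(j)}_r$ (and $\mathcal{D}^{(j)}_r$), with the added convenience that the same parametrized definition handles $p=1$ and general $p$ uniformly.
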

\begin{proof}
Fix a sub-state $j$, and notice that both \eqref{eqn:counterinit} and \eqref{eqn:delaybound} hold for the corresponding source node $j$, since $\tau^{(j)}_j[k]=0,\forall k\in \mathbb{N}$. To establish these claims for the remaining nodes, we begin by making the following simple observation that follows directly from \eqref{eqn:indexupdatecase11} and \eqref{eqn:indexupdatecase2}, and applies to every node $i\in\mathcal{V}\setminus\{j\}$:
\begin{equation}
    \tau^{(j)}_i[k+1]\leq\tau^{(j)}_i[k]+1, \hspace{1mm} \textrm{whenever} \hspace{1mm}  \tau^{(j)}_i[k]\neq \omega.
    \label{eqn:indexbound}
\end{equation}
Our immediate goal is to establish \eqref{eqn:delaybound} when $p=1$ and, in the process, establish \eqref{eqn:counterinit}. 
Let $\mathcal{C}^{(j)}_0=\{j\}$, and define:
\begin{equation}
    \mathcal{C}^{(j)}_1\triangleq\{i\in\mathcal{V}\setminus\mathcal{C}^{(j)}_0: \{\bigcup \limits_{\tau=t_0}^{t_1-1}\mathcal{N}_i[\tau]\}\cap\mathcal{C}^{(j)}_0\neq\emptyset\}.
\end{equation}
In words, $\mathcal{C}^{(j)}_1$ represents the set of non-source nodes (for sub-state $j$) that have a direct edge from node $j$ at least once over the interval $[t_0,t_1)$. Based on condition (C3),  $\mathcal{C}^{(j)}_1$ is non-empty (barring the trivial case when $\mathcal{V}=\{j\}$). For each $i\in\mathcal{C}^{(j)}_1$, it must be that $j \in \mathcal{M}^{(j)}_i[\bar{k}]$ for some $\bar{k}\in[t_0,t_1)$. Thus, based on \eqref{eqn:indexupdatecase11} and \eqref{eqn:indexupdatecase2}, it must be that $\tau^{(j)}_i[k]\neq\omega, \forall k\geq t_1=f(t_0), \forall i\in \mathcal{C}^{(j)}_1$. In particular, we note based on \eqref{eqn:indexbound} that $\tau^{(j)}_i[t_1] \leq t_1$, and hence $\tau^{(j)}_i[t_{N-1}] \leq t_{N-1}=\sum_{q=0}^{{N-2}}f(t_q), \forall i\in\mathcal{C}^{(j)}_1$. 
We can keep repeating the above argument by recursively defining the sets $\mathcal{C}^{(j)}_r,1\leq r \leq (N-1)$, as follows:
\begin{equation}
    \mathcal{C}^{(j)}_r\triangleq\{i\in\mathcal{V}\setminus\bigcup \limits_{q=0}^{(r-1)}\mathcal{C}^{(j)}_q:\{\hspace{-2.5mm}\bigcup \limits_{\tau=t_{r-1}}^{t_r-1}\hspace{-1.5mm}\mathcal{N}_i[\tau]\}\cap\{\bigcup \limits_{q=0}^{(r-1)}\mathcal{C}^{(j)}_q\}\neq \emptyset\}.
\end{equation}
We proceed via induction on $r$. Suppose the following is true for all $r\in\{1,\ldots,m-1\}$, where $m\in\{2,\ldots,N-1\}$: $\tau^{(j)}_i[t_r]\neq\omega$ and $\tau^{(j)}_i[t_r]\leq t_r, \forall i\in\bigcup \limits_{q=0}^{r}\mathcal{C}^{(j)}_q$. Now suppose $r=m$. If $\mathcal{V}\setminus\bigcup \limits_{q=0}^{(m-1)}\mathcal{C}^{(j)}_q$ is empty, then we are done establishing \eqref{eqn:counterinit}, and \eqref{eqn:delaybound} for the case when $p=1$. Else, based on condition (C3), it must be that $\mathcal{C}^{(j)}_m$ is non-empty. 
Consider a node $i\in\mathcal{C}^{(j)}_m$. Based on the way $\mathcal{C}^{(j)}_m$ is defined, note that at some time-step $\bar{k}\in[t_{m-1},t_m)$, node $i$ has some neighbor $v$ (say) from the set $\bigcup \limits_{q=0}^{(m-1)}\mathcal{C}^{(j)}_q$. Based on the induction hypothesis and \eqref{eqn:indexbound}, it must be that $\tau^{(j)}_v[\bar{k}]\neq \omega$ and  $\tau^{(j)}_v[\bar{k}] \leq \bar{k}$. At this point, if $\tau^{(j)}_i[\bar{k}]=\omega$, then since $v\in\mathcal{M}^{(j)}_i[\bar{k}]$, node $i$ would update $\tau^{(j)}_i[\bar{k}]$ based on \eqref{eqn:indexupdatecase11}. Else, if $\tau^{(j)}_i[\bar{k}]\neq\omega$, there are two possibilities: (i) $v\in\mathcal{F}^{(j)}_i[\bar{k}]$, implying $\mathcal{F}^{(j)}_i[\bar{k}]\neq\emptyset$; or (ii) $v\notin\mathcal{F}^{(j)}_i[\bar{k}]$, implying $\tau^{(j)}_i[\bar{k}]\leq\tau^{(j)}_v[\bar{k}]\leq \bar{k}$. The above discussion, coupled with the freshness-index update rules for Case 2 of the algorithm (line 5 of Algo. \ref{algo:main}), and \eqref{eqn:indexbound}, imply $\tau^{(j)}_i[t_m]\neq \omega$ and $\tau^{(j)}_i[t_m]\leq t_m$. This completes the induction step. Appealing to \eqref{eqn:indexbound} once again, and noting that $\bigcup_{q=0}^{N-1}\mathcal{C}^{(j)}_q=\mathcal{V}$ and $t_{N-1}=\sum_{q=0}^{N-2}f(t_q)$, establishes $\eqref{eqn:counterinit}$, and \eqref{eqn:delaybound} when $p=1$. 

In order to establish \eqref{eqn:delaybound} for any $p\in\mathbb{N}_{+}$, one can follow a similar line of argument as above to analyze the evolution of the freshness indices over the interval $[t_{(p-1)(N-1)},t_{p(N-1)}]$. In particular, for any $p>1$, we can set $\mathcal{D}^{(j)}_0=\{j\}$, and define the  sets $\mathcal{D}^{(j)}_r, 1\leq r \leq (N-1)$ recursively as follows:
\begin{equation}
\resizebox{0.98\hsize}{!}{$
    \mathcal{D}^{(j)}_r\triangleq\{i\in\mathcal{V}\setminus\bigcup \limits_{q=0}^{(r-1)}\mathcal{D}^{(j)}_q:\{\hspace{-2.5mm}\bigcup \limits_{\tau=t_{h(p,N,r)-1}}^{t_{h(p,N,r)}-1}\hspace{-4.5mm}\mathcal{N}_i[\tau]\}\cap\{\bigcup \limits_{q=0}^{(r-1)}\mathcal{D}^{(j)}_q\}\neq \emptyset\},
    $}
\end{equation}
where $h(p,N,r)=(p-1)(N-1)+r.$ One can then establish that $\tau^{(j)}_i[t_{h(p,N,r)}] \leq \sum_{q=(p-1)(N-1)}^{h(p,N,r)-1}f(t_q),\forall i \in \mathcal{D}^{(j)}_r$, $\forall r\in\{1,\ldots,N-1\}$, via induction.
\end{proof}

We are now in position to prove Theorem \ref{thm:main1}.
\begin{proof} \textbf{(Theorem \ref{thm:main1})} The proof is divided into two parts. In the first part, we describe a procedure for designing the observer gains $\{\mathbf{L}_i\}^{N}_{i=1}$. In the second part, we establish  that our choice of observer gains indeed leads to the desired exponential convergence rate $\rho.$ 

\textbf{Design of the observer gains:} We begin by noting that for each sub-state $j$, one can always find scalars $\beta_j, \gamma_j \geq 1$, such that $\left\Vert{(\mathbf{A}_{jj})}^{k}\right\Vert\leq\beta_j\gamma^k_j, \forall k\in\mathbb{N}$ \cite{horn}.\footnote{We use $\left\Vert\mathbf{A}\right\Vert$ to refer to the induced 2-norm of a matrix $\mathbf{A}$.} Define $\gamma \triangleq \max\limits_{1\leq j \leq N} \gamma_j$. Next, fix a $\bar{\delta}\in (\delta,1)$, where $\delta$ is as in \eqref{eqn:ass_growth}. Given a desired rate of convergence $\rho\in(0,1)$, we now recursively define two sets of positive scalars, namely $\{\rho_j\}^{N}_{j=1}$ and $\{\lambda_j\}^{N}_{j=1}$, starting with $j=N$. With $\lambda_N=\rho$, let $\rho_j, j=N$,  be chosen to satisfy: 
\begin{equation}
    \gamma^{\bar{\delta}}\rho^{1-\bar{\delta}}_j \leq \lambda_j.
    \label{eqn:designrho}
\end{equation}
Having picked $\rho_j\in (0,1)$ to meet the above condition, we set $\lambda_{j-1}$ to be any number in $(0,\rho_j)$, pick $\rho_{j-1}$ to satisfy \eqref{eqn:designrho}, and then repeat this process till we reach $j=1$. Observe that the sets  $\{\rho_j\}^{N}_{j=1}$ and $\{\lambda_j\}^{N}_{j=1}$ as defined above always exist, and satisfy:  $\rho_1<\lambda_1<\rho_2<\lambda_2<\cdots <\lambda_{N-1}<\rho_N<\lambda_N=\rho.$ For each sub-state $j\in\{1,\ldots,N\}$, let the corresponding source node $j$ design the observer gain $\mathbf{L}_j$ (featuring in equation \eqref{eqn:sourceupdate}) in a manner such that the matrix $(\mathbf{A}_{jj}-\mathbf{L}_{j}\mathbf{C}_{jj})$ has distinct real eigenvalues with spectral radius equal to $\rho_j$. Such a choice of $\mathbf{L}_j$ exists as the pair $(\mathbf{A}_{jj},\mathbf{C}_{jj})$ is observable by construction. This completes our design procedure.

\textbf{Convergence analysis:} We first note that there exists a set of positive scalars $\{\alpha_1, \ldots,\alpha_{N}\}$, such that \cite{horn}:
\begin{equation}
    \left\Vert{(\mathbf{A}_{jj}-\mathbf{L}_{j}\mathbf{C}_{jj})}^k\right\Vert \leq \alpha_{j}\rho_{j}^k, \forall k\in\mathbb{N}.
    \label{eqn:schurbound}
\end{equation}
For a particular sub-state $j$, let $\mathbf{e}^{(j)}_i[k]=\hat{\mathbf{z}}^{(j)}_i[k]-\mathbf{z}^{(j)}[k]$. Consider the first sub-state $j=1$, and observe that based on \eqref{eqn:gen_form}, \eqref{eqn:coordinatetransform},    and \eqref{eqn:sourceupdate}, the following is true: $\mathbf{e}^{(1)}_{1}[k+1]=(\mathbf{A}_{11}-\mathbf{L}_{1}\mathbf{C}_{11})\mathbf{e}^{(1)}_{1}[k]$. Thus, we obtain
\begin{equation}
    \mathbf{e}^{(1)}_{1}[k]={(\mathbf{A}_{11}-\mathbf{L}_{1}\mathbf{C}_{11})}^k\mathbf{e}^{(1)}_1[0].
    \label{eqn:mode1rolledout}
\end{equation}
Based on \eqref{eqn:schurbound} and \eqref{eqn:mode1rolledout}, we then have:
\begin{equation}
    \left\Vert\mathbf{e}^{(1)}_1[k]\right\Vert \leq c_1\rho_{1}^k, \forall k\in\mathbb{N},
    \label{eqn:sourcebound1}
\end{equation}
where $c_1\triangleq\alpha_1\left\Vert\mathbf{e}^{(1)}_1[0]\right\Vert$. Given that node 1's error for sub-state 1 decays exponentially as per \eqref{eqn:sourcebound1}, we want to now relate the errors $\mathbf{e}^{(1)}_i[k], i\in\mathcal{V}\setminus\{1\}$ of the non-source nodes (for sub-state 1) to $\mathbf{e}^{(1)}_1[k]$. To this end, consider any $i\in\mathcal{V}\setminus\{1\}$, and note that for any $k\geq t_{N-1}$, Eq. \eqref{eqn:counterinit} in Lemma \ref{lemma:counterinit} implies that $\tau^{(1)}_i[k]\neq\omega$, and hence $\tau^{(1)}_i[k]\in\mathbb{N}_{+}.$ 
Invoking Lemma \ref{lemma:form}, and using the fact that $\mathbf{z}^{(1)}[k]={(\mathbf{A}_{11})}^m\mathbf{z}^{(1)}[k-m], \forall m\in\mathbb{N}$, we then obtain the following $\forall i\in\mathcal{V}\setminus\{1\}$:
\begin{equation}
   \mathbf{e}^{(1)}_i[k]={(\mathbf{A}_{11})}^{\tau^{(1)}_i[k]}\mathbf{e}^{(1)}_1[k-\tau^{(1)}_i[k]], \forall k\geq t_{N-1}.
   \label{eqn:errmode1}
\end{equation}
Our next goal is to bound the delay term $\tau^{(1)}_i[k]$ in the above relation. For this purpose, consider any time-step $k\geq t_{N-1}$, and let $p(k)$ be the largest integer such that $t_{p(k)(N-1)} \leq k$. Then, for any sub-state $j$, and any $i\in\mathcal{V}\setminus\{j\}$, we observe:
\begin{equation}
\begin{aligned}
\tau^{(j)}_i[k] &\overset{(a)}\leq \tau^{(j)}_i[t_{p(k)(N-1)}]+(k-t_{p(k)(N-1)})\\ 
&\overset{(b)}\leq \hspace{-2mm}\sum_{q=(p(k)-1)(N-1)}^{p(k)(N-1)-1}\hspace{-7mm}f(t_q)+(k-t_{p(k)(N-1)})\\
&\overset{(c)}\leq 2(N-1)f(m(k))\overset{(d)}=2(N-1)g(k).
\label{eqn:delaybound2}
\end{aligned}
\end{equation}

In the above inequalities, (a) follows from \eqref{eqn:counterinit} in Lemma \ref{lemma:counterinit} and \eqref{eqn:indexbound}; (b) follows from \eqref{eqn:delaybound} in Lemma \ref{lemma:counterinit}; and (c) follows from the monotonicity of $f(\cdot)$ in condition (C1), and by recalling that $m(k)\triangleq \max\{t_q\in\mathbb{I}:t_q \leq k\}$. Finally, (d) follows by recalling that $g(k)=f(m(k))$. Recalling that $\left\Vert{(\mathbf{A}_{11})}^{k}\right\Vert\leq\beta_1\gamma^k_1$, using the bounds in \eqref{eqn:sourcebound1} and \eqref{eqn:delaybound2}, the fact that $\gamma_1\geq1$ and $\rho_1<1$, and the sub-multiplicative property of the 2-norm, we obtain the following by taking norms on both sides of \eqref{eqn:errmode1}:
\begin{equation}
    \left\Vert\mathbf{e}^{(1)}_i[k]\right\Vert \leq \bar{c}_1{\left(\frac{\gamma_1}{\rho_1}\right)}^{\tilde{g}(k)}\rho_{1}^k, \forall k\geq t_{N-1}, \forall i\in\mathcal{V}\setminus\{1\},
    \label{eqn:boundmode1}
\end{equation}
where $\tilde{g}(k)=2(N-1)g(k)$ and $\bar{c}_1\triangleq c_1 \beta_1$. Based on condition (C3), and our choice of $\bar{\delta}$, observe that there exists $\bar{k}(\bar{\delta})$ such that 
  $\tilde{g}(k)\leq \bar{\delta} k, \forall k \geq \bar{k}(\bar{\delta})$. With $k_1\triangleq\max\{t_{N-1},\bar{k}(\bar{\delta})\}$, we then obtain the following based on \eqref{eqn:designrho} and \eqref{eqn:boundmode1}, for all $k\geq k_1$ and for all $i\in\mathcal{V}\setminus\{1\}$: 
 \begin{equation}
    \left\Vert\mathbf{e}^{(1)}_i[k]\right\Vert \leq \bar{c}_1{\left(\gamma^{\bar{\delta}}_1\rho^{1-\bar{\delta}}_1\right)}^k \leq \bar{c}_1{\left(\gamma^{\bar{\delta}}\rho^{1-\bar{\delta}}_1\right)}^k \leq \bar{c}_1 \lambda^k_1. 
    \label{eqn:boundfinalmd1}
\end{equation}
Note that since $\bar{c}_1 \geq c_1$ and $\lambda_1 > \rho_1$, the above bound  applies to node 1 as well (see equation \eqref{eqn:sourcebound1}). We have thus established that exponential convergence at rate $\lambda_1$ for sub-state 1 holds for each node in the network.

Our aim is to now obtain a bound similar to that in \eqref{eqn:boundfinalmd1} for each sub-state $j\in\{2,\ldots,N\}$. To this end, with $g_{jq}=\left\Vert(\mathbf{A}_{jq}-\mathbf{L}_j\mathbf{C}_{jq})\right\Vert$ and $h_{jq}=\left\Vert\mathbf{A}_{jq}\right\Vert$, let us define the following quantities recursively for $j\in\{2,\ldots,N\}$:
\begin{equation}
    \begin{aligned}
        k_j&\triangleq\frac{k_{j-1}}{(1-\bar{\delta})},\\
        c_j&\triangleq\frac{\alpha_j}{\rho_{j}^{k_{j-1}}}\left(\left\Vert\mathbf{e}^{(j)}_j[k_{j-1}]\right\Vert+\sum \limits_{q=1}^{(j-1)}\frac{g_{jq}\bar{c}_q}{(\rho_j-\lambda_q)}\lambda_{q}^{k_{j-1}}\right),\\
        \bar{c}_j&\triangleq\beta_j\left(c_j+\sum\limits_{q=1}^{(j-1)}\frac{h_{jq}\bar{c}_q}{(\gamma_j-\lambda_q)}\right),
    \end{aligned}
    \label{eqn:defn}
\end{equation}
where $k_1\triangleq\max\{t_{N-1},\bar{k}(\bar{\delta})\}$, $c_1\triangleq\alpha_1\left\Vert\mathbf{e}^{(1)}_1[0]\right\Vert$, and $\bar{c}_1=c_1\beta_1$. Based on the above definitions, we claim that for each sub-state $j\in\{1,\ldots,N\}$, the following is true:
\begin{equation}
    \left\Vert\mathbf{e}^{(j)}_i[k]\right\Vert \leq \bar{c}_j\lambda^k_j, \forall k\geq k_j, \forall i \in \mathcal{V}.
    \label{eqn:errboundmodej}
\end{equation}
We will prove the above claim via induction on the sub-state number $j$. We have already established \eqref{eqn:errboundmodej} for the base case when $j=1$. For $j\geq2$, our strategy will be to first analyze the evolution of $\mathbf{e}^{(j)}_j[k]$ at the source node $j$. From \eqref{eqn:gen_form} and \eqref{eqn:coordinatetransform}, we note that the dynamics of the $j$-th sub-state are coupled with those of the first $j-1$ sub-states. Thus, $\mathbf{e}^{(j)}_j[k]$ will exhibit exponential decay only when the errors for the first $j-1$ sub-states have already started decaying exponentially, with $k_{j-1}$ (as defined in \eqref{eqn:defn}) representing the instant when exponential decay for the $(j-1)$-th sub-state kicks in. Let us now prove that as soon as this happens, the following holds:
\begin{equation}
    \left\Vert\mathbf{e}^{(j)}_j[k]\right\Vert \leq {c}_j\rho^k_j, \forall k\geq k_{j-1}.
    \label{eqn:errsourcej}
\end{equation}
To do so, suppose \eqref{eqn:errboundmodej} holds for all $j\in\{1,\ldots,l-1\}$, where $l\in\{2,\ldots,N\}$. Now let $j=l$ and observe that equations \eqref{eqn:gen_form} and \eqref{eqn:coordinatetransform} yield:
\begin{equation}
\resizebox{0.98\hsize}{!}{$
\begin{aligned}
    &\mathbf{z}^{(l)}[k+1]=\mathbf{A}_{ll}\mathbf{z}^{(l)}[k]+\sum_{q=1}^{(l-1)}\mathbf{A}_{lq}\mathbf{z}^{(q)}[k]\\
    \hspace{-3mm}&=(\mathbf{A}_{ll}-\mathbf{L}_l\mathbf{C}_{ll})\mathbf{z}^{(l)}[k]+\sum \limits_{q=1}^{(l-1)}(\mathbf{A}_{lq}-\mathbf{L}_l\mathbf{C}_{lq})\mathbf{z}^{(q)}[k]+\mathbf{L}_l\mathbf{y}_l[k].
\end{aligned}
$}
\label{eqn:modeldyn}
\end{equation}
Based on the above equation and \eqref{eqn:sourceupdate}, we obtain:
\begin{equation}
\resizebox{0.98\hsize}{!}{$
\mathbf{e}^{(l)}_l[k+1]=(\mathbf{A}_{ll}-\mathbf{L}_l\mathbf{C}_{ll})\mathbf{e}^{(l)}_l[k]+\sum \limits_{q=1}^{(l-1)}(\mathbf{A}_{lq}-\mathbf{L}_l\mathbf{C}_{lq})\mathbf{e}^{(q)}_l[k].
\nonumber
$}
\label{eqn:err_modl}
\end{equation}
Rolling out the above equation starting from $k_{l-1}$ yields:
\begin{equation}
\resizebox{1.02\hsize}{!}{$
    \mathbf{e}^{(l)}_l[k]={F_{ll}}^{(k-k_{l-1})}\mathbf{e}^{(l)}_l[k_{l-1}]+
    \sum \limits_{q=1}^{(l-1)}\hspace{-1mm}\sum  \limits_{\tau=k_{l-1}}^{(k-1)}\hspace{-2mm}{F_{ll}}^{(k-\tau-1)}G_{lq}\mathbf{e}^{(q)}_l[\tau],
    $}
    \label{eqn:modejerrorrolledout}
\end{equation}
where $F_{ll}=(\mathbf{A}_{ll}-\mathbf{L}_l\mathbf{C}_{ll})$, and $G_{lq}=(\mathbf{A}_{lq}-\mathbf{L}_l\mathbf{C}_{lq})$. Taking norms on both sides of the above equation, using the triangle inequality, and the sub-multiplicative property of the two-norm, we obtain:
\begin{equation}
\resizebox{1.01\hsize}{!}{$
    \begin{aligned}
        \left\Vert\mathbf{e}^{(l)}_l[k]\right\Vert &\overset{(a)}{\leq} \alpha_l\rho^k_l\left(\frac{\left\Vert\mathbf{e}^{(l)}_l[k_{l-1}]\right\Vert}{\rho_{l}^{k_{l-1}}}+\frac{1}{\rho_l}\sum_{q=1}^{(l-1)}g_{lq}\hspace{-2mm}\sum  \limits_{\tau=k_{l-1}}^{(k-1)}\hspace{-1mm}\rho^{-\tau}_l\left\Vert\mathbf{e}^{(q)}_l[\tau]\right\Vert\right)\\
        &\overset{(b)}{\leq} \alpha_l\rho^k_l\left(\frac{\left\Vert\mathbf{e}^{(l)}_l[k_{l-1}]\right\Vert}{\rho_{l}^{k_{l-1}}}+\frac{1}{\rho_l}\sum_{q=1}^{(l-1)}g_{lq}\bar{c}_q\hspace{-2mm}\sum  \limits_{\tau=k_{l-1}}^{\infty}\hspace{-1mm}{\left(\frac{\lambda_q}{\rho_l}\right)}^{\tau}\right)\\
        &\overset{(c)}{\leq} c_l\rho^k_l, \forall k \geq k_{l-1}.\\
        \label{eqn:boundsourcemodej}
    \end{aligned}
    $}
\end{equation}
In the above inequalities, (a) follows from \eqref{eqn:schurbound} and by recalling that $g_{lq}=\Vert G_{lq} \Vert$; (b) follows by first applying the induction hypothesis noting that $q\leq (l-1)$ and $\tau\geq k_{l-1}$, and then changing the upper limit of the inner summation (over time); (c) follows by simplifying the preceding inequality using the fact that $\lambda_q < \rho_l, \forall  q \in \{1,\ldots,l-1\}$, and using the definition of $c_l$ in \eqref{eqn:defn}. We have thus obtained a bound on the estimation error of sub-state $l$ at node $l$. To obtain a similar bound for each $i\in\mathcal{V}\setminus\{l\}$, note that equation \eqref{eqn:modeldyn} can be rolled out over time to yield the following for each $m\in\mathbb{N}$:
\begin{equation}
\mathbf{z}^{(l)}[k]=\mathbf{A}_{ll}^{m}\mathbf{z}^{(l)}[k-m]+\sum_{q=1}^{(l-1)}\hspace{-1mm}\sum_{\tau=(k-m)}^{(k-1)}\hspace{-2.5mm}\mathbf{A}_{ll}^{(k-\tau-1)}\mathbf{A}_{lq}\mathbf{z}^{(q)}[\tau].\nonumber
\end{equation}
Leveraging Lemma \ref{lemma:form}, we can then obtain the following error dynamics for a node $i\in\mathcal{V}\setminus\{l\}, \forall k\geq t_{N-1}.$ 
\begin{equation}
\begin{aligned}
    \mathbf{e}^{(l)}_i[k]&={(\mathbf{A}_{ll})}^{\tau^{(l)}_i[k]}\mathbf{e}^{(l)}_l[k-\tau^{(l)}_i[k]]\\
    &+\sum_{q=1}^{(l-1)}\hspace{-1mm}\sum_{\tau=(k-\tau^{(l)}_i[k])}^{(k-1)}\hspace{-2.5mm}\mathbf{A}_{ll}^{(k-\tau-1)}\mathbf{A}_{lq}\mathbf{e}^{(q)}_{v(\tau+1)}[\tau].
    \end{aligned}
    \label{eqn:errornonsourcemodej}
\end{equation}
Based on the above equation, we note that since $\mathbf{A}_{ll}$ can contain unstable eigenvalues, and since $\tau^{(l)}_i[k]$ may grow over time (owing to potentially growing inter-communication intervals), we need the decay in $\mathbf{e}^{(l)}_l[k-\tau^{(l)}_i[k]]$ to dominate the growth due to ${(\mathbf{A}_{ll})}^{\tau^{(l)}_i[k]}$ in order for $\mathbf{e}^{(l)}_i[k]$ to eventually remain bounded. To show that this is indeed the case, we begin by noting the following inequalities that hold when $k\geq k_{l}$:
\begin{equation}
\frac{k_{l-1}}{k}\overset{(a)}{\leq}1-\bar{\delta}\overset{(b)}{\leq}1-\frac{\tilde{g}(k)}{k}\overset{(c)}\leq 
  1-\frac{\tau^{(l)}_i[k]}{k},
  \label{eqn:times}
\end{equation}
where $\tilde{g}(k)=2(N-1)g(k)$. In the above inequalities, (a) follows directly from \eqref{eqn:defn}; (b) follows by noting that $k\geq k_l \implies k \geq \bar{k}(\bar{\delta})$; and (c) follows from \eqref{eqn:delaybound2} and by noting that $k\geq k_l \implies k \geq t_{N-1}.$ We conclude that if $k \geq k_{l}$, then  $k-\tau^{(l)}_i[k]\geq k_{l-1}$. Thus, when $k \geq k_{l}$, at any time-step $ \tau\geq k-\tau^{(l)}_i[k]$, the errors of the first $l-1$ sub-states would exhibit exponential decay based on the induction hypothesis. 
With this in mind, we fix  $i\in\mathcal{V}\setminus\{l\}$, $k\geq k_l$, and bound $\mathbf{e}^{(l)}_i[k]$ by taking norms on both sides of $\eqref{eqn:errornonsourcemodej}$, as follows:  
\begin{equation}
\resizebox{1.01\hsize}{!}{$
    \begin{aligned}
        \left\Vert\mathbf{e}^{(l)}_i[k]\right\Vert &\overset{(a)}{\leq} \beta_l\left(c_l{\left(\frac{\gamma_l}{\rho_l}\right)}^{\tilde{g}(k)}\rho^{k}_l+\gamma^{(k-1)}_l\sum_{q=1}^{(l-1)}h_{lq}\bar{c}_q\hspace{-4.5mm}\sum  \limits_{\tau=(k-\tau^{(l)}_i[k])}^{(k-1)}\hspace{0mm}{\left(\frac{\lambda_q}{\gamma_l}\right)}^{\tau}\right)\\
         &\overset{(b)}{\leq} \beta_l\left(c_l{\left(\frac{\gamma_l}{\rho_l}\right)}^{\tilde{g}(k)}\rho^{k}_l+\gamma^{(k-1)}_l\sum_{q=1}^{(l-1)}h_{lq}\bar{c}_q\hspace{-4.5mm}\sum  \limits_{\tau=(k-\tilde{g}(k))}^{\infty}\hspace{0mm}{\left(\frac{\lambda_q}{\gamma_l}\right)}^{\tau}\right)\\
         &\overset{(c)}{=} \beta_l\left(c_l{\left(\frac{\gamma_l}{\rho_l}\right)}^{\tilde{g}(k)}\rho^{k}_l+\sum_{q=1}^{(l-1)}\frac{h_{lq}\bar{c}_q}{(\gamma_l-\lambda_q)}{\left(\frac{\gamma_l}{\lambda_q}\right)}^{\tilde{g}(k)}\lambda^k_q\right)\\
     &\overset{(d)}{\leq}\bar{c}_l{\left(\gamma^{\bar{\delta}}\rho^{1-\bar{\delta}}_l\right)}^k \overset{(e)}\leq  \bar{c}_l \lambda^k_l.
    \end{aligned}
    $}
    \label{eqn:boundnonsourcemodej}
\end{equation}
In the above steps, (a) follows by first recalling that $\left\Vert{(\mathbf{A}_{ll})}^{k}\right\Vert\leq\beta_l\gamma^k_l, \forall k\in\mathbb{N}$, $h_{lq}=\Vert \mathbf{A}_{lq} \Vert$, $\tilde{g}(k)=2(N-1)g(k)$, and then using the induction hypothesis, equations \eqref{eqn:delaybound2}, \eqref{eqn:errboundmodej}, and \eqref{eqn:boundsourcemodej}, and the facts that $\rho_l < 1, \gamma_l\geq1$; (b) follows by suitably changing the upper and lower limits of the inner summation (over time), a change that is warranted since each summand is non-negative; (c) follows by simplifying the preceding inequality; (d) follows by noting that $\lambda_q < \rho_l,  \forall q\in\{1,\ldots,l-1\}$, using the definition of $\bar{c}_l$ in \eqref{eqn:defn}, and the fact that $\tilde{g}(k)\leq \bar{\delta}k, \forall k \geq k_{l}$; and finally (e) follows from \eqref{eqn:designrho}. This completes the induction step. Let $\mathbf{e}_i[k]=\hat{\mathbf{z}}_i[k]-\mathbf{z}[k]$. Recalling that $\lambda_j \leq \rho, \forall j\in\{1,\ldots,N\}$, we obtain as desired:
\begin{equation}
\resizebox{1\hsize}{!}{$
        \left\Vert\mathbf{e}_i[k]\right\Vert =\sqrt{\sum \limits_{j=1}^{N}{\left\Vert\mathbf{e}^{(j)}_i[k]\right\Vert}^2}
      \leq\left(\sqrt{\sum \limits_{j=1}^{N}\bar{c}^2_j}\right)\rho^{k}, \forall k \geq k_{N}, \forall i\in\mathcal{V}. \nonumber
      $} 
\end{equation}
\end{proof}
 \section{Proof of Proposition
 \ref{prop:finitetime}}
 \label{app:proofProp1}
\begin{proof} \textbf{(Proposition \ref{prop:finitetime})} 
For each sub-state $j\in\{1,\ldots,N\}$, let the corresponding source node $j$ design the observer gain $\mathbf{L}_j$ (featuring in equation \eqref{eqn:sourceupdate}) in a manner such that the matrix $(\mathbf{A}_{jj}-\mathbf{L}_{j}\mathbf{C}_{jj})$ has all its eigenvalues at $0$. Such a choice of $\mathbf{L}_j$ exists based on the fact that the pair $(\mathbf{A}_{jj},\mathbf{C}_{jj})$ is observable by construction. Let $n_j=dim(\mathbf{A}_{jj})$. Given the above choice of observer gains, we will prove the result by providing an upper bound on the number of time-steps it takes the error of each node to converge to $\mathbf{0}$. To this end, we first define a sequence $\{\bar{\tau}_j\}_{j=1}^{N}$ of time-steps as follows:
\begin{equation}
    \bar{\tau}_1=\inf\{\tau\in\mathbb{N}_{+},\tau \geq t_{N-1}:k-\tilde{g}(k)\geq n_1, \forall k \geq \tau \},
    \label{eqn:taubar_1}
\end{equation}
where $\tilde{g}(k)=2(N-1)g(k)$, and
\begin{equation}
    \bar{\tau}_j=\inf\{\tau\in\mathbb{N}_{+}:k-\tilde{g}(k)\geq \bar{\tau}_{j-1}+n_j,\forall k \geq \tau \}.
    \label{eqn:taubar_j}
\end{equation}
Based on condition (C3), namely $\limsup_{k\to\infty}g(k)/k=\delta < 1$, observe that $\bar{\tau}_j$ as defined above is finite $\forall j\in\{1,\ldots,N\}$. Next, note that by construction, $(\mathbf{A}_{jj}-\mathbf{L}_j\mathbf{C}_{jj})$ is a nilpotent matrix of index at most $n_j$. Thus, it is easy to see that $\mathbf{e}^{(1)}_1[k]=\mathbf{0}, \forall k\geq n_1$, based on \eqref{eqn:mode1rolledout}. Recall from \eqref{eqn:delaybound2} that for each sub-state $j$, $\tau^{(j)}_i[k] \leq \tilde{g}(k), \forall k\geq t_{N-1}, \forall i\in \mathcal{V}$. From the definition of $\bar{\tau}_1$ in \eqref{eqn:taubar_1}, and equation \eqref{eqn:errmode1}, we immediately obtain that  $\mathbf{e}^{(1)}_i[k]=\mathbf{0},\forall k \geq \bar{\tau}_1, \forall i\in\mathcal{V}$. One can easily generalize this argument to the remaining sub-states by using an inductive reasoning akin to that in the proof of Theorem \ref{thm:main1}. In particular, for any sub-state $j\in\{2,\ldots,N\}$, one can roll out the error dynamics for node $j$ as in \eqref{eqn:modejerrorrolledout}, starting from time-step $\bar{\tau}_{j-1}$. By this time, the induction hypothesis would imply that the estimation errors of all nodes on all sub-states $q\in\{1,\ldots,j-1\}$ have converged to zero. The nilpotentcy of $(\mathbf{A}_{jj}-\mathbf{L}_j\mathbf{C}_{jj})$ would then imply that $\mathbf{e}^{(j)}_j[k]=\mathbf{0}, \forall k \geq \bar{\tau}_{j-1}+n_j$. From the definition of $\bar{\tau}_{j}$ in \eqref{eqn:taubar_j}, and $\eqref{eqn:delaybound2}$, we note that $k\geq \bar{\tau}_j \implies k-\tau^{(j)}_i[k] \geq \bar{\tau}_{j-1}+n_j, \forall i\in\mathcal{V}$. Referring to \eqref{eqn:errornonsourcemodej}, we conclude that $\mathbf{e}^{(j)}_i[k]=\mathbf{0}, \forall k \geq \bar{\tau}_j, \forall i\in\mathcal{V}.$ Based on the above reasoning, the overall error for each node converges to $\mathbf{0}$ in at most $\bar{\tau}_N$ time-steps. 
\end{proof}
\section{Proof of Theorem \ref{thm:resilient}}
\label{sec:proofThm2}
In this section, we develop the proof of Theorem \ref{thm:resilient}. We begin with the following lemma that characterizes the evolution of the vector of freshness-indices $\mathbf{d}_i[k]$. 

\begin{lemma}
Suppose at any time-step $k\in\mathbb{N}_{+}$, $\tau_i[k] \neq \omega$ for some $i\in\{\mathcal{V}\setminus\mathcal{S}\}\cap\mathcal{R}$. Then, Algorithm \ref{algo:resilient} implies the following.
\begin{align}
\mathbf{d}_i[k+1] &\leq \mathbf{d}_i[k]+\mathbf{1}_{2f+1}, \label{eqn:algo2d_bound}\\
\tau_i[k+1] &\leq \tau_i[k]+1, \label{eqn:algo2tau_bound}
\end{align}
where the first inequality holds component-wise.
\label{lemma:algo2dtbounds}
\end{lemma}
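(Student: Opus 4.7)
The plan is to first reduce the $\tau$-bound \eqref{eqn:algo2tau_bound} to the vector $d$-bound \eqref{eqn:algo2d_bound}, and then to establish \eqref{eqn:algo2d_bound} slot-by-slot by examining the Case~2 updates (lines~15--18) of Algorithm~\ref{algo:resilient}, which are the only ones executed when $\tau_i[k] \neq \omega$. For the reduction, applying \eqref{eqn:tau_update} at the two consecutive time-steps writes $\tau_i[k]$ and $\tau_i[k+1]$ as slot-wise maxima of the corresponding $d$-vectors; hence an entry-wise domination $\mathbf{d}_i[k+1] \leq \mathbf{d}_i[k] + \mathbf{1}_{2f+1}$ would immediately yield $\tau_i[k+1] \leq \tau_i[k] + 1$. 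All of the content therefore sits in \eqref{eqn:algo2d_bound}.

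To establish \eqref{eqn:algo2d_bound}, I would fix a slot $p \in \{1,\ldots,2f+1\}$ and let $l$ and $l'$ denote the identities of the occupants of slot $p$ at times $k$ and $k+1$, respectively. By the storage convention of footnote~\ref{footnt:dstore}, $l'=l$ whenever node $l$ survives the sort-and-keep step in line~16; otherwise $l' \in \mathcal{J}_i[k]\setminus\mathcal{M}_i[k]$ and $l$ is evicted from $\mathcal{M}_i[k+1]$. In the retained case, line~17 applies \eqref{eqn:d_update} to yield $d_{i,l}[k+1] = d_{i,l}[k] + 1$, and the optional append operation in line~15 can only decrease $d_{i,l}[k]$ prior to this increment, since it fires precisely when $\tau_l[k] < d_{i,l}[k]$; hence $d_{i,l'}[k+1] \leq d_{i,l}[k]+1$. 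In the eviction case, line~16 followed by \eqref{eqn:d_update} gives $d_{i,l'}[k+1] = \tau_{l'}[k]+1$, while the top-$(2f+1)$-in-ascending-order rule of line~16---with sort key $d_{i,l}[k]$ for existing entries and $\tau_{l'}[k]$ for candidates---forces $\tau_{l'}[k] \leq d_{i,l}[k]$, for otherwise $l$ would have been kept and $l'$ dropped. Combining, $d_{i,l'}[k+1] \leq d_{i,l}[k]+1$ at every slot $p$, which is exactly \eqref{eqn:algo2d_bound}.

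The main obstacle, which is essentially bookkeeping rather than mathematical depth, lies in the eviction case: one must pin down the slot-level correspondence between the outgoing node $l$ and the incoming node $l'$ using footnote~\ref{footnt:dstore}, handle ties in the sort of line~16 (which the non-strict inequality $\tau_{l'}[k] \leq d_{i,l}[k]$ accommodates), and confirm that the optional line~15 sub-update, which only overwrites $d_{i,l}[k]$ with a strictly smaller value $\tau_l[k]$, cannot disturb the monotone structure used above. Once these details are in place, both bounds follow by elementary algebra; no induction on $k$ is required, since the argument is purely one-step and uses only the hypothesis that $\tau_i[k]\neq\omega$ (which funnels the execution into Case~2 and guarantees $|\mathcal{M}_i[k]|=2f+1$ at the start of the step).
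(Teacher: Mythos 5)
Your proposal is correct and follows essentially the same route as the paper: a slot-by-slot analysis of $\mathbf{d}_i$ splitting into the retained and evicted cases, noting that the line-15 overwrite can only decrease an entry and that the sort-and-keep rule of line 16 forces any replacement to carry a freshness-index no larger than that of the node it displaces, after which \eqref{eqn:algo2tau_bound} follows from \eqref{eqn:tau_update} as a componentwise max. The only cosmetic difference is the order in which you interleave the increment \eqref{eqn:d_update} with the line-15/16 operations, which does not affect the conclusion.
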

\begin{proof}
Consider a node $i\in\{\mathcal{V}\setminus\mathcal{S}\}\cap\mathcal{R}$, and suppose $\tau_i[k] \neq \omega$. This implies that $|\mathcal{M}_i[k]|=2f+1$, and hence, all entries of the vector $\mathbf{d}_i[k]$ are populated with non-negative integers. Now fix any component $p\in\{1,\ldots,2f+1\}$ of $\mathbf{d}_i[k]$, and suppose that it corresponds to node $l\in\mathcal{M}_i[k]$, i.e., focus on $d_{i,l}[k]$. This entry undergoes the following three operations (in order) prior to the end of time-step $k+1$: (i) It gets incremented by $1$ as per \eqref{eqn:d_update}; (ii) It gets potentially replaced by a value strictly smaller than $d_{i,l}[k]+1$ if node $i$ hears from  node $l$ at time $k+1$ (line 15 of Algo. \ref{algo:resilient}); and (iii) It gets subjected to the operation in line 16 of Algo. \ref{algo:resilient}. Clearly, after operations (i) and (ii), the entry corresponding to node $l$ can increase by at most $1$. At the end of operation (iii), node $l$ either gets retained in $\mathcal{M}_i[k+1]$, or gets replaced. In case it is the former, we clearly have $d_{i,l}[k+1] \leq d_{i,l}[k]+1$, and $d_{i,l}[k+1]$ gets stored in the $p$-th component of $\mathbf{d}_i[k+1]$ (see footnote \ref{footnt:dstore}). If node $l$ gets removed from $\mathcal{M}_i[k+1]$, then given that the $2f+1$ nodes with the lowest freshness-indices populate $\mathcal{M}_i[k+1]$ (line 16 of Algo. \ref{algo:resilient}), it must be that the node replacing $l$ in $\mathcal{M}_i[k+1]$ has freshness-index at most $d_{i,l}[k]+1$ at time $k+1$. Thus, regardless of whether node $l$ gets retained or removed, we have argued that the $p$-th component of $\mathbf{d}_i[k]$ can increase by at most $1$ by the end of time-step $k+1$. Noting that the above argument holds for any component $p$ of $\mathbf{d}_i[k]$ establishes \eqref{eqn:algo2d_bound}; Eq. \eqref{eqn:algo2tau_bound} then follows readily from \eqref{eqn:tau_update} and \eqref{eqn:algo2d_bound}. 
\end{proof}

The next lemma is the analogue of Lemma \ref{lemma:counterinit} for the adversarial case. It tells us that the freshness-index of a regular node gets ``triggered" (i.e., assumes a value other than $\omega$) after a finite period of time, and that it is eventually uniformly bounded above. Here,  uniformity is w.r.t. all graph sequences that are jointly strongly $(3f+1)$-robust w.r.t. $\mathcal{S}$.  
\begin{lemma}
Suppose the conditions in the statement of Theorem \ref{thm:resilient} hold. Then, Algorithm \ref{algo:resilient} guarantees the following. 
\begin{equation}
    \tau_i[k]\neq\omega,\forall k\geq (N-|\mathcal{S}|)T, \forall i\in\mathcal{R}, \hspace{2mm} \textrm{and}
    \label{eqn:algo2counterinit}
\end{equation}
\begin{equation}
   \tau_i[k] \leq 2(N-|\mathcal{S}|)T, \forall k\geq (N-|\mathcal{S}|)T,  \forall i\in\mathcal{R},
   \label{eqn:algo2delaybound}
\end{equation}
where $T$ has the same meaning as in Definition  \ref{def:jointrobust}. 
\label{lemma:algo2counterinit}
\end{lemma}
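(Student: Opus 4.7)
The plan is to mirror the inductive spanning argument of Lemma \ref{lemma:counterinit}, with the role of strong-connectivity played by joint strong $(3f+1)$-robustness w.r.t.\ $\mathcal{S}$. The repeated subroutine will be the following: given any non-empty $\mathcal{C}\subseteq\mathcal{V}\setminus\mathcal{S}$ and any interval $[rT,(r+1)T)$, the strong $(3f+1)$-robustness of $\bigcup_{\tau=rT}^{(r+1)T-1}\mathcal{G}[\tau]$ yields a node $i^{\star}\in\mathcal{C}$ with at least $3f+1$ in-neighbours in $\mathcal{V}\setminus\mathcal{C}$ across the interval; peeling off the at most $f$ adversaries among them leaves at least $2f+1$ \emph{regular} neighbours of $i^{\star}$ from outside $\mathcal{C}$. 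This subroutine is only meaningful when $|\mathcal{V}\setminus\mathcal{C}|\geq 3f+1$, which in particular forces $|\mathcal{D}_0|:=|\mathcal{S}\cap\mathcal{R}|\geq 2f+1$ at the outset.

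For claim \eqref{eqn:algo2counterinit}, I would set $\mathcal{D}_0=\mathcal{S}\cap\mathcal{R}$, automatically triggered since every source keeps $\tau_i[k]=0$, and recursively define $\mathcal{D}_r$ as the set of regular nodes that transition from $\tau_i=\omega$ to $\tau_i\in\mathbb{N}$ during $[(r-1)T,rT)$. Applying the subroutine with $\mathcal{C}_r:=\mathcal{R}\setminus\bigcup_{q=0}^{r-1}\mathcal{D}_q$ whenever it is non-empty, the identified $i^{\star}$ is contacted over the interval by at least $2f+1$ regular, already-triggered neighbours $l\in\bigcup_{q=0}^{r-1}\mathcal{D}_q$. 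Because any such $l$ reports a genuine $\tau_l[k]\in\mathbb{N}$ with $\tau_l[k]\leq k$ (an invariant preserved by Lemma \ref{lemma:algo2dtbounds}), it passes the filter in line 4 of Algorithm \ref{algo:resilient}; tracing the pre-filtering branch (lines 5--9) then shows that $q_{i^{\star}}$ is driven to $2f+1$ by time $rT$ at the latest, so $\tau_{i^{\star}}$ turns finite. Since $|\mathcal{R}\setminus\mathcal{S}|\leq N-|\mathcal{S}|$, iterating over $r=1,\ldots,N-|\mathcal{S}|$ exhausts all regular nodes and establishes the first claim.

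For claim \eqref{eqn:algo2delaybound}, the starting observation is the invariant $d_{i,l}[k]\leq k$ for every regular $i$ and every $l\in\mathcal{M}_i[k]$: on an append at time $k'$, the filter in line 4 forces $d_{i,l}[k']=\tau_l[k']\leq k'$ even when $l$ is adversarial, and the $+1$ per-step growth in \eqref{eqn:d_update} preserves the bound. Combined with \eqref{eqn:tau_update}, this gives $\tau_i[k+1]\leq k+1$ at the triggering instant, hence $\tau_i[(N-|\mathcal{S}|)T]\leq (N-|\mathcal{S}|)T$ for every regular $i$. To extend the uniform bound $2(N-|\mathcal{S}|)T$ beyond this time, I would reapply the subroutine to the ``stale'' set $\mathcal{B}(rT):=\{i\in\mathcal{R}\setminus\mathcal{S}:\tau_i[rT]>(N-|\mathcal{S}|)T\}$ for each $r\geq N-|\mathcal{S}|$: whenever this set is non-empty, the selected $i^{\star}$ is refreshed by $2f+1$ regular neighbours whose $\tau$ at time $rT$ is at most $(N-|\mathcal{S}|)T$. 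Using the identity $\tau_l[k]+((r+1)T-1-k)\leq \tau_l[rT]+T-1$, which is itself a consequence of Lemma \ref{lemma:algo2dtbounds}, the $d$-values contributed by these neighbours to $\mathcal{M}_{i^{\star}}$ during the interval are capped at $(N-|\mathcal{S}|)T+T-1\leq 2(N-|\mathcal{S}|)T-1$, so $\tau_{i^{\star}}[(r+1)T]\leq 2(N-|\mathcal{S}|)T$; combined with the $+1$ growth of Lemma \ref{lemma:algo2dtbounds} between refreshes, this yields the uniform bound.

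The main obstacle is the second part: an adversary can occupy up to $f$ slots of $\mathcal{M}_{i^{\star}}[k]$ by reporting artificially small freshness-indices, so one must argue that after the selection/trimming in line 16 at least $f+1$ regular members always remain, and that the $(2f+1)$-th smallest $d$-value in the list is therefore controlled by genuinely fresh regular information. Carefully tracking the ``staleness gap'' $d_{i,l}[k]-\tau_l[k]$---which grows monotonically between contacts of $i$ with $l$ and is reset only by line 15 when a fresher report arrives---is where the bookkeeping is most delicate.
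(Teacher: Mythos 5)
Your argument for \eqref{eqn:algo2counterinit}, and for the base bound $\tau_i[(N-|\mathcal{S}|)T]\leq(N-|\mathcal{S}|)T$, is essentially the paper's: a layered peeling anchored at the source set, with $(3f+1)$-reachability absorbing the at most $f$ adversaries among the $3f+1$ neighbours, together with the invariant $d_{i,l}[k]\leq k$ enforced by the filter in line 4 of Algorithm \ref{algo:resilient} and propagated by \eqref{eqn:d_update}. That part is sound.

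The gap is in extending \eqref{eqn:algo2delaybound} to all $k\geq(N-|\mathcal{S}|)T$. Your ``stale set'' $\mathcal{B}(rT)$ device guarantees, per interval, the refresh of only the \emph{one} node $i^{\star}$ furnished by $(3f+1)$-reachability; nothing bounds the indices of the other members of $\mathcal{B}(rT)$, which keep growing by $1$ per step via Lemma \ref{lemma:algo2dtbounds} while they wait to be selected, and you give no argument that every node is re-selected often enough (a node reachable from the fresh set only through other stale nodes may have to wait for that whole chain to be refreshed first, and can re-enter $\mathcal{B}$ repeatedly). Moreover, even for the selected $i^{\star}$, you explicitly leave open the step that actually closes the argument: showing that contact with $2f+1$ fresh \emph{regular} nodes forces \emph{all} $2f+1$ entries of $\mathbf{d}_{i^{\star}}$ --- in particular the maximum, which is what determines $\tau_{i^{\star}}$ through \eqref{eqn:tau_update} --- to be small, despite up to $f$ adversaries occupying slots of $\mathcal{M}_{i^{\star}}$. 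The paper handles both issues at once by re-running the full layered construction, anchored at $\mathcal{S}$, over every window $[p\bar{T},(p+1)\bar{T}]$ with $\bar{T}=(N-|\mathcal{S}|)T$: an inner induction shows that if a regular non-source node hears from $m$ regular members of the previous layers over $[p\bar{T},p\bar{T}+\tau]$, then at least $m$ components of $\mathbf{d}_i[p\bar{T}+\tau]$ are at most $\tau$ (line 16 retains the $2f+1$ smallest indices, so an adversarial entry can survive only by being no larger than the fresh regular entry it displaces). Taking $m=2f+1$ gives $\tau_i[p\bar{T}+rT]\leq rT$ for every regular node in layer $r$, hence $\tau_i[(p+1)\bar{T}]\leq\bar{T}$ for \emph{all} regular $i$, and the uniform bound $2\bar{T}$ between checkpoints then follows from \eqref{eqn:algo2tau_bound}. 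You would need to supply both of these pieces to complete your route.
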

\begin{proof}
The proof of this result mirrors that of Lemma \ref{lemma:counterinit}; hence, we only sketch the essential details. First, observe that the claims made in the statement of the lemma hold trivially for any $i\in\mathcal{S}\cap\mathcal{R}$, since each such node maintains $\tau_i[k]=0$ for all time. Let us now focus on establishing \eqref{eqn:algo2counterinit} for all regular non-source nodes. To this end, let $\mathcal{C}_0=\mathcal{S}$, and define:
\begin{equation}
    \mathcal{C}_1\triangleq\{i\in\mathcal{V}\setminus\mathcal{C}_0: |\{\bigcup \limits_{\tau=0}^{T-1}\mathcal{N}_i[\tau]\}\cap\mathcal{C}_0|\geq3f+1\}.
\end{equation}
In words, $i\in\mathcal{C}_1$ if and only if node $i$ has at least $3f+1$ neighbors from the source set $\mathcal{C}_0=\mathcal{S}$ over the interval $[0,T)$. Suppose $\mathcal{V}\setminus\mathcal{S} \neq \emptyset$ (else, there is nothing to prove). Then, $\mathcal{C}_1$ must also be non-empty based on the definition of joint strong $(3f+1)$-robustness w.r.t. $\mathcal{S}$. Now consider any $i\in\mathcal{C}_1\cap\mathcal{R}$. Since at most $f$ nodes are adversarial, node $i$ must have heard from at least $2f+1$ regular source nodes (not necessarily all at the same time-step) over the interval $[0,T)$, with each such node reporting a freshness-index of $0$.  Regardless of whether or not node $i$ appends all such nodes to $\mathcal{M}_i$, the fact that it has the opportunity to do so, implies that $\tau_i[T] \neq \omega$. Moreover, given the fact that node $i$ appends a node $l$ to $\mathcal{M}_i[k]$ only if $\tau_l[k] \leq k$, and appealing to \eqref{eqn:algo2tau_bound}, we see that $\tau_i[T]\leq T$. Using \eqref{eqn:algo2tau_bound} again, we have  $\tau_i[(N-|\mathcal{S}|)T]\leq (N-|\mathcal{S}|)T.$
Now define the sets $\mathcal{C}_r, 1\leq r \leq (N-|\mathcal{S}|)$, recursively as follows:
\begin{equation}
    \mathcal{C}_r\triangleq\{i\in\mathcal{V}\setminus\bigcup \limits_{q=0}^{(r-1)}\mathcal{C}_q:|\{\hspace{-2.5mm}\bigcup \limits_{\tau=(r-1)T}^{rT-1}\hspace{-1.5mm}\mathcal{N}_i[\tau]\}\cap\{\bigcup \limits_{q=0}^{(r-1)}\mathcal{C}_q\}|\geq 3f+1\}.
\end{equation}
We can now employ an inductive argument similar to that in Lemma \ref{lemma:counterinit} to establish \eqref{eqn:algo2counterinit}. In particular, proceeding as in the base case when $r=1$, joint strong $(3f+1)$-robustness w.r.t. $\mathcal{S}$ implies that $\mathcal{C}_r\neq \emptyset$, whenever $\mathcal{V}\setminus\bigcup \limits_{q=0}^{(r-1)}\mathcal{C}_q \neq \emptyset$. Since $|\mathcal{A}| \leq f$, any node $i\in\mathcal{C}_r \cap \mathcal{R}$ must then hear from at least $2f+1$ regular nodes in $\bigcup \limits_{q=0}^{(r-1)}\mathcal{C}_q$ over the interval $[(r-1)T, rT)$. Moreover, based on an inductive reasoning, each such node would be a potential candidate for getting appended upon making contact with node $i$. It is easy to then see that for each $i\in\mathcal{C}_r\cap\mathcal{R}$, $\tau_i[rT]\neq \omega$ and, in particular, $\tau_i[rT]\leq rT$ based on  \eqref{eqn:algo2tau_bound}.  This establishes the claim in equation  \eqref{eqn:algo2counterinit}. 

We now turn to establishing the correctness of \eqref{eqn:algo2delaybound}. Let $\bar{T}=(N-|\mathcal{S}|)T.$ We claim $\tau_i[p\bar{T}] \leq \bar{T}, \forall p\in\mathbb{N}_{+}, \forall i\in\mathcal{R}$. Note that the argument for the case when $p=1$ follows from the above discussion, and by using \eqref{eqn:algo2tau_bound}. To prove the claim, it suffices to prove it for the case when $p=2$, since an identical reasoning applies for all $p\geq 2$. For analyzing the case when $p=2$, let $\mathcal{D}_0=\mathcal{S}$, and define the sets $\mathcal{D}_r$, $1\leq r \leq (N-|\mathcal{S}|)$, recursively as follows:
\begin{equation}
    \mathcal{D}_r\triangleq\{i\in\mathcal{V}\setminus\bigcup \limits_{q=0}^{(r-1)}\mathcal{D}_q:|\{\hspace{-5mm}\bigcup \limits_{\tau=\bar{T}+(r-1)T}^{\bar{T}+rT-1}\hspace{-5mm}\mathcal{N}_i[\tau]\}\cap\{\bigcup \limits_{q=0}^{(r-1)}\mathcal{D}_q\}|\geq 3f+1\}.
\end{equation}
Let us first consider the case when $r=1$, and accordingly focus on the interval $[\bar{T}, \bar{T}+T)$. Fix $\tau\in[0,T-1]$, and suppose a regular non-source node $i\in\mathcal{D}_1\cap\mathcal{R}$ hears from $m$ regular source nodes over the interval $[\bar{T},\bar{T}+\tau]$. We then claim that at least $m$ components of $\mathbf{d}_i[\bar{T}+\tau]$ are at most $\tau$. This is immediate when $m=1$ based on \eqref{eqn:algo2d_bound}. In particular, whenever a regular source node transmits to node $i$, given that it always reports a freshness-index of $0$, at least one of the entries of $\mathbf{d}_i$ will take on a value of $0$ at that instant. Now suppose $m=2$, and let node $i$ hear from regular source nodes $v_1$ and $v_2$ at time-steps $\bar{T}+\tau_1$ and $\bar{T}+\tau_2$, respectively, where $0\leq \tau_1 \leq \tau_2 \leq \tau$. Given that node $i$ hears from $v_1$ at $\bar{T}+\tau_1$, based on \eqref{eqn:algo2d_bound}, it must be that at least one component of $\mathbf{d}_i[\bar{T}+\tau_2]$ is at most $\tau_2-\tau_1 \leq \tau_2$. Given this, the fact that node $i$ hears from $v_2$ at $\bar{T}+\tau_2$, and noting that node $i$ appends/retains those $2f+1$ nodes in $\mathcal{M}_i[\bar{T}+\tau_2]$ that have the lowest freshness-indices (see line 16 of Algo. \ref{algo:resilient}), observe that at the end of time-step $\bar{T}+\tau_2$, at least two components of $\mathbf{d}_i[\bar{T}+\tau_2]$ are at most $\tau_2$; from \eqref{eqn:algo2d_bound}, it follows that each of these components in $\mathbf{d}_i[\bar{T}+\tau]$ are at most $\tau$. It is easy to see that the above argument can be generalized for $m\geq 2$. Now based on joint strong $(3f+1)$-robustness w.r.t. $\mathcal{S}$, $\mathcal{D}_1\neq\emptyset$ whenever $\mathcal{V}\setminus\mathcal{D}_0 \neq \emptyset$. Since $|\mathcal{A}|\leq f$, any node $i\in\mathcal{D}_1\cap\mathcal{R}$ is guaranteed to hear from $m=2f+1$ regular source nodes over the interval $[\bar{T},\bar{T}+T-1]$. The above discussion then implies that each component of $\mathbf{d}_i[\bar{T}+T-1]$ is at most $T-1$. It then follows from \eqref{eqn:tau_update} that for each $i\in\mathcal{D}_1\cap\mathcal{R}$, $\tau_i[\bar{T}+T] \leq T$, and hence $\tau_i[2\bar{T}] \leq \bar{T}$ based on \eqref{eqn:algo2tau_bound}. To establish that $\tau_i[2\bar{T}] \leq \bar{T}, \forall i\in\mathcal{R}$, one can proceed via induction to establish that $\tau_i[\bar{T}+rT] \leq rT, \forall i\in\mathcal{D}_r\cap\mathcal{R}$. This can be done using arguments similar to when $r=1$, and hence we omit them. The rest of the proof can be completed as in Lemma \ref{lemma:counterinit}. 
\end{proof}

The next lemma reveals the implication of $\tau_i[k]$ taking on a finite value, based on the rules of Algorithm \ref{algo:resilient}. To state the result, let us define $\Omega^{(r)}[k]\triangleq\{a^r\hat{x}_s[k-r]:s\in\mathcal{S}\cap\mathcal{R}\}$, where $r\in\mathbb{N}$. We will use $Conv(\Omega)$ to denote the convex hull  of a finite set of points $\Omega$. 
\begin{lemma}
Suppose that at any time-step $k\in\mathbb{N}_{+}$, $\tau_i[k]=m$ for some $i\in\{\mathcal{V}\setminus\mathcal{S}\}\cap\mathcal{R}$, where $m\in\mathbb{N}_{+}$. Then, Algorithm  \ref{algo:resilient} implies the following.
\begin{equation}
    \hat{x}_{i}[k] \in Conv\left(\bigcup\limits_{r=1}^{m}\Omega^{(r)}[k]\right).
\label{eqn:inclusion}
\end{equation}
\label{lemma:inclusion}
\end{lemma}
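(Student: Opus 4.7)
The plan is to prove the inclusion \eqref{eqn:inclusion} by strong induction on $m = \tau_i[k]$, exploiting the fact that $\hat{x}_i[k]$ is produced by the trimmed-median filter at time $k-1$ applied to propagated estimates of the $2f+1$ nodes currently indexed in $\mathcal{M}_i[k-1]$. The first bookkeeping step is to establish the invariant
\begin{equation}
d_{i,l}[k-1] \;=\; \tau_l[k-r_l] \,+\, (r_l-1), \qquad r_l \,\triangleq\, k - \phi_{i,l}[k-1],
\nonumber
\end{equation}
for every $l \in \mathcal{M}_i[k-1]$. This follows from the append rules in line 6 and line 15, which together with \eqref{eqn:d_update} imply that $d_{i,l}$ is initialised to $\tau_l$ at every (re-)append and is incremented by one per time-step otherwise. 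Combining this invariant with the $\tau$-update \eqref{eqn:tau_update} and the hypothesis $\tau_i[k] = m$ yields $\tau_l[k-r_l] + r_l \le m$ for every $l \in \mathcal{M}_i[k-1]$, and hence $r_l \in \{1,\ldots,m\}$.

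Next I would invoke the trimmed-median argument. Because at most $f$ of the $2f+1$ indices in $\mathcal{M}_i[k-1]$ are adversarial, at least $f+1$ of the scalars $\bar{x}_{i,l}[k-1]$ come from regular nodes, so the single surviving value $\bar{x}_i[k-1]$ (obtained by discarding the top $f$ and bottom $f$) is sandwiched between the minimum and maximum of the regular contributions. Multiplying by $a$ and using $\bar{x}_{i,l}[k-1] = a^{r_l-1}\hat{x}_l[k-r_l]$ then gives
\begin{equation}
\hat{x}_i[k] \;\in\; \mathrm{Conv}\Bigl( \bigl\{ a^{r_l}\hat{x}_l[k-r_l] : l \in \mathcal{M}_i[k-1]\cap\mathcal{R} \bigr\} \Bigr).
\nonumber
\end{equation}
It then suffices to show that each point $a^{r_l}\hat{x}_l[k-r_l]$ already lies in $\mathrm{Conv}\bigl(\bigcup_{r=1}^m \Omega^{(r)}[k]\bigr)$, after which the outer convex hull collapses into the target set.

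For the inductive close-out, I would split into two cases. If $l \in \mathcal{S}\cap\mathcal{R}$, then $\tau_l \equiv 0$, so $a^{r_l}\hat{x}_l[k-r_l] \in \Omega^{(r_l)}[k]$ with $r_l \le m$. If $l \in (\mathcal{V}\setminus\mathcal{S})\cap\mathcal{R}$, then $\tau_l[k-r_l] = m'$ satisfies $1 \le m' \le m-r_l < m$, so the induction hypothesis applies to $(l,k-r_l)$ and gives $\hat{x}_l[k-r_l] \in \mathrm{Conv}\bigl(\bigcup_{r'=1}^{m'}\Omega^{(r')}[k-r_l]\bigr)$; scaling by $a^{r_l}$ uses the identity $a^{r_l}\,\Omega^{(r')}[k-r_l] = \Omega^{(r_l+r')}[k]$ to rewrite this as $\mathrm{Conv}\bigl(\bigcup_{r''=r_l+1}^{r_l+m'}\Omega^{(r'')}[k]\bigr) \subseteq \mathrm{Conv}\bigl(\bigcup_{r=1}^m\Omega^{(r)}[k]\bigr)$, since $r_l+m' \le m$. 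The base case $m=1$ is absorbed into this argument: the bound $\tau_l[k-r_l]+r_l \le 1$ forces $r_l=1$ and $\tau_l[k-1]=0$, so every regular contributor is a source node and no appeal to the induction hypothesis is needed.

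The main obstacle is the careful threading of the indexing conventions: $\phi_{i,l}$ is overwritten whenever a node already in $\mathcal{M}_i$ reports a fresher value (line 15 of Case 2), while $d_{i,l}$ is sometimes reset to $\tau_l$ and sometimes merely incremented, so one must verify that the invariant relating $d_{i,l}[k-1]$ to $\tau_l[k-r_l]$ persists through both Case 1 and Case 2. A secondary subtlety is justifying that the trimmed-median output of arbitrary adversarial values genuinely lies in the convex hull of the regular inputs; this is standard but must be adapted to the fact that only a single value survives the trimming, rather than a weighted aggregate.
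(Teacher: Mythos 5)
Your proposal is correct and follows essentially the same route as the paper's proof: your invariant $d_{i,l}[k-1]=\tau_l[k-r_l]+(r_l-1)$ is exactly the paper's identity $\tau_l[\phi_{i,l}[k]]+(k-\phi_{i,l}[k])=d_{i,l}[k]$ reparametrized via $r_l=k-\phi_{i,l}[k-1]$, and the strong induction on $m$, the trimmed-filter convex-hull step, and the source/non-source case split all match the paper's argument (which merely treats the base case $m=1$ as a separate explicit step rather than absorbing it). No gaps.
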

\begin{proof}
We will prove the result via induction on $m$. For the base case of induction with $m=1$, suppose that $\tau_i[k]=1$ for some node $i\in\{\mathcal{V}\setminus\mathcal{S}\}\cap\mathcal{R}$ at some time-step $k$. Based on \eqref{eqn:tau_update}, it must then be the case that $\max_{l\in\mathcal{M}_i[k-1]}d_{i,l}[k-1]=0$. Since each entry of $\mathbf{d}_i[k-1]$ is non-negative by definition, we have $d_{i,l}[k-1]=0, \forall l\in\mathcal{M}_i[k-1]$. Since $\mathcal{M}_i[k-1]$ consists of $2f+1$ distinct nodes, and $|\mathcal{A}| \leq f$, at least $f+1$ entries in $\mathcal{M}_i[k-1]$ correspond to regular nodes, each of which are source nodes (all regular non-source nodes have freshness-indices strictly larger than $1$). Thus, node $i$ must have appended  at least $f+1$ regular source nodes at time $k-1$. For each $l\in\mathcal{M}_i[k-1]\cap\{\mathcal{S}\cap\mathcal{R}\}$, observe that $\bar{x}_{i,l}[k-1]=\hat{x}_l[k-1]$. Given the fact that $|\mathcal{A}| \leq f$, it is easy to see that the following holds based on the filtering operation in line 12 of Algo. \ref{algo:resilient}:
\begin{equation}
\bar{x}_i[k-1] \in Conv(\{\bar{x}_{i,l}[k-1]: l\in \mathcal{M}_{i}[k-1]\cap\mathcal{R}\}).
\label{eqn:convhull}
\end{equation}
Based on the above discussion, we then have that $\bar{x}_i[k-1]\in Conv(\Omega^{(0)}[k-1])$, and hence $\hat{x}_i[k]\in Conv(\Omega^{(1)}[k])$, since $\hat{x}_i[k]=a\bar{x}_i[k-1]$. This completes the proof of the base case. 

To generalize the above result, let us first observe the following identity which holds for any $l\in\mathcal{M}_i[k]\cap\mathcal{R}$, and follows directly from the rules of Algo. \ref{algo:resilient}:\footnote{Essentially, \eqref{eqn:balance} suggests that the difference between node $i$'s internal copy of node $l$'s freshness-index at time $k$, namely $d_{i,l}[k]$, and the actual freshness-index $\tau_{l}[\phi_{i,l}[k]]$ of node $l$ when it was last appended by node $i$, is simply the time that has elapsed since node $l$ was last appended by node $i$, namely $(k-\phi_{i,l}[k])$: this observation follows directly from \eqref{eqn:d_update}.}
\begin{equation}
\tau_{l}[\phi_{i,l}[k]]+(k-\phi_{i,l}[k])=d_{i,l}[k].
\label{eqn:balance}
\end{equation}
Now fix an integer $q\geq2$, and suppose the inclusion in \eqref{eqn:inclusion} holds for all $m\in\{1,\ldots,q-1\}$. Suppose $\tau_i[k]=q$ for some $i\in\{\mathcal{V}\setminus\mathcal{S}\}\cap\mathcal{R}$ at some time-step $k$.
Then, based on $\eqref{eqn:tau_update}$, we have $\max_{l\in\mathcal{M}_i[k-1]}d_{i,l}[k-1] = q-1$. Combining this with \eqref{eqn:balance}, we obtain the following for each $l\in\mathcal{M}_i[k-1]\cap\mathcal{R}$: \begin{equation}
  \tau_{l}[\phi_{i,l}[k-1]]+((k-1)-\phi_{i,l}[k-1]) \leq q-1.  
 \label{eqn:balance1}
\end{equation}
Fix $l\in\mathcal{M}_i[k-1]\cap\mathcal{R}$. The above relation can be now exploited to make certain inferences about the estimate of node $l$ at the time-step $\phi_{i,l}[k-1]$ when it was last appended by node $i$. To see this, consider the case when $l$ is a non-source node. Since $((k-1)-\phi_{i,l}[k-1])$ is non-negative, $\tau_{l}[\phi_{i,l}[k-1]] \leq q-1$ based on \eqref{eqn:balance1}. The induction hypothesis thus applies to node $l$, and we have:
\begin{equation}
    \hat{x}_{l}[\phi_{i,l}[k-1]] \in Conv\left(\bigcup\limits_{r=1}^{\tau_{l}[\phi_{i,l}[k-1]]}\Omega^{(r)}[\phi_{i,l}[k-1]]\right).
    \label{eqn:inc1}
\end{equation}
On the other hand, if $l$ is a source node, then clearly $\hat{x}_l[\phi_{i,l}[k-1]] \in Conv (\Omega^{(0)}[\phi_{i,l}[k-1]]).$ Combining the two cases above, and noting that $\bar{x}_{i,l}[k-1]=a^{g_{i,l}[k-1]}\hat{x}_l[\phi_{i,l}[k-1]]$ based on \eqref{eqn:barxil}, where $g_{i,l}[k-1]=((k-1)-\phi_{i,l}[k-1])$, we have:
\begin{equation}
\begin{aligned}
    \bar{x}_{i,l}[k-1] &\in Conv\left(\bigcup\limits_{r=g_{i,l}[k-1]}^{g_{i,l}[k-1]+\tau_{l}[\phi_{i,l}[k-1]]}\Omega^{(r)}[k-1]\right)\\ &\in Conv\left(\bigcup\limits_{r=0}^{q-1}\Omega^{(r)}[k-1]\right),
\end{aligned}
\label{eqn:inc2}
\end{equation}
where the last inclusion follows by noting that $g_{i,l}[k-1] \geq 0$, and that $g_{i,l}[k-1]+\tau_{l}[\phi_{i,l}[k-1]] \leq q-1$ based on \eqref{eqn:balance1}. 
Appealing to $\eqref{eqn:convhull}$, and using \eqref{eqn:inc2}, we conclude that
\begin{equation}
    \bar{x}_i[k-1] \in Conv\left(\bigcup\limits_{r=0}^{q-1}\Omega^{(r)}[k-1]\right).
\end{equation}
Since $\hat{x}_i[k]=a\bar{x}_i[k-1]$, we then have
\begin{equation}
    \hat{x}_i[k] \in Conv\left(\bigcup\limits_{r=1}^{q}\Omega^{(r)}[k]\right),
\end{equation}
which is precisely the desired conclusion.
\end{proof}

We are now in position to prove Theorem \ref{thm:resilient}. 
\begin{proof} (\textbf{Theorem \ref{thm:resilient}}) Let us begin by noting that given any desired convergence rate $\rho \in (0,1)$, each node $i\in\mathcal{S}\cap\mathcal{R}$ can choose its observer gain $l_i$ to guarantee $|e_i[k]| \leq \alpha \rho^k, \forall k\in \mathbb{N}$, based on the observer $\eqref{eqn:algo2luen}$. Here, $e_i[k]=\hat{x}_i[k]-x[k]$, and $\alpha $ is some suitable constant. 

Now consider any $i\in\{\mathcal{V}\setminus\mathcal{S}\}\cap\mathcal{R}$, and suppose $k\geq (N-|\mathcal{S}|)T$. From Lemma \ref{lemma:algo2counterinit}, we know that $\tau_i[k]\neq\omega$, and hence $\tau_i[k]\in\mathbb{N}_{+}$. Based on Lemma   \ref{lemma:inclusion}, we conclude that there exist non-negative weights $w^{(r)}_{is}[k]$, such that $\sum_{s\in\mathcal{S}\cap\mathcal{R}}\sum_{r=1}^{\tau_i[k]}w^{(r)}_{is}[k]=1$, and 
\begin{equation}
    \hat{x}_i[k]=\sum_{s\in\mathcal{S}\cap\mathcal{R}}\sum_{r=1}^{\tau_i[k]}w^{(r)}_{is}[k] a^r\hat{x}_s[k-r].
    \label{eqn:convform}
\end{equation}
Based on the convexity of the weights $w^{(r)}_{is}[k]$, note that ${x}[k]=\sum_{s\in\mathcal{S}\cap\mathcal{R}}\sum_{r=1}^{\tau_i[k]}w^{(r)}_{is}[k] a^r{x}[k-r]$. Using this and \eqref{eqn:convform} yields:
\begin{equation}
    {e}_i[k]=\sum_{s\in\mathcal{S}\cap\mathcal{R}}\sum_{r=1}^{\tau_i[k]}w^{(r)}_{is}[k] a^r{e}_s[k-r].
\end{equation}
Taking norms on both sides of the above equation, we obtain:
\begin{align}
    |e_i[k]| &\leq \alpha \sum_{s\in\mathcal{S}\cap\mathcal{R}}\sum_{r=1}^{\tau_i[k]}w^{(r)}_{is}[k] {\left(\frac{|a|}{\rho}\right)}^{r} \rho^k\\
    &\leq \alpha {\left(\frac{|a|}{\rho}\right)}^{\tau_i[k]} \rho^k\\
    &\leq \alpha {\left(\frac{|a|}{\rho}\right)}^{2(N-|\mathcal{S}|)T} \rho^k.
\end{align}
In the above steps, we have assumed $|a| \geq 1$ (to avoid trivialities) and exploited the convexity of the weights $w^{(r)}_{is}[k]$ to arrive at the second inequality, and used \eqref{eqn:algo2delaybound} to arrive at the final inequality. This concludes the proof. 
\end{proof}
\bibliographystyle{IEEEtran}
\bibliography{refs}
\end{document}